\newtheorem{theorem}{\textbf{Theorem}}
\newtheorem{corollary}{\textbf{Corollary}}
\newtheorem{lemma}{\textbf{Lemma}}
\newtheorem{proposition}{\textbf{Proposition}}
\newtheorem{remark}{\textbf{Remark}}
\begin{document}
\title{Optimization and Analysis of Probabilistic Caching in $N$-tier Heterogeneous Networks}
\author{Kuikui Li, Chenchen Yang, Zhiyong Chen,~\IEEEmembership{Member,~IEEE,} and Meixia Tao,~\IEEEmembership{Senior Member,~IEEE}
\thanks{Manuscript received December 6, 2016; revised April 27, 2017 and September 15, 2017; accepted November 15, 2017. This work is supported by the National Natural Science Foundation of China under Grants 61571299, 61671291, 61329101, 61528103, and 61521062. This work was presented in part at the IEEE SPAWC 2017 \cite{kkl}. The associate editor coordinating the review of this paper and approving it for publication was Prof. Chan-Byoung Chae. (\emph{Corresponding author: Zhiyong Chen.})}
\thanks{The authors are with the Department of Electronic Engineering, Shanghai Jiao Tong University, Shanghai,
200240, P. R. China. Z. Chen is also with the Cooperative Medianet Innovation
Center, Shanghai, China. M. Tao is also with Shanghai Institute for Advanced Communication and Data Science (email: kuikuili@sjtu.edu.cn; zhanchifeixiang@sjtu.edu.cn; zhiyongchen@sjtu.edu.cn; mxtao@sjtu.edu.cn.).}}
\maketitle
\begin{abstract}
In this paper, we study the probabilistic caching for an $N$-tier wireless heterogeneous network (HetNet) using stochastic geometry. A general and tractable expression of the successful delivery probability (SDP) is first derived. We then optimize the caching probabilities for maximizing the SDP in the high signal-to-noise ratio (SNR) regime. The problem is proved to be convex and solved efficiently. We next establish an interesting connection between $N$-tier HetNets and single-tier networks. Unlike the single-tier network where the optimal performance only depends on the cache size, the optimal performance of $N$-tier HetNets depends also on the BS densities. The performance upper bound is, however, determined by an equivalent single-tier network. We further show that with uniform caching probabilities regardless of content popularities, to achieve a target SDP, the BS density of a tier can be reduced by increasing the cache size of the tier when the cache size is larger than a threshold; otherwise the BS density and BS cache size can be increased simultaneously. It is also found analytically that the BS density of a tier is \emph{inverse} to the BS cache size of the same tier and is \emph{linear} to BS cache sizes of other tiers.
\end{abstract}
\begin{IEEEkeywords}Caching, Wireless HetNets, Density-Caching Tradeoff, Power-Caching Tradeoff\end{IEEEkeywords}
\section{Introduction}
The global mobile data traffic is estimated to increase to 30.6 exabytes per month by 2020, an eightfold growth over 2015, and the contribution by video is foreseen to increase from $55\%$ in 2015 to $75\%$ in 2020 \cite{cisco2016global}. To address this mobile data tsunami and hence meet the capacity requirement for the future 5G network\cite{6824752}, an effective and promising candidate solution is to deploy a dense network with heterogeneous base stations (BSs), such as macro BSs, relays, femto BSs and pico BSs \cite{multitier}. The heterogeneous network (HetNet) can provide higher throughput and spectral efficiency. In the meantime, it also faces two challenges. One is the tremendous burden on the backhual link due to the explosive demand for video contents during the peak time and the other is high CAPEX and OPEX due to the denser BSs.

Recently, 
caching popular contents at BSs has been introduced as a promising technique to offload mobile data traffic in cellular networks \cite{5G1,5G2}. Unlike the communication resources, the storage resources are abundant, economical, and sustainable. By exploiting the abundance of the storage resources in wireless networks, significant gains in network capacity through caching can be expected \cite{maddah2014fundamental}, which enables caching to be an essential functionality of emerging wireless networks \cite{ThreeColors}.

The aim of this work is to study how caching can address the aforementioned challenges in a multi-tier HetNet. In specific, we first would like to find out what is the optimal cache placement strategy in order to alleviate the traffic burden in backhaul links to the minimum. Second, we would like to find out if the deployment cost of dense BSs can be traded by BS cache storage, and if so, what are the tradeoffs and what conditions must be met in order for it to happen.
\subsection{Related Work}
Caching has the potential to alleviate the heavy burden on the capacity-limited backhaul link and also improves user-perceived experience \cite{content}. 
Utilizing the tool of stochastic geometry \cite{andrews2011tractable}, the work \cite{bacstug2015cache} formulates the caching problem in a scenario where small BSs are distributed according to a homogeneous Poisson Point Process (HPPP). 
The authors in \cite{tamoor2016caching} consider a two-tier HetNet and derive a closed-form expression of the outage probability by jointly considering spectrum allocation and storage constraints. In \cite{yang2015analysis}, the authors consider a 3-tier HPPP-based HetNet with caching and theoretically elaborate the average ergodic rate, outage probability and delay. Considering an HPPP-based cache-enabled small cell network, a closed form expression of the outage probability and the optimal BS density to achieve a target hit probability are derived in \cite{hassan2015modeling}. The work \cite{7880694} proposes a cluster-centric small cell network and designs cooperative transmission scheme to balance transmit diversity and content diversity. It is worth noting that these works mainly focus on the performance analysis of cache-enabled wireless networks for given caching strategies, such as caching the most popular contents. 

Caching strategy is an important issue for cache-enabled wireless networks.
Previous works on the optimal caching strategy design can be classified into two trends based on whether channel fading and interference are considered. The early trend focuses on the connection topology only while ignoring channel fading and interference. The authors in \cite{femto} formulate a cache placement problem in distributed helper stations to minimize the average download delay with both uncoded and coded caching. It is shown in \cite{femto} that the optimal caching problem in a wireless network with fixed connection topology is an NP-hard problem (without coding).
In \cite{poularakis2014approximation}, a joint routing and caching design problem is studied to maximize the content requests served by small BSs. By reducing the NP-hard optimization problem to a variant of the facility location problem, algorithms with approximation guarantees are established. The second trend takes into account channel fading and interference for caching optimization by mostly utilizing the tool of stochastic geometry.
The work \cite{blaszczyszyn2015optimal} proposes an optimal randomized caching policy to maximize the total hit probability and overviews different coverage models to evaluate the performance. The works \cite{7562510,TWCCaching2016,cui2015analysis} optimize the probabilistic caching strategy to maximize the successful download probability in small cell networks. Further, a closed-form expression for the optimal caching probabilities is obtained in the noise-limited scenario in \cite{TWCCaching2016}. In \cite{song2016optimal}, a greedy algorithm is proposed to find the optimal caching strategy to minimize the average bit error rate. The work \cite{7485844} studies the problem of joint caching, routing, and channel assignment for video delivery over coordinated multicell systems of the future Internet.

Recently, caching strategy optimization is extended to wireless heterogeneous networks. The combination of the optimal caching and the network heterogeneity brings more gains in network capacity. Utilizing the tool of stochastic geometry, the works \cite{7723871,7875124} investigate the optimal probabilistic caching at helper stations while assuming deterministic caching at macro stations to maximize the successful transmission probability in a two-tier HetNet. 
Based on \cite{blaszczyszyn2015optimal}, the work \cite{serbetci2016optimal} considers different types of BSs with different cache capacities. The cache optimization problem for the first type of BSs is solved by assuming that the
placement strategy for other types of BSs is given. The joint probabilistic caching optimization problem for all types of BSs is yet not considered, and little analytical insight on the cache design and
system performance is available. In general, the joint optimization for probabilistic cache placements in different tiers of a HetNet is very challenging due to the different tier association probabilities brought by the content diversity as well as the complicated interference distribution by the nature of network heterogeneity.

Furthermore, 
a tradeoff between the small BS density and total storage size is firstly presented in \cite{bacstug2015cache}, where each small BS caches the most popular contents. Using the optimal caching scheme, \cite{7875124} shows that the helper density can be traded by the cache size to achieve a target area spectral
efficiency. Note that the tradeoff studies in \cite{bacstug2015cache,7875124} are conducted numerically only without theoretical analysis. Deriving and analyzing the tradeoff theoretically has not been solved. In \cite{HowMuchCaching}, the authors address the question that how much caching is needed to achieve the linear capacity scaling in the dense wireless network based on scaling law method.
\subsection{Contributions}
In this work, we first investigate the optimal probabilistic caching to maximize the successful delivery probability (SDP) in a general $N$-tier ($N\ge2$) wireless cache-enabled HetNet. We next establish an interesting connection between $N$-tier HetNets and single-tier networks. We then address the tradeoffs between the BS caching capability and the BS density analytically based on the uniform caching strategy.
The main contributions are summarized as follows:
\begin{itemize}
\item\emph{Analyzing and optimizing the SDP for the $N$-tier HetNet:}
   We derive the tier association probability and the SDP by modeling the BS locations in the HetNet as $N$-tier independent HPPPs. The optimal probabilistic caching problem for maximizing the SDP is then formulated. We prove that this problem is concave in the high signal-to-noise ratio (SNR) regime. The sufficient and necessary conditions for the optimal solution are derived.
\item\emph{Highlighting the connection between $N$-tier HetNets and single-tier networks:} We further study the optimal caching problem in special cases, and find that the maximum SDP of single-tier networks only depends on the cache size while that of $N$-tier HetNets is also determined by the BS densities and transmit powers.
    Moreover, in the high SNR regime, we prove that there exists a single-tier network such that the maximum SDP of the $N$-tier HetNet is upper bounded by that of the single-tier network. When all tiers of BSs have the same cache size, the $N$-tier HetNet performs the same as the single-tier network, regardless of the network heterogeneity.
\item\emph{Presenting insights on the impacts of the key network parameters:} We first show that the optimal performance of single-tier networks is independent of the BS density and transmit power. Then, under uniform caching strategy, we analytically present the impacts of the BS cache size, density and transmit power of each tier on the system performance. Numerical results also verify our analytical results.
\item\emph{Revealing the tradeoffs between the BS density and the BS cache size:} With uniform caching strategy, our analysis reveals that, to maintain a target SDP, the network parameters are related as follows: $1)$ increasing the BS caching capability can reduce the BS density when its cache size is larger than a threshold $Q_{e}$; $2)$ the BS density $\lambda_{i}$ is \emph{\textbf{inversely proportional}} to the cache size $Q_{i}$ in the same tier, i.e., $\lambda_{i}=\frac{K_{1}}{Q_{i}-Q_{e}}$. Here, $i$ denotes the index of the tier and $K_{1}$ is independent of $\lambda_{i}$ and $Q_{i}$. For the different tiers, we prove that the BS density $\lambda_{j}$ is a \emph{\textbf{linear function}} of the cache size $Q_{i}$, i.e., $\lambda_{j}=\frac{K_{3}-K_{4}Q_{i}}{K_{5}}$, for $i\neq j$, where $K_{3}$, $K_{4}$ and $K_{5}$ are independent of $\lambda_{j}$ and $Q_{i}$. Likewise, we reveal the similar tradeoffs between the BS transmit power and the BS cache size.
\end{itemize}

The rest of this paper is organized as follows. Section \ref{systemmodel} presents the system model. The performance metric is analyzed in Section \ref{performancemetric}. In Section \ref{problem}, we formulate and solve the optimal caching problem. Then, the impacts and tradeoffs of the network parameters are shown in section \ref{tradeoffimpacts}. The numerical and simulation results are presented in Section \ref{simulationresults}, and the conclusions are drawn in Section \ref{conclusion}.
\section{System Model} \label{systemmodel}
\subsection{Network and Caching Model}


\begin{figure}[t]
\centering
\includegraphics[width=3.7in, height=1.8in]{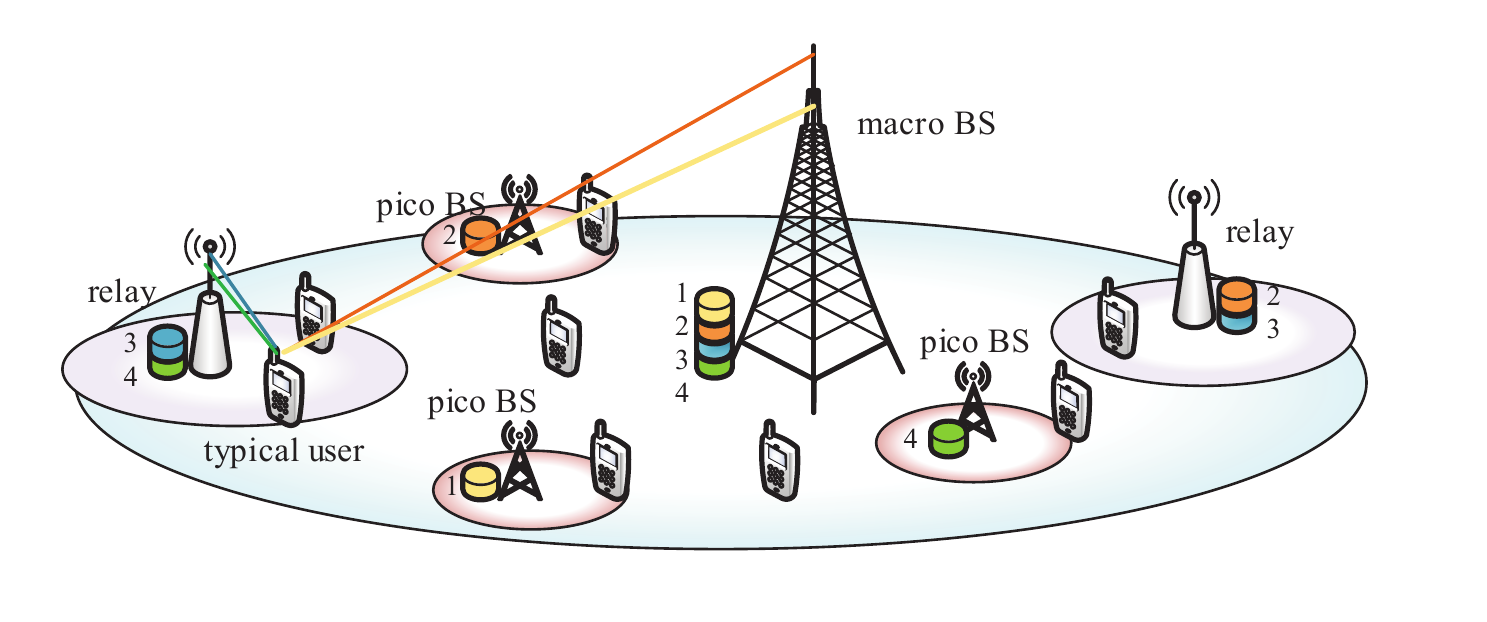}
\vspace{-7mm}
\caption{An example of a 3-tier wireless cache-enabled HetNet: A typical user can obtain contents \{3, 4\} from relay and contents \{1, 2\} from macro BS.}
\vspace{-4.5mm}
\label{SigModel}
\end{figure}
We consider a general wireless cache-enabled HetNet consisting of $N$ tiers of BSs, where the BSs in different tiers are distinguished by their transmit powers, spatial densities, biasing factors, and cache sizes\footnote{These model assumptions indicate that BSs in different tiers have different traffic load to handle, and also reflect the demand heterogeneity in different BSs.}. The locations of BSs in each tier are spatially distributed according to an independent HPPP, denoted as $\Phi_i$ with density $\lambda_i$ for $i\in\mathcal{N}\triangleq\{1, 2,\cdots, N\}$. A three-tier HetNet including macro BSs, relays and pico BSs, is illustrated in Fig. \ref{SigModel}. Consider the downlink transmission. Time is divided into discrete slots with equal duration and we study one slot of the system. For the wireless channel, both large-scale fading and small-scale fading are considered. The large-scale fading is modeled by a standard distance-dependent path loss attenuation with path loss exponent $\beta$. The Rayleigh fading channel $h$ is considered as the small-scale fading, i.e., $h\sim \mathcal{CN}(0,1)$. Each user receiver experiences an additive noise that obeys zero-mean complex Gaussian distribution with variance $\sigma^2$.

Consider a database consisting of $M$ contents denoted by $\mathcal{M}\triangleq\{1,2,\cdots,M\}$, and all the contents are assumed to have equal length\footnote{Note that the extension to the general case where contents are of different lengths is quite straightforward since the contents can be divided into chunks with equal size.}. Each user only requests one single content at each time slot. The content popularity distribution is identical among all users, represented by $\mathbf{T}\triangleq(t_{i})_{i\in\mathcal{M}}$, where each user requests the $i$-th content with probability $t_{i}$ and $\sum_{i=1}^{M}{t_{i}}=1$. The content popularity $\mathbf{T}$ is assumed to be known a prior for cache placement. Without loss of generality, we assume $t_{1}\ge{t_{2}}\ge\dots\ge{t_{M}}$. Each BS is equipped with a cache storage. The cache capacities of $N$-tier BSs are denoted as $\mathcal{Q}=\{Q_1,Q_2,\cdots,Q_N\}$, where each BS in the $i$-th tier can store at most $Q_i$ ($Q_i<M$) contents.

When a user submits a content request, the content will be delivered directly from the local cache of a BS that has cached it. If the content is not cached in any BS, it will be downloaded from the core network through backhaul links. Since the main purpose of this paper is to optimize the caching strategy to offload the backhaul traffic, we only consider the transmission of the cached contents at BSs, same as \cite{7562510}.

We adopt the probabilistic caching strategy and assume all the BSs in a same tier use the same caching probabilities. Each BS caches contents with the given probabilities independently of other BSs. Define $\mathbf{P}\triangleq[p_{ij}]_{N\times{M}}$ as the caching probability matrix where $p_{ij}$ denotes the probability that the BSs in the $i$-th tier caches the $j$-th content. It must satisfy
\begin{align}
&0\le{p_{ij}}\le{1},~~\forall{i}\in{\mathcal{N}}, \forall{j}\in{\mathcal{M}} \label{condition1}\\
&\sum_{j=1}^{M}{p_{ij}}\le{Q_i},~~\forall{i}\in{\mathcal{N}}.         \label{condition2}
\end{align}

Note that the conditions (\ref{condition1}) and (\ref{condition2}) are sufficient and necessary for the existence of a random content placement policy requiring no more than $Q_i$ slots of storage at each BS in tier $i$ for $\forall{i}\in{\mathcal{N}}$ \cite{blaszczyszyn2015optimal}. Also note that if a BS realizes the caching strategy by caching each file at random with the given probability but independently of other files, the actual cache memory in the BS can be exceeded or wasted. To strictly meet the instantaneous cache size constraint (\ref{condition2}) at each BS, a novel content placement approach is proposed in Section II-C of \cite{blaszczyszyn2015optimal}\footnote{Note that the file ordering in this content placement approach can be varied arbitrarily at each particular realization if different file combinations are desired.}. This approach brings dependency among the caching events of different files. But such cache dependency is irrelevant to the analysis in this work.
\subsection{Probability of Tier Association}
Without loss of generality, we carry out our analysis for a typical user, denoted as $u_0$, located at the origin as in \cite{andrews2011tractable}. In the cache-enable HetNet, the user association policy does not only depend on the received signal strength but also the requested and cached contents. Specifically, when $u_o$ requests content $j$, it is associated with the strongest BS among those that have cached content $j$ from all the tiers based on the average received signal power.
Denote the distance between $u_0$ and the nearest BS caching content $j$ in the $i$-th tier by $r_{i|j}$. According to our tier association policy, the index of the tier that $u_0$ is associated with for content $j$ is:
\begin{equation}
i(j)=\arg{\max_{l\in{\mathcal{N}}}\{B_{l}S_{l}r_{l|j}^{-\beta}\}},
\end{equation}
where $B_l$ and $S_l$ are the association bias factor and the transmit power of BSs in the $l$-th tier, respectively. For notation simplicity, we assume $B_l=1$, $\forall l\in\mathcal{N}$, in the rest of the paper.

It is essential to determine each tier's association probability when a user requests a content. Since each BS caches contents independently of other BSs, the locations of the BSs caching content $j$ in the $i$-th tier can be modeled as a thinned HPPP $\Phi_{ij}$ with density $\lambda_i{}p_{ij}$\cite{Martain}{\footnote{Note that application of the thinning property of HPPP is based on the assumption that each BS caches contents independently of other BSs, hence it is not affected by the realization method of \cite{blaszczyszyn2015optimal}.}}. Then, we have the following lemma.
\begin{lemma} \label{lemma1}
The probability of $u_0$ associated with the $i$-th tier for content $j$ is given by
\begin{align}
\mathcal{W}_{i|j}&=\frac{\lambda_{i}p_{ij}{S_i}^{\frac{2}{\beta}}}{\sum_{l=1}^{N}{\lambda_{l}p_{lj}{S_l}^{\frac{2}{\beta}}}}.  \label{Wij2}     \end{align}
\end{lemma}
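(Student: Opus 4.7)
The plan is to follow the standard stochastic-geometry association argument, adapted to the cache-thinned processes $\Phi_{ij}$ of density $\lambda_i p_{ij}$ that the excerpt has already identified. First I would compute the distribution of the nearest-BS distance $r_{i|j}$: since $\Phi_{ij}$ is a 2D HPPP of density $\lambda_i p_{ij}$, the void probability for a disk of radius $r$ gives
\begin{equation*}
\Pr\{r_{i|j} > r\} = \exp\!\bigl(-\pi \lambda_i p_{ij} r^2\bigr),
\end{equation*}
so that $r_{i|j}$ has density $f_{r_{i|j}}(r) = 2\pi \lambda_i p_{ij} r \exp(-\pi \lambda_i p_{ij} r^2)$ for $r\ge 0$.

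Next I would translate the association rule $i(j) = \arg\max_{l} S_l r_{l|j}^{-\beta}$ into a joint condition on the distances. Conditioning on $r_{i|j} = r$, the event $\{i(j)=i\}$ occurs iff $S_l r_{l|j}^{-\beta} \le S_i r^{-\beta}$ for every $l\neq i$, equivalently $r_{l|j} \ge (S_l/S_i)^{1/\beta} r$. Because the $N$ tier processes are independent HPPPs and so, by independent thinning, the cache-filtered processes $\{\Phi_{lj}\}_{l=1}^N$ are mutually independent, the void probabilities multiply:
\begin{equation*}
\Pr\!\Bigl\{\bigcap_{l\neq i}\{r_{l|j}\ge (S_l/S_i)^{1/\beta} r\}\Bigr\} = \prod_{l\neq i}\exp\!\Bigl(-\pi \lambda_l p_{lj}(S_l/S_i)^{2/\beta} r^2\Bigr).
\end{equation*}

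Then I would deconditon over $r_{i|j}$ and integrate:
\begin{equation*}
\mathcal{W}_{i|j} = \int_0^\infty 2\pi \lambda_i p_{ij}\, r\, \exp\!\Bigl(-\pi r^2 \sum_{l=1}^{N} \lambda_l p_{lj}(S_l/S_i)^{2/\beta}\Bigr)\,dr.
\end{equation*}
With the substitution $u = \pi r^2 \sum_l \lambda_l p_{lj}(S_l/S_i)^{2/\beta}$, this Gaussian-type integral collapses to $\lambda_i p_{ij}/\sum_l \lambda_l p_{lj}(S_l/S_i)^{2/\beta}$, and pulling out the $S_i^{2/\beta}$ factor from each denominator term yields exactly \eqref{Wij2}.

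There is no serious obstacle here; the only subtlety is justifying that independent thinning preserves independence across tiers (so that the void probabilities factor), which is legitimate because the original $\Phi_l$ are independent HPPPs and the thinning is performed by independent Bernoulli marks on each BS. The footnote after the statement of Lemma 1 notes that the correlated realization of \cite{blaszczyszyn2015optimal} does not affect the marginal thinning density, so the argument above remains valid even under that dependent placement scheme.
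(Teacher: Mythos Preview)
Your proposal is correct and follows essentially the same approach as the paper's proof: compute the void-probability--based distribution of $r_{i|j}$, condition on $r_{i|j}=r$, use independence of the thinned tier processes to factor the event $\{r_{l|j}\ge (S_l/S_i)^{1/\beta}r\}$ over $l\neq i$, and evaluate the resulting Gaussian-type integral. The only cosmetic difference is that you make the final substitution explicit, whereas the paper simply writes out the integral and states the closed form.
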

\begin{proof}
Please refer to Appendix \ref{1}.
\end{proof}
This lemma states that the association probability $\mathcal{W}_{i|j}$ is determined directly by the density $\lambda_i p_{ij}$ and the transmit power $S_i$ of the thinned HPPP $\Phi_{ij}$.
\section{Performance Analysis} \label{performancemetric}
In this section, we analyze the SDP for a given probabilistic caching scheme $\{p_{ij}\}_{i\in\mathcal{N},j\in\mathcal{M}}$. 
Consider that all the BSs operate in the fully loaded state and share the common bandwidth \cite{andrews2011tractable}. By using an orthogonal multiple access strategy within a cell, the intra-cell interference is thus not considered here and only the interference introduced by inter-tier cells and intra-tier other cells is incorporated into analysis. Given that $u_0$ sends a request for content $j$ and is associated with the $i$-th tier, then the received instantaneous signal-to-interference-plus-noise ratio (SINR) of $u_0$ is given by
\begin{equation}
\text{SINR}_{i|j}=\frac{S_{i}|h_{i,o}|^2R_{i|j}^{-\beta}}{\sigma^2+\sum_{l=1}^{N}\sum_{k\in{\Phi_{l}}{\backslash}\{n_{io}\}}S_{l}|h_{l,k}|^2d_{l,k}^{-\beta}}, \label{SINR}
\end{equation}
where $R_{i|j}$ is the distance from $u_0$ to its serving BS $n_{io}$ in the $i$-tier tier, $d_{l,k}$ denotes the distance between $u_0$ and the $k$-th interfering BS in the $l$-th tier, $|h_{i,o}|^2(|h_{l,k}|^2)\sim \exp(1)$ is the small-scale fading channel gain between $u_0$ and the serving BS (the $k$-th interfering BS). The delivery of content $j$ from tier $i$ is successful when the received SINR of $u_0$ is larger than a
threshold $\tau$. Thus, the SDP of content $j$ from tier $i$ can be expressed as\footnote{Note that the SDP $\mathcal{C}_{i|j}$ in a cache-enabled network depends on both the average received signal strength and the caching distribution, which is different from the traditional coverage probability where $u_{o}$ is always associated with the strongest BS.}
\begin{equation}
\mathcal{C}_{i|j}\triangleq\mathbb{E}_{R_{i|j}}\left[\mathbb{P}\left[\text{SINR}_{i|j}>\tau|R_{i|j}\right]\right]. \label{coverageij}
\end{equation}
Recall that the locations of BSs caching content $j$ in the $i$-tier can be modeled as a thinned HPPP, then the probability density function (PDF) of $R_{i|j}$ is given below.\footnote{$\mathcal{W}_{i|j}$ and $f_{R_{i|j}}$ can also be obtained by applying the thinned HPPP $\Phi_{ij}$ to Lemma $1$ and Lemma $3$ of \cite{SINR}, respectively.}
\begin{lemma} \label{lemma2}
The PDF of $R_{i|j}$ is
\begin{equation}
f_{R_{i|j}}(r)=\frac{2\pi{p_{ij}}\lambda_{i}}{\mathcal{W}_{i|j}}e^{-\pi\sum_{l=1}^{N}{p_{lj}}\lambda_{l}\left(\frac{S_{l}}{S_{i}}\right)^{\frac{2}{\beta}}r^{2}}r.\label{fR}
\end{equation}
\end{lemma}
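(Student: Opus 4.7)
The plan is to compute the conditional PDF of $R_{i|j}$ by first writing down the joint distribution of ``nearest caching-BS distance in tier $i$ equals $r$'' and ``tier $i$ wins the biased-distance competition across all tiers,'' and then normalizing by the tier-association probability $\mathcal{W}_{i|j}$ established in Lemma 1.

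First, I would invoke the thinning property already used in the paper: conditional on content $j$ being requested, the BSs in tier $l$ that have cached content $j$ form an independent HPPP $\Phi_{lj}$ of intensity $\lambda_l p_{lj}$ for each $l\in\mathcal{N}$. From the standard nearest-point distribution of a 2D HPPP, the distance $r_{l|j}$ from the origin to the nearest point of $\Phi_{lj}$ has density $2\pi\lambda_l p_{lj}\, r\, e^{-\pi\lambda_l p_{lj} r^2}$ and complementary CDF $\mathbb{P}[r_{l|j}>x]=e^{-\pi\lambda_l p_{lj} x^2}$.

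Next, I would express the event defining $R_{i|j}$. Association with tier $i$ for content $j$ means $S_i r_{i|j}^{-\beta} > S_l r_{l|j}^{-\beta}$ for every $l\neq i$, i.e., $r_{l|j} > (S_l/S_i)^{1/\beta} r_{i|j}$. Conditioning on $r_{i|j}=r$ and using independence of the processes $\{\Phi_{lj}\}_{l\neq i}$,
\begin{equation*}
\mathbb{P}\bigl[\text{association with }i,\,r_{i|j}\in(r,r+dr)\bigr]=2\pi\lambda_i p_{ij}\,r\,e^{-\pi\lambda_i p_{ij} r^2}\prod_{l\neq i}e^{-\pi\lambda_l p_{lj}(S_l/S_i)^{2/\beta}r^2}\,dr.
\end{equation*}
Recognizing that the $l=i$ factor coincides with $e^{-\pi\lambda_i p_{ij}(S_i/S_i)^{2/\beta}r^2}$, the product collapses into a single exponential with the sum over all $l\in\mathcal{N}$ in the exponent.

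Finally, the desired conditional PDF is the above joint density divided by the marginal probability of associating with tier $i$, namely $\mathcal{W}_{i|j}$ from Lemma 1. This yields
\begin{equation*}
f_{R_{i|j}}(r)=\frac{2\pi\lambda_i p_{ij}}{\mathcal{W}_{i|j}}\,r\,\exp\!\Bigl(-\pi\sum_{l=1}^{N}\lambda_l p_{lj}(S_l/S_i)^{2/\beta}r^2\Bigr),
\end{equation*}
which is exactly the claimed expression. I do not foresee a real obstacle; the only subtle point is remembering to fold the $l=i$ term into the product before summing, and to keep the biasing factor $(S_l/S_i)^{2/\beta}$ aligned with the definition of the tier-association rule. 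The independence of the $\Phi_{lj}$'s across tiers is what makes the product of complementary CDFs legitimate, and the thinning step guarantees that restricting to content-$j$ carriers does not disturb the Poisson structure needed for both the nearest-point density and the void probabilities.
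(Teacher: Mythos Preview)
Your proposal is correct and follows essentially the same approach as the paper: both arguments compute the joint density of $\{r_{i|j}=r,\ \text{tier }i\text{ wins}\}$ by combining the nearest-point density of the thinned HPPP $\Phi_{ij}$ with the void probabilities of the other tiers, then normalize by $\mathcal{W}_{i|j}$. The only cosmetic difference is that the paper first derives the conditional tail $\mathbb{P}[R_{i|j}>r]$ as an integral and then differentiates, whereas you go directly to the conditional density; the ingredients and logic are identical.
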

\begin{proof}
Please refer to Appendix \ref{Lemma2}.
\end{proof}
Note that different from the conventional network without caching where each user is associated with the strongest BS and there only exists one type of interfering BSs, in the multi-tier cache-enabled HetNet considered in this work, the interferences to $u_o$ when it is associated with the $i$-th tier for content $j$ can be divided into two groups. The first group of interferences come from all the BSs (except the serving BS $n_{io}$) in each tier that have stored content $j$, the locations of which can be modeled as a thinned HPPP, denoted as $\Phi_{\l j}$ with density $p_{lj}\lambda_l$ for $l\in \mathcal{N}$. The second group comes from all the BSs in each tier that do not cache content $j$, the locations of which can also be modeled as a thinned HPPP, denoted as $\Phi_{l j^-}$ with density $(1-p_{lj})\lambda_l$ for $l\in \mathcal{N}$. For the first interference group, the distance from $u_o$ to each interfering BS in $\Phi_{l j}$ is at least $\left(\frac{S_{l}}{S_{i}}\right)^{\frac{1}{\beta}}$ times of the distance from $u_o$ to its serving BS $n_{io}$ according to (\ref{Wij}) caused by our association policy. For the second group, the interfering BSs could be very close to $u_o$. By carefully handling these two groups of interferences in $N$-tier HetNets, we derive an analytical expression for $\mathcal{C}_{i|j}$ in the following proposition.
\begin{proposition} \label{proposition1}
The SDP $\mathcal{C}_{i|j}$ is
\begin{align}
&\mathcal{C}_{i|j}\!=\!\frac{2\pi{p_{ij}}\lambda_{i}}{\mathcal{W}_{i|j}}\!\!\int_{0}^{\infty}\!\!r\exp\left(\!-\frac{\tau{r}^{\beta}\sigma^2}{S_{i}}\!\right)\nonumber\\
&\exp\!\left(\!\!-\pi\!\sum_{l=1}^{N}\!\lambda_{l}\!\left(\!\frac{S_{l}}{S_{i}}\!\right)^{\frac{2}{\beta}}\!\!\!\left(p_{lj}H(\tau,\beta){\setlength\arraycolsep{0.5pt}+}(\!1{\setlength\arraycolsep{0.5pt}-}p_{lj}\!)D(\tau,\beta)
{\setlength\arraycolsep{0.5pt}+}\!p_{lj}\!\right)\!r^{2}\!\right)\!\mathrm{d}{r},   \label{Cij}
\end{align}
where
$H(\tau,\beta)\triangleq\frac{2\tau}{\beta-2}{}_2F_{1}(1,1-\frac{2}{\beta},2-\frac{2}{\beta},-\tau)$, and $D(\tau,\beta)\triangleq\frac{2}{\beta}\tau^{\frac{2}{\beta}}B(\frac{2}{\beta},1-\frac{2}{\beta})$. Furthermore,
${}_2F_{\cdot}(\centerdot)$ denotes the Gauss hypergeometric function, and $B(x,y)$ is the Beta function defined as $\int_{0}^{1}\mu^{x-1}(1-\mu)^{y-1}d{\mu}$.
\end{proposition}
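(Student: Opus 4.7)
The plan is to condition on the serving distance $R_{i|j}=r$, exploit the Rayleigh fading to turn the SINR threshold event into an exponential moment, and then separately handle the two spatially distinct interference fields before integrating against the density from Lemma~\ref{lemma2}. Specifically, since $|h_{i,o}|^{2}\sim\exp(1)$,
\begin{equation*}
\mathbb{P}\!\left[\text{SINR}_{i|j}>\tau\,\middle|\,R_{i|j}=r\right]=\exp\!\left(-\tfrac{\tau r^{\beta}\sigma^{2}}{S_{i}}\right)\cdot\mathcal{L}_{I}\!\left(\tfrac{\tau r^{\beta}}{S_{i}}\right),
\end{equation*}
where $\mathcal{L}_{I}(s)$ is the Laplace transform of the aggregate interference evaluated at $s=\tau r^{\beta}/S_i$. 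The noise contributes the factor $\exp(-\tau r^{\beta}\sigma^{2}/S_{i})$ that already appears in \eqref{Cij}, so the remaining task is the Laplace transform and the expectation over $r$.

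Next I would split the interference into the two thinned HPPPs identified right before the proposition: $\Phi_{lj}$ (BSs in tier $l$ that cached content $j$, density $p_{lj}\lambda_{l}$) and $\Phi_{lj^{-}}$ (BSs that did not, density $(1-p_{lj})\lambda_{l}$). Applying the PGFL of an HPPP tier by tier gives
\begin{equation*}
\mathcal{L}_{I}(s)=\prod_{l=1}^{N}\exp\!\left(-2\pi p_{lj}\lambda_{l}\!\!\int_{(S_{l}/S_{i})^{1/\beta}r}^{\infty}\!\!\!\tfrac{x\,\mathrm{d}x}{1+(sS_{l})^{-1}x^{\beta}}\right)\exp\!\left(-2\pi(1-p_{lj})\lambda_{l}\!\int_{0}^{\infty}\!\!\tfrac{x\,\mathrm{d}x}{1+(sS_{l})^{-1}x^{\beta}}\right).
\end{equation*}
The lower limit in the first integral is precisely the minimum-distance constraint enforced by the tier-association rule (nothing in $\Phi_{lj}$ can be closer than $(S_l/S_i)^{1/\beta}r$, or else $u_0$ would have been associated with it). The substitution $u=x^{2}(S_{i}/S_{l})^{2/\beta}/r^{2}$ reduces each integral to a standard form: the first evaluates, via the integral representation of the Gauss hypergeometric function, to $\pi\lambda_{l}(S_{l}/S_{i})^{2/\beta}r^{2}\cdot H(\tau,\beta)$, while the second yields $\pi\lambda_{l}(S_{l}/S_{i})^{2/\beta}r^{2}\cdot D(\tau,\beta)$ through the Beta-function identity.

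Finally, taking the expectation with the PDF \eqref{fR} from Lemma~\ref{lemma2} multiplies the integrand by an extra $\exp\!\left(-\pi\sum_{l}p_{lj}\lambda_{l}(S_{l}/S_{i})^{2/\beta}r^{2}\right)$, which is exactly the origin of the additional ``$+p_{lj}$'' term inside the parenthesis of \eqref{Cij}. Collecting the three exponentials and the prefactor $2\pi p_{ij}\lambda_{i}/\mathcal{W}_{i|j}$ gives the claimed formula.

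I expect the main obstacle to be bookkeeping rather than any deep step: in particular, being careful that the minimum-distance constraint in the PGFL is applied only to $\Phi_{lj}\setminus\{n_{io}\}$ (so the serving BS is excluded without affecting the PPP law by Slivnyak) and that the three distinct $r^{2}$-coefficients, namely $p_{lj}H(\tau,\beta)$, $(1-p_{lj})D(\tau,\beta)$, and the $p_{lj}$ coming from $f_{R_{i|j}}$, are kept separate rather than combined prematurely, since simplifying them would obscure the caching dependence that the later optimization sections exploit.
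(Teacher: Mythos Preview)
Your proposal is correct and mirrors the paper's own proof essentially step for step: condition on $R_{i|j}=r$, use $|h_{i,o}|^2\sim\exp(1)$ to reduce to a Laplace transform, split the interference into the two thinned HPPPs $\Phi_{lj}$ and $\Phi_{lj^-}$, apply the PGFL with the association-induced exclusion radius $(S_l/S_i)^{1/\beta}r$ for the first and over all of $\mathbb{R}^2$ for the second to obtain $H(\tau,\beta)$ and $D(\tau,\beta)$ respectively, and then integrate against $f_{R_{i|j}}$ from Lemma~\ref{lemma2} to pick up the extra $+p_{lj}$ term. The only cosmetic difference is the choice of substitution variable in evaluating the radial integrals.
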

\begin{proof}
Please refer to Appendix \ref{2}.
\end{proof}
By the law of total probability, the average SDP for $u_0$ is given by
\begin{equation}
\mathcal{C}\triangleq\sum_{j=1}^{M}\sum_{i=1}^{N}t_{j}\mathcal{W}_{i|j}\mathcal{C}_{i|j}. \label{C}
\end{equation}
Substituting (\ref{Cij}) and (\ref{Wij2}) into (\ref{C}), we obtain a tractable expression of $\mathcal{C}$ as follows
\begin{align}
&\mathcal{C}=\sum_{j=1}^{M}\sum_{i=1}^{N}2\pi{p_{ij}t_{j}}\lambda_{i}\int_{0}^{\infty}r\exp\left(-\frac{\tau{r}^{\beta}\sigma^2}{S_{i}}\right)\nonumber\\
&\exp\!\left(\!\!-\pi\!\sum_{l=1}^{N}\!\lambda_{l}\!\left(\!\frac{S_{l}}{S_{i}}\!\right)^{\frac{2}{\beta}}\!\!\!\left(p_{lj}H(\tau,\beta){\setlength\arraycolsep{0.5pt}+}(\!1{\setlength\arraycolsep{0.5pt}-}\!p_{lj})D(\tau,\beta)
{\setlength\arraycolsep{0.5pt}+}\!p_{lj}\!\right)\!r^{2}\!\right)\!\mathrm{d}{r}. \label{CCCC}
\end{align}
In the interference-limited scenario, where the noise power is very small compared with the interference power and
hence can be neglected, the expression (\ref{CCCC}) can be simplified.
\begin{corollary}
In the interference-limited scenario, i.e., $\sigma^2\to{0}$, the SDP $\mathcal{C}$ can be simplified as
\begin{align}
\mathcal{C^{'}}=\sum_{j=1}^{M}\frac{\sum_{i=1}^{N}\lambda_{i}S_{i}^{\frac{2}{\beta}}p_{ij}t_{j}}{\sum_{l=1}^{N}\lambda_{l}{S_{l}}^{\frac{2}{\beta}}\left[T(\tau,\beta)p_{lj}+D(\tau,\beta)\right]}, \label{C'2}
\end{align}
where $T(\tau,\beta)\triangleq{}H(\tau,\beta)-D(\tau,\beta)+1$.
\end{corollary}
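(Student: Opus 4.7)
The plan is to start directly from the general SDP expression \eqref{CCCC} and show that letting $\sigma^2 \to 0$ collapses the outer integral to a closed form, after which simple algebraic regrouping yields \eqref{C'2}. No stochastic geometry arguments need to be redone — everything is reduced to calculus and algebra.

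First I would note that, taking $\sigma^2 \to 0$, the factor $\exp(-\tau r^\beta \sigma^2 / S_i)$ in \eqref{CCCC} tends to $1$ uniformly on compact sets and the remaining Gaussian-type exponential provides enough decay to invoke dominated convergence, so the limit commutes with the integral. Hence the $r$-integral becomes of the form $\int_0^\infty r \exp(-A_{ij}\, r^2)\,\mathrm{d}r = \tfrac{1}{2 A_{ij}}$, where
\begin{equation*}
A_{ij} = \pi \sum_{l=1}^{N} \lambda_l \left(\frac{S_l}{S_i}\right)^{\frac{2}{\beta}}\Bigl(p_{lj} H(\tau,\beta) + (1-p_{lj}) D(\tau,\beta) + p_{lj}\Bigr).
\end{equation*}

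Next I would regroup the coefficient of $p_{lj}$ inside $A_{ij}$: the bracket is $p_{lj}\bigl[H(\tau,\beta) - D(\tau,\beta) + 1\bigr] + D(\tau,\beta) = p_{lj} T(\tau,\beta) + D(\tau,\beta)$, which is exactly the quantity appearing in \eqref{C'2}. Substituting this identity and the integral value into the limit of \eqref{CCCC}, the outer $2\pi p_{ij} t_j \lambda_i$ combines with the $1/(2 A_{ij})$ to give a factor of $p_{ij} t_j \lambda_i$ divided by $\sum_l \lambda_l (S_l/S_i)^{2/\beta}[T p_{lj} + D]$. Multiplying numerator and denominator by $S_i^{2/\beta}$ clears the ratio and makes the denominator independent of $i$.

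Finally, since the denominator no longer depends on the summation index $i$, I would pull the $\sum_{i=1}^{N}$ inside the fraction at the numerator and then swap the order of the two sums to obtain \eqref{C'2}. The only subtle point is justifying the interchange of limit and integral in the first step, but this is standard and causes no real obstacle; the rest is routine algebra, so I do not anticipate a hard step.
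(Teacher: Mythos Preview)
Your proposal is correct and follows essentially the same approach as the paper, which simply states ``Substituting $\sigma^2=0$ into \eqref{CCCC}, we then obtain \eqref{C'2}.'' You have merely made explicit the Gaussian integral evaluation and the algebraic regrouping that the paper leaves implicit.
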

\begin{proof}
Substituting $\sigma^2=0$ to (\ref{CCCC}), we then obtain (\ref{C'2}).
\end{proof}
Equation (\ref{CCCC}) and (\ref{C'2}) show a tractable expression and a closed-form expression for the SDP in the general regime and high SNR regime, respectively. The performance metric depends on four main factors: the number of tiers $N$, the caching probabilities $\{p_{ij}\}_{i\in\mathcal{N},j\in\mathcal{M}}$, the BS densities $\{\lambda_i\}_{i\in\mathcal{N}}$ and transmit powers $\{S_i\}_{i\in\mathcal{N}}$. In the rest of this paper, we shall focus on the interference-limited regime with high SNR.
\section{Caching Optimization and Analysis} \label{problem}
In this section, we formulate and solve the optimal caching problem for maximizing the SDP in the high SNR regime. Further, by considering the optimal caching problem in special cases, we establish an interesting connection between $N$-tier HetNets and single-tier networks.
\subsection{Caching Optimization for General Case}
The optimal caching problem of maximizing the SDP is formulated as
\begin{align}
\mathrm{P1}:\quad\max_{\mathbf{P}}&\quad \mathcal{C^{'}}(\mathbf{P}) \nonumber\\
\mathnormal{s.t.}&\quad (\ref{condition1}),(\ref{condition2}) \nonumber
\end{align}
\begin{proposition} \label{proposition2}
Problem $\mathrm{P1}$ is a concave optimization problem.
\end{proposition}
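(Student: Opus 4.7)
My plan splits the verification into two pieces: (i) convexity of the feasible set defined by (\ref{condition1}) and (\ref{condition2}), which is immediate because both constraints are linear in $\mathbf{P}$; and (ii) concavity of the objective $\mathcal{C}'(\mathbf{P})$ in (\ref{C'2}). The real work lies in (ii).

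For (ii), the first key observation is that the outer sum over $j$ in (\ref{C'2}) decouples: the $j$-th summand involves only the column $(p_{1j},\dots,p_{Nj})^{\top}$ of $\mathbf{P}$. Setting $a_{l}\triangleq\lambda_{l}S_{l}^{2/\beta}$, $A\triangleq\sum_{l=1}^{N}a_{l}$, and $u_{j}(\mathbf{P})\triangleq\sum_{i=1}^{N}a_{i}p_{ij}$, I would exploit the pivotal rearrangement
\begin{equation}
\sum_{l=1}^{N}a_{l}\bigl[T(\tau,\beta)p_{lj}+D(\tau,\beta)\bigr]=T(\tau,\beta)\,u_{j}(\mathbf{P})+D(\tau,\beta)\,A,
\end{equation}
which exhibits the denominator as an affine function of the single linear form $u_{j}$. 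Combined with the numerator $t_{j}u_{j}$, each summand becomes $t_{j}\,g\bigl(u_{j}(\mathbf{P})\bigr)$ with the scalar map $g(u)\triangleq u/(Tu+AD)$, so the multivariate concavity question reduces to the univariate concavity of $g$.

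Next I would verify that $g$ is concave on $u\geq 0$, which boils down to showing $T(\tau,\beta)>0$ (the positivity of $AD$ being immediate). Using the integral representations underlying Appendix \ref{2}, $H(\tau,\beta)=\tau^{2/\beta}\!\int_{\tau^{-2/\beta}}^{\infty}\!dy/(1+y^{\beta/2})$ and $D(\tau,\beta)=\tau^{2/\beta}\!\int_{0}^{\infty}\!dy/(1+y^{\beta/2})$, so $D-H=\tau^{2/\beta}\!\int_{0}^{\tau^{-2/\beta}}\!dy/(1+y^{\beta/2})<\tau^{2/\beta}\cdot\tau^{-2/\beta}=1$, giving $T=1+H-D>0$. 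Writing $g(u)=\tfrac{1}{T}-\tfrac{AD/T}{Tu+AD}$ and differentiating twice yields $g''(u)=-2TAD/(Tu+AD)^{3}<0$, so $g$ is strictly concave on the feasible range.

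The remainder is structural: each $t_{j}\,g(u_{j}(\mathbf{P}))$ is concave since $g$ is concave and $u_{j}$ is affine (composition preserves concavity), and $\mathcal{C}'$ is then a nonnegative combination of concave functions, hence concave. I expect the denominator simplification to be the main obstacle: once one sees that the $j$-th summand is a univariate function of a single linear form in $\mathbf{P}$, the rest is routine calculus. A direct Hessian computation on $\mathcal{C}'(\mathbf{P})$ would also work in principle, but it would be notationally heavy and would obscure the column-wise separability and the shared aggregate $u_{j}$ in numerator and denominator that make the argument clean.
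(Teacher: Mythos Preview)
Your proposal is correct and close in spirit to the paper's argument, but the execution differs in a useful way. The paper proceeds by the very direct Hessian computation you mention at the end: it writes the $j$-th summand as $F_{j}=\dfrac{\sum_{i}V_{ij}p_{ij}}{\sum_{i}G_{i}p_{ij}+E}$, differentiates twice to obtain $\dfrac{\partial^{2}F_{j}}{\partial p_{ij}\partial p_{kj}}=-m\,a_{i}a_{k}$ with $m>0$, and observes that the resulting Hessian $-m\,\mathbf{a}\mathbf{a}^{\top}$ is negative semidefinite. Your route---reducing each $F_{j}$ to $t_{j}\,g\bigl(u_{j}(\mathbf{P})\bigr)$ with $g(u)=u/(Tu+AD)$ and invoking the composition rule (concave $g$ composed with affine $u_{j}$)---is the cleaner conceptual explanation of exactly why that Hessian is rank one: $F_{j}$ depends on $\mathbf{P}$ only through the single linear form $u_{j}=\mathbf{a}^{\top}\mathbf{p}_{\cdot j}$. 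The two arguments are therefore equivalent at the level of linear algebra, but yours makes the separability and the univariate reduction explicit, whereas the paper's is more computational. One point in your favor: you explicitly verify $T(\tau,\beta)>0$ via the integral representations of $H$ and $D$, which is needed for $g''<0$ (equivalently, for the paper's constant $m$ to be positive); the paper tacitly assumes this sign without justification.
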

\begin{proof}
Please refer to Appendix \ref{3}.
\end{proof}
By Proposition \ref{proposition2}, we can use the standard interior point method to solve $\mathrm{P1}$ \cite{boyd2004convex}.
Let $\mathbf{P}^{*}=[p^{*}_{ij}]_{N\times{M}}$ denote the optimal solution of $\mathrm{P1}$. By the Karush-Kuhn-Tucker (KKT) conditions, the sufficient and necessary conditions for $\mathbf{P}^{*}$ can be stated in the following lemma.
\begin{lemma} \label{lemma3}
The optimal solution $\mathbf{P}^{*}$ of Problem $\mathrm{P1}$ satisfies the following sufficient and necessary conditions:
\begin{equation}
p^{*}_{ij}\!=\!
\min\!\left\{\left[\frac{1}{G_{i}}\!\left(\!\sqrt{\frac{V_{ij}E}{\eta_{i}}}\!-\!\!\sum_{k=1,\ne{i}}^{N}\!G_{k}p^{*}_{kj}\!-\!E\!\right)\!\right]^{+}, 1\right\},
 \label{p1*}
\end{equation}
for $\forall{i}\in{\mathcal{N}},\forall{j}\in{\mathcal{M}}$, where $[x]^{+}=\max\{0,x\}$, $V_{ij}=\lambda_{i}{S_{i}}^{\frac{2}{\beta}}t_{j}$, $G_{i}=\lambda_{i}{S_{i}}^{\frac{2}{\beta}}T(\tau,\beta)$,  $E=D(\tau,\beta)\sum_{i=1}^{N}\lambda_{i}{S_{i}}^
{\frac{2}{\beta}}$, and
$\eta_{i}$ is the Lagrangian multiplier that satisfies $\sum_{j=1}^{M}p^{*}_{ij}=Q_{i}$ \footnote{From (\ref{derivative}), it can be shown by contradiction that the maximum SDP is achieved only when constraint (\ref{condition2}) holds with equality. The similar proof is given by Lemma 2 of \cite{blaszczyszyn2015optimal}. In the rest of this paper, we use (\ref{condition2}) with equality as the constraint.} for $\forall{i}\in{\mathcal{N}}$.
\end{lemma}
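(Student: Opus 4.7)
The plan is to invoke the KKT framework directly, exploiting the concavity established in Proposition \ref{proposition2} so that the KKT conditions are both necessary and sufficient. I will form the Lagrangian with multipliers $\eta_i\ge 0$ for the cache-capacity constraints $\sum_j p_{ij}\le Q_i$ and $\mu_{ij},\nu_{ij}\ge 0$ for the box constraints $p_{ij}\ge 0$ and $p_{ij}\le 1$, then differentiate $\mathcal{C}^{'}$ in the form (\ref{C'2}) with respect to $p_{ij}$ and perform the standard case split on whether $p^{*}_{ij}$ lies at $0$, in $(0,1)$, or at $1$.

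The key algebraic simplification that makes the expression (\ref{p1*}) appear so cleanly is the identity $V_{ij}/G_i=t_j/T(\tau,\beta)$, which is immediate from the definitions of $V_{ij}$ and $G_i$. Using this, the numerator of the $j$-th summand of $\mathcal{C}^{'}$ equals $(t_j/T)\sum_i G_i p_{ij}$, so if I define $X_j\triangleq\sum_l G_l p_{lj}+E$, the objective collapses to $\mathcal{C}^{'}=\sum_j (t_j/T)(1-E/X_j)$, and differentiation gives
\begin{equation}
\frac{\partial \mathcal{C}^{'}}{\partial p_{ij}}=\frac{V_{ij}E}{X_j^{2}}. \label{derivative}
\end{equation}
The stationarity condition then reads $V_{ij}E/X_j^{2}-\eta_i+\mu_{ij}-\nu_{ij}=0$, which upon solving for $X_j$ in the interior case ($\mu_{ij}=\nu_{ij}=0$) yields $X_j=\sqrt{V_{ij}E/\eta_i}$, and isolating $p^{*}_{ij}$ reproduces the bracketed quantity inside (\ref{p1*}).

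The three cases are then handled by complementary slackness: if $p^{*}_{ij}=0$, then $\mu_{ij}\ge 0$ forces the bracketed term to be non-positive and the $[\,\cdot\,]^{+}$ truncates it to $0$; if $p^{*}_{ij}=1$, then $\nu_{ij}\ge 0$ forces the bracketed term to be at least $1$ and the outer $\min\{\cdot,1\}$ clips it to $1$; and the interior case gives equality. Combining the three cases produces the single compact expression (\ref{p1*}). The tightness $\sum_j p^{*}_{ij}=Q_i$ claimed in the footnote follows by contradiction from (\ref{derivative}): since $\partial \mathcal{C}^{'}/\partial p_{ij}>0$ for all admissible $p_{ij}$, any slack in (\ref{condition2}) could be used to strictly increase the objective, which determines $\eta_i>0$ uniquely through the capacity equality.

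The main obstacle I anticipate is purely organizational rather than technical: the derivative in (\ref{derivative}) looks at first glance like a ratio in which both numerator and denominator depend on $p_{ij}$, and a naive quotient-rule computation produces a much messier expression. The shortcut relies on spotting the proportionality $V_{ij}\propto G_i$ up to a $j$-dependent factor and rewriting $\mathcal{C}^{'}$ via $X_j$ before differentiating; without this reformulation, solving for $p^{*}_{ij}$ explicitly and recognizing the $\sqrt{V_{ij}E/\eta_i}$ term would be cumbersome. Once that reduction is in hand, the remainder is a routine KKT case analysis.
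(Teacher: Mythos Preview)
Your proposal is correct and follows essentially the same KKT-based approach as the paper: the paper constructs the identical Lagrangian, obtains the same derivative $\partial\mathcal{C}^{'}/\partial p_{ij}=V_{ij}E/(\sum_k G_k p_{kj}+E)^2$, and then performs the same three-case analysis via complementary slackness to arrive at (\ref{p1*}). Your rewriting $\mathcal{C}^{'}=\sum_j (t_j/T)\bigl(1-E/X_j\bigr)$ via the identity $V_{ij}/G_i=t_j/T$ is a nice computational shortcut that the paper does not make explicit (it computes the derivative by a direct quotient-rule calculation in the proof of Proposition~\ref{proposition2}), but the overall argument is the same.
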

\begin{proof}
Please refer to Appendix \ref{6}.
\end{proof}
Furthermore, according to (\ref{derivative}), we have the following remark to state the impact of the BS cache size of each tier.
\begin{remark} \label{remark_1}
For $\forall{i}\in\mathcal{N}$, the maximum SDP $\mathcal{C^{'*}}$ increases with the cache size $Q_{i}$ ($Q_{i}<M$).
\end{remark}
\subsection{Caching Optimization for Special Cases} \label{special}
\subsubsection{Optimization for N=1} \label{N=1}
When $N=1$, the network degrades to the single-tier network. Denote $\mathbf{P_{1}}\triangleq{(p_{j})_{1\times{M}}}$ as the caching strategy and $Q<{M}$ as the cache size for the single-tier network, then the optimal caching probabilities in (\ref{p1*}) become:
\begin{equation}
p_{j}^{*}\!=\!\min\!\left\{\left[\frac{1}{T(\tau,\beta)}\sqrt{\frac{t_{j}D(\tau,\beta)}{\eta^{*}}}{\setlength\arraycolsep{0.5pt}-}\frac{D(\tau,\beta)}{T(\tau,\beta)}\right]^{{\setlength\arraycolsep{0.5pt}+}},\!1\!\right\},~\forall{j}\in\!\mathcal{M} \label{pj*}
\end{equation}
where $\eta^{*}$ satisfies $\sum_{j=1}^{M}p_{j}^{*}=Q$ and can be found by bisection method. \footnote{Note that the optimal caching probability in this special case is consistent with the prior works on single-tier networks in \cite{7562510,cui2015analysis}. Our work extends the probabilistic caching strategy optimization for a single-tier network to that for a general $N$-tier HetNet and contributes to presenting the impacts and essential tradeoffs of the heterogeneous network parameters.}

Based on (\ref{pj*}), we have the following result.
\begin{corollary}\label{corollary22}
The optimal caching probability $p_{j}^{*}$ $(j\in\mathcal{M})$ decreases with the index $j$, i.e, $p_{j}^{*}$ increases with the content popularity $t_{j}$. Besides, $p_{j}^{*}$ $(j\in\mathcal{M})$ increases with the cache size $Q$.
\end{corollary}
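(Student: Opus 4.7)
The plan is to read both monotonicity claims straight off the closed form in equation (\ref{pj*}), treating that expression as a composition of monotone maps in the two variables that matter here, namely the popularity $t_j$ and the Lagrange multiplier $\eta^{*}$. Define, for convenience, the auxiliary function
\begin{equation*}
g(t,\eta)\triangleq\frac{1}{T(\tau,\beta)}\sqrt{\frac{t\,D(\tau,\beta)}{\eta}}-\frac{D(\tau,\beta)}{T(\tau,\beta)},
\end{equation*}
so that $p_{j}^{*}=\min\{[g(t_{j},\eta^{*})]^{+},1\}$. Because $T(\tau,\beta)$ and $D(\tau,\beta)$ are positive constants and $\eta^{*}>0$ (the cache constraint is active, as noted in the footnote to Lemma \ref{lemma3}), the map $t\mapsto g(t,\eta)$ is strictly increasing while $\eta\mapsto g(t,\eta)$ is strictly decreasing. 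The clipping operations $[\cdot]^{+}$ and $\min\{\cdot,1\}$ preserve these monotonicities.

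For the first claim, I would simply invoke the ordering $t_{1}\ge t_{2}\ge\cdots\ge t_{M}$ assumed in the system model. Fix the optimal multiplier $\eta^{*}$ corresponding to cache size $Q$; then for any $j<k$, $g(t_{j},\eta^{*})\ge g(t_{k},\eta^{*})$, so applying the two monotone clippings yields $p_{j}^{*}\ge p_{k}^{*}$. Equivalently, $p_{j}^{*}$ increases with $t_{j}$.

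For the second claim, the plan is a contradiction argument on how $\eta^{*}$ moves with $Q$. Let $Q^{(1)}<Q^{(2)}$ and write $\eta^{(1)},\eta^{(2)}$ and $p_{j}^{(1)},p_{j}^{(2)}$ for the associated multipliers and optimal probabilities. Suppose, for contradiction, $\eta^{(1)}\le\eta^{(2)}$. Then the monotonicity of $g$ in its second argument, combined with the two clippings, gives $p_{j}^{(1)}\ge p_{j}^{(2)}$ for every $j\in\mathcal{M}$, which upon summation contradicts the active constraints $\sum_{j}p_{j}^{(1)}=Q^{(1)}<Q^{(2)}=\sum_{j}p_{j}^{(2)}$. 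Hence $\eta^{(1)}>\eta^{(2)}$, and the same monotonicity argument run in reverse yields $p_{j}^{(1)}\le p_{j}^{(2)}$ for every $j$, i.e., $p_{j}^{*}$ increases with $Q$.

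The only delicate point, and the one I would be most careful about, is the interaction of the clippings with the strict monotonicities: the inequalities above are only weak because contents that sit at the saturated boundary (either $p_{j}^{*}=0$ or $p_{j}^{*}=1$) lose the strict ordering inherited from $g$. This does not break the argument because the weak inequalities still combine additively, but the contradiction step in the second part requires that the weak inequality $p_{j}^{(1)}\ge p_{j}^{(2)}$ holds for \emph{every} index. I would therefore verify that the two clippings are genuinely monotone in each argument separately before summing, which is immediate since $x\mapsto \min\{[x]^{+},1\}$ is nondecreasing on $\mathbb{R}$.
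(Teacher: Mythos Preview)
Your proposal is correct and follows essentially the same approach as the paper: both read the monotonicity in $t_j$ directly from the closed form (\ref{pj*}) for the first claim, and both argue that $\eta^{*}$ is decreasing in $Q$ for the second. The only minor difference is that the paper first fixes the index structure (indices $1,\dots,n$ saturated at $1$, indices $m,\dots,M$ at $0$) and then asserts without further justification that $\eta^{*}$ must drop when $Q$ grows, whereas your contradiction argument establishes this cleanly without invoking that structure; your treatment of the clipping is in fact slightly more careful than the paper's.
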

\begin{proof}
Please refer to Appendix \ref{7}.
\end{proof}
By (\ref{C'2}) for $N=1$, the maximum SDP for the single-tier network, denoted by $\mathcal{C}_{1}^{'*}$, is \begin{equation}\mathcal{C}_{1}^{'*}=\sum_{j=1}^{M}\frac{{p^{*}_{j}}t_{j}}{T(\tau,\beta)p^{*}_{j}+D(\tau,\beta)},\label{1111}\end{equation} thus we have the following remark.
\begin{remark}\label{remark2}
In the interference-limited regime, the maximum SDP $\mathcal{C}^{'*}_{1}$ of single-tier networks is independent of the BS density, transmit power, and only depends on the cache size. This is because the serving BS and interfering BSs have the same caching resource, and the increase in signal power is counter-balanced by the increase in interference power. Similar performance independency on the BS density and transmit power also exists for traditional networks without cache \cite{andrews2011tractable}.
\end{remark}

\subsubsection{Connection between 1-tier and $N$-tier HetNets}
We observe from (\ref{C'2}) that the SDP in the high SNR regime depends on the caching probabilities $\{p_{ij}\}_{i\in{\mathcal{N}},j\in{\mathcal{M}}}$ through $\sum_{i=1}^{N}\lambda_{i}{S_{i}}^{\frac{2}{\beta}}p_{ij}$. Thus, by defining $x_{j}=\frac{\sum_{i=1}^{N}\lambda_{i}{S_{i}}^{\frac{2}{\beta}}p_{ij}}{\sum_{i=1}^{N}\lambda_{i}{S_{i}}^{\frac{2}{\beta}}}$, then we can formulate a new problem below,
\begin{align}
\mathrm{P2}:\quad\max_{\mathbf{x}}\quad &\sum_{j=1}^{M}\frac{{x_{j}}t_{j}}{T(\tau,\beta)x_{j}+D(\tau,\beta)} \label{C^{'}_{1}}\\
\mathnormal{s.t.}\quad &0\le{x_{j}}\le{1},~~\forall{j}\in{\mathcal{M}} \label{condition4}\\
&\sum_{j=1}^{M}{x_{j}}=\frac{\sum_{i=1}^{N}\lambda_{i}{S_{i}}^{\frac{2}{\beta}}Q_{i}}{\sum_{i=1}^{N}\lambda_{i}{S_{i}}^{\frac{2}{\beta}}}. \label{condition5}
\end{align}

From (\ref{1111}) and (\ref{C^{'}_{1}}), it is seen that $\mathrm{P2}$ is identical with the caching optimization problem in a single-tier network with $\mathbf{x}$ being the caching probability vector and $Q_{e}=\frac{\sum_{i=1}^{N}\lambda_{i}{S_{i}}^{\frac{2}{\beta}}Q_{i}}{\sum_{i=1}^{N}\lambda_{i}{S_{i}}^{\frac{2}{\beta}}}$ being the equivalent cache size. By further comparing $\mathrm{P1}$ and $\mathrm{P2}$, we obtain the following proposition which states a general relationship
between the optimal performance of an $N$-tier HetNet and that
of a single-tier network.
\begin{proposition} \label{proposition_3}
Let $\mathcal{C}^{'*}$ be the optimal objective of $\mathrm{P1}$ for the considered $N$-tier HetNet. For the single-tier network with cache size $Q_{e}=\frac{\sum_{i=1}^{N}\lambda_{i}{S_{i}}^{\frac{2}{\beta}}Q_{i}}{\sum_{i=1}^{N}\lambda_{i}{S_{i}}^{\frac{2}{\beta}}}$, we have
$$\mathcal{C}^{'*}\le\mathcal{C}^{'*}_{1}$$
with equality if $Q_i = Q_j$ for $\forall i,~j \in \mathcal{N}$.
\end{proposition}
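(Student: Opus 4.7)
The plan is to exhibit $\mathrm{P2}$ as a relaxation of $\mathrm{P1}$ via the substitution $x_j = \frac{\sum_{i=1}^{N}\lambda_i S_i^{2/\beta} p_{ij}}{\sum_{i=1}^{N}\lambda_i S_i^{2/\beta}}$, and then verify that the relaxation becomes tight when the tiers share a common cache size.

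First, I would verify that the objective of $\mathrm{P1}$ depends on $\mathbf{P}$ only through the aggregates $x_j(\mathbf{P})$. Starting from (\ref{C'2}), the numerator of the $j$-th summand is $t_j x_j(\mathbf{P}) \sum_{i=1}^{N}\lambda_i S_i^{2/\beta}$, and the denominator factorizes as $\bigl(T(\tau,\beta)x_j(\mathbf{P}) + D(\tau,\beta)\bigr)\sum_{i=1}^{N}\lambda_i S_i^{2/\beta}$. The common factor cancels, so $\mathcal{C}'(\mathbf{P}) = \sum_{j=1}^{M} \frac{t_j x_j(\mathbf{P})}{T(\tau,\beta)x_j(\mathbf{P}) + D(\tau,\beta)}$, which is precisely the objective (\ref{C^{'}_{1}}) of $\mathrm{P2}$.

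Next, I would show that every feasible $\mathbf{P}$ of $\mathrm{P1}$ yields a feasible $\mathbf{x}(\mathbf{P})$ of $\mathrm{P2}$. Condition $0\le x_j\le 1$ follows because $x_j$ is a convex combination of numbers in $[0,1]$ by (\ref{condition1}). Using (\ref{condition2}) with equality (as justified in Lemma \ref{lemma3}), I get $\sum_{j=1}^{M} x_j = \frac{\sum_{i=1}^{N}\lambda_i S_i^{2/\beta}\sum_{j=1}^{M} p_{ij}}{\sum_{i=1}^{N}\lambda_i S_i^{2/\beta}} = \frac{\sum_{i=1}^{N}\lambda_i S_i^{2/\beta} Q_i}{\sum_{i=1}^{N}\lambda_i S_i^{2/\beta}} = Q_e$, which is exactly (\ref{condition5}). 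Since the feasible set of $\mathrm{P2}$ contains the image of the feasible set of $\mathrm{P1}$ under $\mathbf{P}\mapsto \mathbf{x}(\mathbf{P})$, and since the two objectives agree, we immediately conclude $\mathcal{C}'^{*} \le \mathcal{C}_1'^{*}$.

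Finally, for the equality statement, suppose $Q_i = Q$ for all $i\in\mathcal{N}$, so that $Q_e = Q$. Let $\mathbf{x}^{*}$ be the optimum of $\mathrm{P2}$ (equivalently, the optimum of the single-tier problem with cache size $Q$ from Section \ref{N=1}). I would take the uniform strategy $p_{ij} = x_j^{*}$ for every tier $i$. This satisfies (\ref{condition1}) because $x_j^{*}\in[0,1]$, and satisfies (\ref{condition2}) with equality because $\sum_{j=1}^{M} p_{ij} = \sum_{j=1}^{M} x_j^{*} = Q_e = Q = Q_i$. By construction $x_j(\mathbf{P}) = x_j^{*}$, so this feasible $\mathbf{P}$ achieves the relaxed optimum, giving $\mathcal{C}'^{*} = \mathcal{C}_1'^{*}$. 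The main technical subtlety, though not difficult, is handling the equality case cleanly: when the $Q_i$ differ, the common value $x_j^{*}$ fails condition (\ref{condition2}) for at least one tier, which is precisely why the inequality can be strict in general.
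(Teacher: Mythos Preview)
Your proposal is correct and follows essentially the same route as the paper: show that the objective of $\mathrm{P1}$ factors through the aggregate map $\mathbf{P}\mapsto\mathbf{x}(\mathbf{P})$, verify this map sends the feasible set of $\mathrm{P1}$ into that of $\mathrm{P2}$ (giving the upper bound), and then, when all $Q_i$ coincide, lift the optimizer $\mathbf{x}^{*}$ of $\mathrm{P2}$ back to a feasible $\mathbf{P}$ via $p_{ij}=x_j^{*}$ to get equality. Your observation that each $x_j$ is a convex combination of the $p_{ij}$ is a slightly cleaner way to check $0\le x_j\le 1$ than the paper's derivative argument, but the substance is identical.
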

\begin{proof}
Please refer to Appendix \ref{4}.
\end{proof}
Proposition \ref{proposition_3} states that in the interference-limited regime, the optimal performance of an $N$-tier HetNet with BS cache sizes $\{Q_{i}\}_{i\in\mathcal{N}}$, BS densities $\{\lambda_i\}_{i\in\mathcal{N}}$, and BS transmit powers $\{S_{i}\}_{i\in\mathcal{N}}$ is upper bounded by that of a single-tier network with BS cache size $Q_{e}=\frac{\sum_{i=1}^{N}\lambda_{i}{S_{i}}^{\frac{2}{\beta}}Q_{i}}{\sum_{i=1}^{N}\lambda_{i}{S_{i}}^{\frac{2}{\beta}}}$, arbitrary BS density, and arbitrary BS transmit power. Further, their performances are the same when the cache sizes $\{Q_i\}_{i\in\mathcal{N}}$ are the same for all tiers in the $N$-tier HetNet.
\subsubsection{Optimization for $Q_i = Q_j$, $\forall i,~j\in\mathcal{N}$}
In this case, Proposition \ref{proposition_3} shows that $\mathrm{P1}$ is equivalent to $\mathrm{P2}$.
Based on (\ref{pj*}), the optimal solution $\mathbf{P}^{*}$ of $\mathrm{P1}$ can be further given below.
\begin{corollary} \label{corollary3}
When $Q_i = Q_j$ for $\forall i,~j \in \mathcal{N}$, there exists an optimal solution $\mathbf{P}^{*}$ of $\mathrm{P1}$ satisfying
$p_{ij}^*=x_{j}^*$ for $\forall{i}\in{N}$, where the optimal solution $\mathbf{x}^*$ of $\mathrm{P2}$ follows (\ref{pj*}) with $p^*_j=x^*_j$ and $Q=Q_{e}$.
\end{corollary}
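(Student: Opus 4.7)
The plan is to construct the candidate solution $p_{ij}^{*}=x_j^{*}$ explicitly and show two things: (i) it is feasible for $\mathrm{P1}$, and (ii) its objective value equals the upper bound $\mathcal{C}_1^{'*}$ established in Proposition \ref{proposition_3}, so by the bound it must be optimal. Since Proposition \ref{proposition2} tells us $\mathrm{P1}$ is concave, optimality of any feasible point achieving the upper bound immediately makes it an optimal solution.

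First, I would verify feasibility. The bounds $0\le x_j^{*}\le 1$ transfer directly to $0\le p_{ij}^{*}\le 1$, so constraint (\ref{condition1}) holds. For the capacity constraint (\ref{condition2}), note that when $Q_i=Q$ for every $i\in\mathcal{N}$, the equivalent cache size reduces to
\begin{equation*}
Q_e=\frac{\sum_{i=1}^{N}\lambda_i S_i^{2/\beta}Q_i}{\sum_{i=1}^{N}\lambda_i S_i^{2/\beta}}=Q.
\end{equation*}
Constraint (\ref{condition5}) of $\mathrm{P2}$ gives $\sum_{j=1}^{M}x_j^{*}=Q_e=Q=Q_i$, so setting $p_{ij}^{*}=x_j^{*}$ satisfies $\sum_{j=1}^{M}p_{ij}^{*}=Q_i$ for every $i$.

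Next I would evaluate the objective $\mathcal{C}^{'}$ in (\ref{C'2}) at this candidate. Plugging $p_{ij}^{*}=x_j^{*}$ into (\ref{C'2}) gives
\begin{equation*}
\mathcal{C}^{'}(\mathbf{P}^{*})=\sum_{j=1}^{M}\frac{x_j^{*}\,t_j\sum_{i=1}^{N}\lambda_i S_i^{2/\beta}}{\sum_{l=1}^{N}\lambda_l S_l^{2/\beta}\bigl[T(\tau,\beta)x_j^{*}+D(\tau,\beta)\bigr]}=\sum_{j=1}^{M}\frac{x_j^{*}\,t_j}{T(\tau,\beta)x_j^{*}+D(\tau,\beta)},
\end{equation*}
which is exactly the objective of $\mathrm{P2}$ at $\mathbf{x}^{*}$, and by construction equals $\mathcal{C}_1^{'*}$. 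Proposition \ref{proposition_3} guarantees $\mathcal{C}^{'*}\le \mathcal{C}_1^{'*}$, so this feasible point attains the maximum and hence solves $\mathrm{P1}$. The closed-form expression for $x_j^{*}$ is then inherited from (\ref{pj*}) applied to the single-tier problem $\mathrm{P2}$ with cache size $Q_e$.

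The proof is really a short verification; there is no deep obstacle. The only subtlety worth flagging is that, because (\ref{pj*}) may give multiple optima in degenerate cases (e.g.\ ties among $t_j$'s on the boundary of the $\min/[\cdot]^{+}$ clipping), the corollary only asserts \emph{existence} of such a symmetric optimal $\mathbf{P}^{*}$ rather than uniqueness. This is consistent with the statement ``there exists an optimal solution'' and requires no additional argument beyond the construction above.
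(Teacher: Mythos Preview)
Your proposal is correct and follows essentially the same approach as the paper: the paper's proof of Corollary~\ref{corollary3} simply invokes the second half of the proof of Proposition~\ref{proposition_3}, which is exactly the feasibility-plus-upper-bound-attainment argument you spell out in detail. Your explicit evaluation of $\mathcal{C}^{'}(\mathbf{P}^{*})$ and the remark on existence versus uniqueness are a bit more careful than the paper's terse cross-reference, but the underlying idea is identical.
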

\begin{proof}
Problem $\mathrm{P2}$ is the new constructed single-tier caching optimization problem. Thus, its optimal solution $x^*_j$ is the same as (\ref{pj*}) where let $p^*_j=x^*_j$ and $Q=Q_{e}$. In the second part of the proof of Proposition \ref{proposition_3}, we show that when $Q_i = Q_j$ for $\forall i,~j \in \mathcal{N}$, $\mathrm{P1}$ is equivalent to $\mathrm{P2}$, and $\mathbf{P}^{*}$ with $p^*_{ij}=x^{*}_{j}$ is also an optimal solution of $\mathrm{P1}$. Thus, Corollary \ref{corollary3} is proved. 
\end{proof}
Based on Remark \ref{remark2} and Proposition \ref{proposition_3}, we have the following remark.
\begin{remark} \label{remark_3}
In the interference-limited regime, when all the BSs in the $N$-tier HetNet have the same cache size, the maximum SDP of the HetNet is independent of the network heterogeneity in the BS density and transmit power.
\end{remark}

Traditionally, without caching ability at BSs, the outage probability is independent of the number of tiers, the BS densities and transmit powers in the interference-limited $K$-tier HetNets \cite{SINR}. By introducing caching resource into the system, (\ref{C'2}) states that the maximum SDP $\mathcal{C^{'*}}$ generally depends on the number of tiers $N$, the BS cache size $\{Q_{i}\}_{i\in\mathcal{N}}$, the BS density $\{\lambda_{i}\}_{i\in\mathcal{N}}$ and transmit power $\{S_{i}\}_{i\in\mathcal{N}}$. The intuition behind this observation is that the caching resource changes the decision of a user to access a BS. In the multi-tier cache-enabled HetNet, the decision not only depends on the received SINR, but also depends on the contents cached at BSs.
In order to further understand the impact of the cache size $Q_{i}$ on $\mathcal{C^{'}}(\mathbf{P})$, we theoretically illustrate the relationship between the cache size and the BS density$/$transmit power in the next section.
\section{Analysis on Network Parameters under Uniform Cache} \label{tradeoffimpacts}
In this section, with the uniform caching strategy where each content is cached with equal probabilities regardless of content popularities, the equivalence between the SDP of an $N$-tier HetNet and that of a single-tier network is obtained. Based on this property, we further investigate the impacts of the key network parameters, i.e, the BS cache size, density and transmit power on the system performance. Finally, the tradeoffs of the BS density $\lambda_{i}$, transmit power $S_{i}$ and cache size $Q_{i}$ are found.
\subsection{The Equivalence under Uniform Cache}
Consider the uniform caching strategy where $\mathbf{P_{1,u}}=\{p_{j}=\frac{Q_{e}}{M}\}_{j\in{\mathcal{M}}}$ and $\mathbf{P_{u}}=\{p_{ij}=\frac{Q_{i}}{M}\}_{i\in{\mathcal{N}},j\in{\mathcal{M}}}$ for single-tier and $N$-tier HetNets, respectively.
By substituting $\mathbf{P_{u}}$ and $\mathbf{P_{1,u}}$ into (\ref{C'2}) and (\ref{C^{'}_{1}}), respectively, the equivalence of the system performance of $N$-tier HetNets and 1-tier Networks can be established.
\begin{proposition}\label{proposition_4}
In the interference-limited regime, the SDP $\mathcal{C}^{'}(\mathbf{P_{u}})$ of a $N$-tier HetNet with the caching strategy $\mathbf{P_{u}}$ equals that $\mathcal{C}^{'}_{1}(\mathbf{P_{1,u}})$ of the single-tier network with the caching strategy $\mathbf{P_{1,u}}$, and is given by
\begin{equation}\mathcal{C}^{'}(\mathbf{P_{u}})=\mathcal{C}^{'}_{1}(\mathbf{P_{1,u}})=\sum_{j=1}^{M}\frac{{Q_{e}}t_{j}}{T(\tau,\beta)Q_{e}+D(\tau,\beta)M},\label{C^{'}_{1e}}\end{equation}
where the cache size of the single-tier network $Q_{e}=\frac{\sum_{i=1}^{N}\lambda_{i}{S_{i}}^{\frac{2}{\beta}}Q_{i}}{\sum_{i=1}^{N}\lambda_{i}{S_{i}}^{\frac{2}{\beta}}}$.
\end{proposition}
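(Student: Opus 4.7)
The plan is to establish the stated equality by direct algebraic substitution into the two SDP expressions already derived, namely equation (\ref{C'2}) for the $N$-tier HetNet and equation (\ref{1111}) for the single-tier network, and to show that both collapse to the same closed form. Since the expressions are already in hand, the proof is essentially a verification; the content of the proposition is really the \emph{factorization} that uniform caching enables.

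First I would substitute $p_{ij} = Q_i/M$ into (\ref{C'2}). In the numerator of the $j$-th summand, $\sum_{i=1}^{N}\lambda_{i}S_{i}^{2/\beta}p_{ij}t_j$ immediately factors as $(t_j/M)\sum_{i=1}^{N}\lambda_{i}S_{i}^{2/\beta}Q_i$, which up to the factor $t_j$ is independent of $j$. In the denominator, splitting the bracket gives $(T(\tau,\beta)/M)\sum_{l=1}^{N}\lambda_{l}S_{l}^{2/\beta}Q_l + D(\tau,\beta)\sum_{l=1}^{N}\lambda_{l}S_{l}^{2/\beta}$. Introducing the shorthand $A \triangleq \sum_{l}\lambda_{l}S_{l}^{2/\beta}$ and $B \triangleq \sum_{l}\lambda_{l}S_{l}^{2/\beta}Q_l$, and noting that by definition $Q_e = B/A$, the denominator becomes $(A/M)\bigl(T(\tau,\beta)Q_e + D(\tau,\beta)M\bigr)$ and the numerator becomes $(A/M)\,t_j Q_e$. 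The factor $A/M$ cancels, leaving each summand equal to $t_j Q_e / \bigl(T(\tau,\beta)Q_e + D(\tau,\beta)M\bigr)$, which summed over $j$ yields the right-hand side of (\ref{C^{'}_{1e}}).

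Next I would carry out the analogous substitution for the single-tier network: plugging $p_j = Q_e/M$ into (\ref{1111}) gives each summand as $(Q_e/M)t_j / \bigl(T(\tau,\beta)(Q_e/M) + D(\tau,\beta)\bigr)$, and multiplying numerator and denominator by $M$ produces exactly $t_j Q_e / \bigl(T(\tau,\beta)Q_e + D(\tau,\beta)M\bigr)$. Summing over $j$ gives the same expression, so $\mathcal{C}^{'}(\mathbf{P_u}) = \mathcal{C}^{'}_1(\mathbf{P_{1,u}})$ and both equal the claimed closed form.

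There is no real obstacle here, since the argument is routine once the aggregates $A$ and $B$ are introduced; the only subtlety worth flagging in the write-up is \emph{why} the identity holds, namely that uniform caching makes the per-tier caching weights $\lambda_i S_i^{2/\beta} p_{ij}$ inside both the numerator and the denominator of (\ref{C'2}) combine into the \emph{same} weighted sum $\sum_i \lambda_i S_i^{2/\beta}Q_i$, so that after normalizing by $A$ only the scalar $Q_e$ survives and all tier-specific heterogeneity disappears. This is exactly the mechanism behind Remark \ref{remark_3}, now made explicit at the level of the SDP expression rather than only at the maximum.
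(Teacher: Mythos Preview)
Your proposal is correct and follows exactly the same approach as the paper, which proves the result by direct substitution of $\mathbf{P_u}$ and $\mathbf{P_{1,u}}$ into (\ref{C'2}) and (\ref{C^{'}_{1}}) respectively. Your write-up is in fact more detailed than the paper's one-line proof, spelling out the factorization via the aggregates $A$ and $B$ that the paper leaves implicit.
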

\begin{proof}
Substituting $\mathbf{P_{u}}$ and $\mathbf{P_{1,u}}$ into (\ref{C'2}) and (\ref{C^{'}_{1}}), respectively, we then have (\ref{C^{'}_{1e}}).
\end{proof}
Based on (\ref{pj*}) and $\mathcal{C}^{'}(\mathbf{P_{u}})=\mathcal{C}^{'}_{1}(\mathbf{P_{1,u}})$, we have the following corollary.
\begin{corollary}
In the interference-limited regime, consider the scenario that the video content popularity distribution is an uniform distribution, i.e., $t_{j}=\frac{1}{M}$, $\forall{}j\in\mathcal{M}$, then the uniform caching strategies $\mathbf{P_{1,u}}$ and $\mathbf{P_{u}}$ are the optimal probabilistic caching strategies for single-tier networks and $N$-tier HetNets, respectively.
\end{corollary}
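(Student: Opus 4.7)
The plan is to prove the two statements in sequence, handling the single-tier case directly from the explicit formula (\ref{pj*}) and then bootstrapping to the $N$-tier case via Propositions \ref{proposition_3} and \ref{proposition_4}, which already supply the necessary bridge between the two network types.

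First, I would treat the single-tier case. Substituting $t_j = 1/M$ into the KKT-based expression (\ref{pj*}), the argument of the outer $\min$ becomes
\[
\frac{1}{T(\tau,\beta)}\sqrt{\frac{D(\tau,\beta)}{M\eta^{*}}}-\frac{D(\tau,\beta)}{T(\tau,\beta)},
\]
which is independent of the content index $j$. Hence $p_j^{*}$ takes a common value $p^{*}$ across all $j\in\mathcal{M}$, and the cache-budget equality $\sum_{j=1}^{M}p_j^{*}=Q_e$ (used as the binding form of (\ref{condition2}), as justified after Lemma \ref{lemma3}) forces $p^{*}=Q_e/M$. Because $Q_e<M$ by the standing assumption $Q<M$, this interior value lies in $(0,1)$, so neither the $[\cdot]^{+}$ nor the $\min\{\cdot,1\}$ truncation in (\ref{pj*}) is active, and the Lagrange multiplier $\eta^{*}$ can be tuned consistently. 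Therefore $\mathbf{P_{1,u}}$ satisfies the sufficient and necessary KKT conditions of Lemma \ref{lemma3} and is optimal for the single-tier network.

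Second, I would lift this to the $N$-tier HetNet through a short chain of (in)equalities. By Proposition \ref{proposition_3}, the maximum SDP $\mathcal{C}'^{*}$ of the HetNet is upper bounded by the maximum SDP $\mathcal{C}'^{*}_{1}$ of the equivalent single-tier network with cache size $Q_e$. The single-tier step just established gives $\mathcal{C}'^{*}_{1}=\mathcal{C}'_{1}(\mathbf{P_{1,u}})$, and Proposition \ref{proposition_4} yields $\mathcal{C}'_{1}(\mathbf{P_{1,u}})=\mathcal{C}'(\mathbf{P_{u}})$. Combining,
\[
\mathcal{C}'(\mathbf{P_{u}})\;\le\;\mathcal{C}'^{*}\;\le\;\mathcal{C}'^{*}_{1}\;=\;\mathcal{C}'_{1}(\mathbf{P_{1,u}})\;=\;\mathcal{C}'(\mathbf{P_{u}}),
\]
where the first inequality holds because $\mathbf{P_{u}}$ is feasible for $\mathrm{P1}$ (each $Q_i/M\in[0,1]$ and $\sum_j Q_i/M = Q_i$). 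Equality throughout then forces $\mathbf{P_{u}}$ to attain the optimum of $\mathrm{P1}$.

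There is no serious obstacle here; the only subtle point is checking that the truncations in (\ref{pj*}) are inactive so that the symmetric candidate $p_j^{*}=Q_e/M$ genuinely solves the KKT system, which is immediate from $0<Q_e<M$. The argument also highlights that uniform popularity is precisely the situation in which the Proposition \ref{proposition_3} upper bound is tight for \emph{every} heterogeneous configuration, not only for the equal-cache regime covered by Remark \ref{remark_3}.
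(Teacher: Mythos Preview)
Your proposal is correct and follows essentially the same route as the paper: first use (\ref{pj*}) with $t_j=1/M$ to conclude all $p_j^{*}$ coincide and hence equal $Q_e/M$, then invoke Propositions \ref{proposition_3} and \ref{proposition_4} to sandwich $\mathcal{C}'^{*}$ and deduce that $\mathbf{P_{u}}$ is optimal for $\mathrm{P1}$. Your version is slightly more explicit about checking that the truncations in (\ref{pj*}) are inactive and about the feasibility of $\mathbf{P_{u}}$, but the underlying argument is identical.
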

\begin{proof}
When $t_{j}=\frac{1}{M}$, $\forall{}j\in\mathcal{M}$, we have $p^{*}_{1}=p^{*}_{2}=\cdots=p^{*}_{M}$ from (\ref{pj*}). Since $\sum_{j=1}^{M}p_{j}^{*}=Q_{e}$, we thus have $p^{*}_{j}=\frac{Q_e}{M}$ for $\forall{}j\in\mathcal{M}$, and the maximum $\mathcal{C}^{'*}_{1}=\mathcal{C}^{'}_{1}(\mathbf{P_{u}})$. Then from Proposition \ref{proposition_3} and Proposition \ref{proposition_4}, we have $\mathcal{C}^{'*}=\mathcal{C}^{'}(\mathbf{P_{u}})$, and hence $\mathbf{P_{u}}$ is also the optimal solution of $\mathrm{P1}$.
\end{proof}
\begin{figure*}
\vspace{-5mm}
\centering
\subfigure[The tradeoff between $\lambda_{1}$ and $Q_{1}$. We consider two cases: (1). \{$S_{1}$, $S_{2}$\}=\{43 dBm, 33 dBm\} (2). \{$S_{1}$, $S_{2}$\}=\{53 dBm, 33 dBm\}. We fix $\lambda_{2}$=$\frac{5}{\pi{500}^{2}}$ for both cases.]
{\label{trade1} 
 \includegraphics[width=3.1in, height=2.3in]{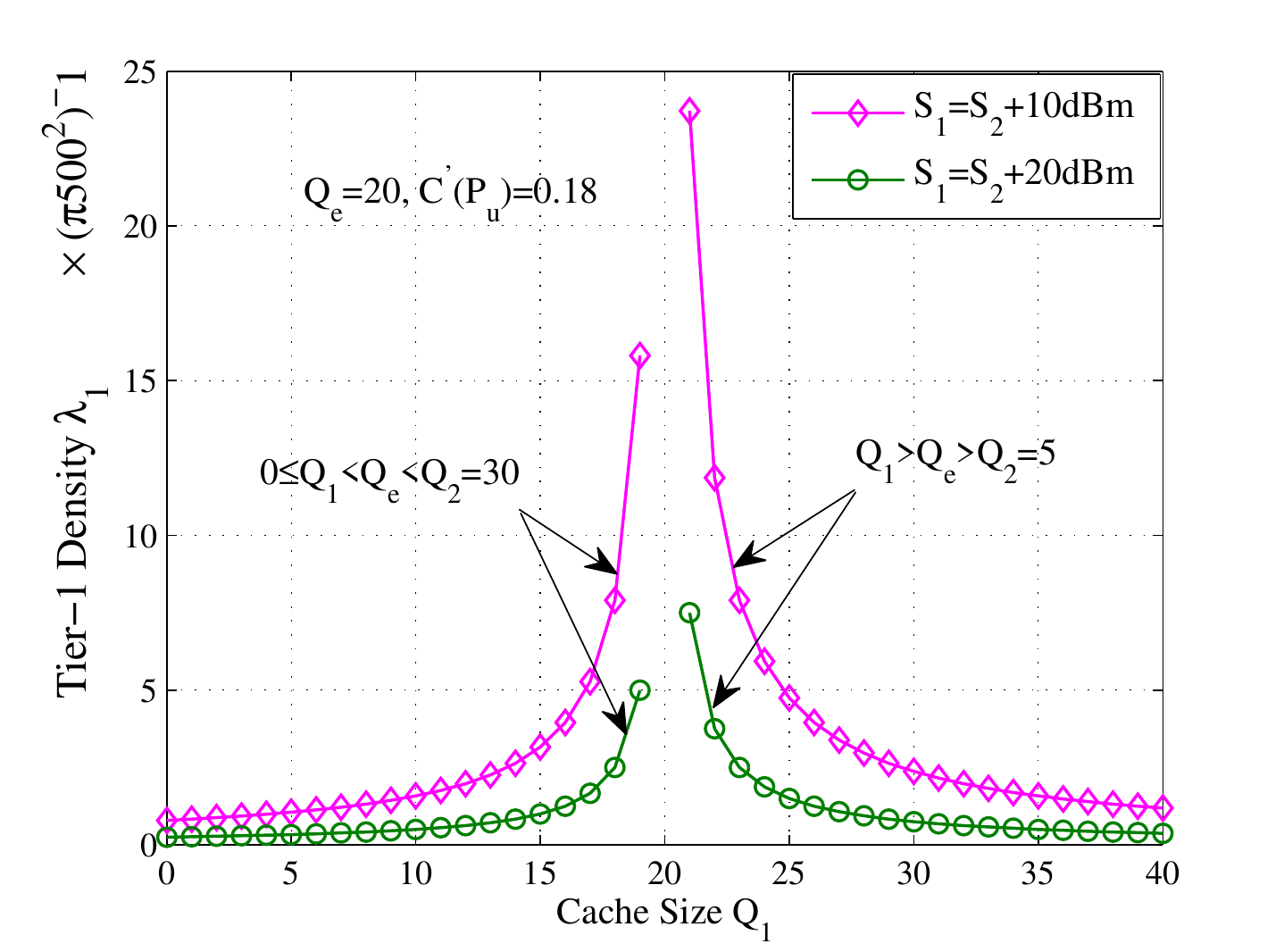}}
\hspace{0in}
\subfigure[The tradeoff between $S_{1}$ and $Q_{1}$. We consider two cases: (1). \{$\lambda_{1}$, $\lambda_{2}$\}=$\left\{\frac{2}{\pi{500}^{2}},~\frac{10}{\pi{500}^{2}}\right\}$ (2). \{$\lambda_{1}$, $\lambda_{2}$\}=$\left\{\frac{1}{\pi{500}^{2}},~\frac{10}{\pi{500}^{2}}\right\}$. We fix $S_{2}$=33 dBm for both cases.]
{\label{trade2} 
 \includegraphics[width=3.1in, height=2.3in]{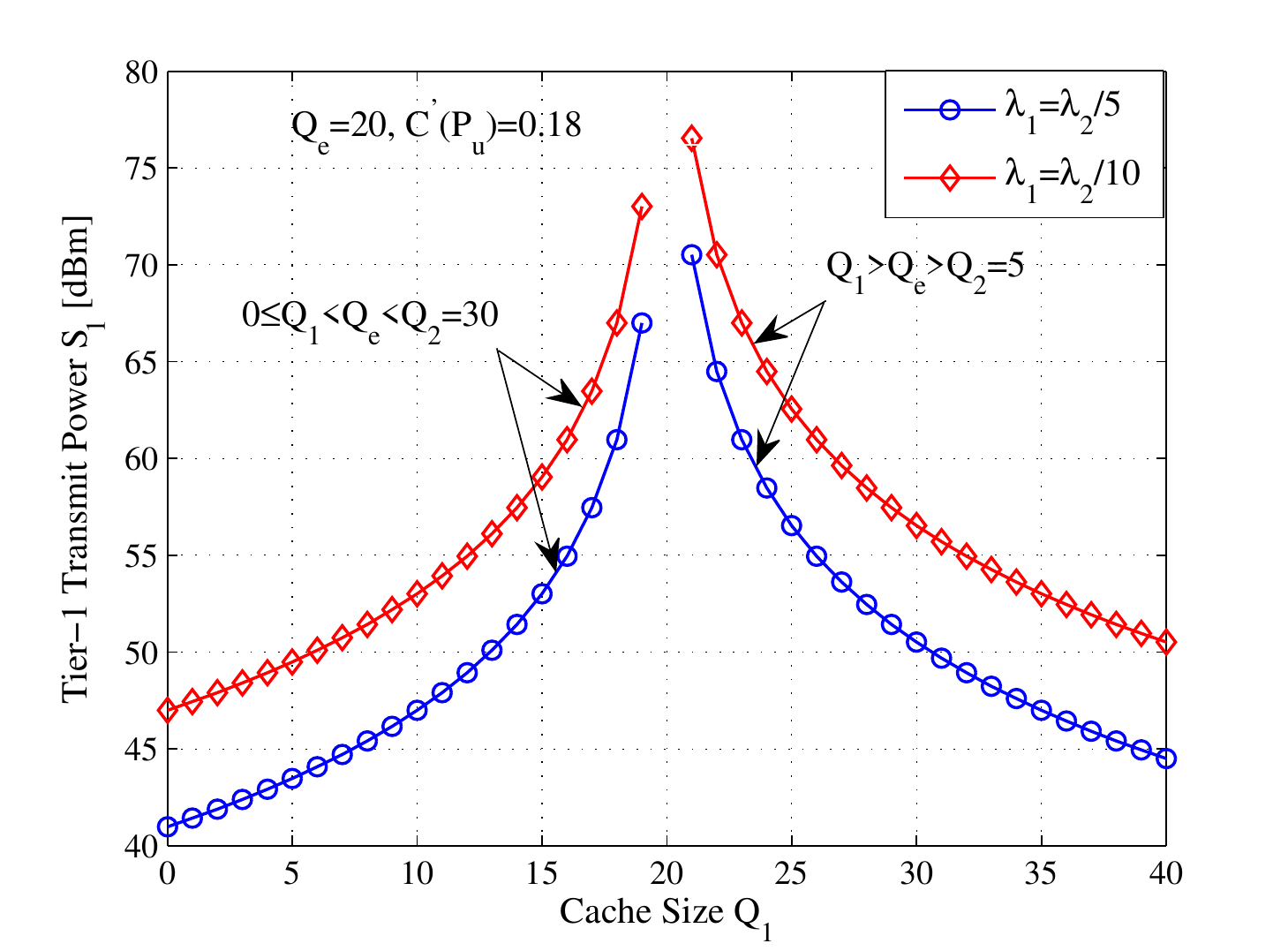}}
 \vspace{-1mm}
 \caption{Under fixed $\mathcal{C}^{'}(\mathbf{P_{u}})$, the tradeoffs of the BS density $\lambda_{1}$, transmit power $S_{1}$ and cache size $Q_{1}$ in a two-tier HetNet where $\tau=-10$ dB, $t_{j}={\frac{1/j^{\gamma}}{\sum_{k=1}^{M}1/k^{\gamma}}}$, $\gamma=0.8$, $\beta=4$, $\mathcal{C}^{'}(\mathbf{P_{u}})=0.18$ ($Q_{e}$=20). By Corollary \ref{corollary7}, we have $K_{1}<0$, $K_{2}<0$ for $Q_{1}\in[0,Q_{e})$, and $K_{1}>0$, $K_{2}>0$ for $Q_{1}\in(Q_{e},+\infty)$.}
\label{tradeoff_1_2} 
\vspace{-5mm}
\end{figure*}
\subsection{Impacts of $\lambda_{i}$, $S_{i}$, $Q_{i}$, $\forall{i\in{\mathcal{N}}}$}
Proposition \ref{proposition_4} further states that, with uniform caching regardless of content popularity, the SDP performance of $N$-tier HetNets depends on all the system parameters through the equivalent cache size $Q_{e}=\frac{\sum_{i=1}^{N}\lambda_{i}{S_{i}}^{\frac{2}{\beta}}Q_{i}}{\sum_{i=1}^{N}\lambda_{i}{S_{i}}^{\frac{2}{\beta}}}$ only. Then, the impacts of the BS density $\lambda_{i}$ and cache size $Q_{i}$ of tier $i$ can be obtained in the following two lemmas.
\begin{lemma} \label{lemma4}
The SDP $\mathcal{C}^{'}(\mathbf{P_{u}})$ increases with $S_{i}$ and $\lambda_{i}$ when $Q_{i}\ge{\frac{\sum_{j=1,\ne{i}}^{N}\lambda_{j}{S_{j}}^{\frac{2}{\beta}}Q_{j}}{\sum_{j=1,\ne{i}}^{N}\lambda_{j}{S_{j}}^{\frac{2}{\beta}}}}$. Otherwise, $\mathcal{C}^{'}(\mathbf{P_{u}})$ decreases with $S_{i}$ or $\lambda_{i}$.
\end{lemma}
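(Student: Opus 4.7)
The plan is to reduce the lemma to monotonicity in the equivalent cache size $Q_{e}$ via Proposition~\ref{proposition_4}, and then carry out a short partial-derivative computation on the rational function $Q_{e}(\lambda_{i},S_{i})$.

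First I would simplify Proposition~\ref{proposition_4}. Since $\sum_{j=1}^{M}t_{j}=1$, equation~(\ref{C^{'}_{1e}}) collapses to $\mathcal{C}^{'}(\mathbf{P_{u}})=\frac{Q_{e}}{T(\tau,\beta)Q_{e}+D(\tau,\beta)M}$. A one-line derivative yields
\begin{equation*}
\frac{d}{dQ_{e}}\!\left[\frac{Q_{e}}{T(\tau,\beta)Q_{e}+D(\tau,\beta)M}\right]=\frac{D(\tau,\beta)M}{\bigl(T(\tau,\beta)Q_{e}+D(\tau,\beta)M\bigr)^{2}}>0,
\end{equation*}
so $\mathcal{C}^{'}(\mathbf{P_{u}})$ is a strictly increasing function of $Q_{e}$. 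Hence the monotonicity of $\mathcal{C}^{'}(\mathbf{P_{u}})$ in $\lambda_{i}$ or $S_{i}$ is determined entirely by the sign of $\partial Q_{e}/\partial \lambda_{i}$ and $\partial Q_{e}/\partial S_{i}$.

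Next I would introduce the shorthand $A_{i}\triangleq\lambda_{i}S_{i}^{2/\beta}$, $B\triangleq\sum_{j\ne i}\lambda_{j}S_{j}^{2/\beta}$, and $C\triangleq\sum_{j\ne i}\lambda_{j}S_{j}^{2/\beta}Q_{j}$, so that $Q_{e}=\frac{A_{i}Q_{i}+C}{A_{i}+B}$, with $B,C,Q_{i}$ all independent of $A_{i}$. A direct quotient-rule calculation gives
\begin{equation*}
\frac{\partial Q_{e}}{\partial A_{i}}=\frac{Q_{i}(A_{i}+B)-(A_{i}Q_{i}+C)}{(A_{i}+B)^{2}}=\frac{BQ_{i}-C}{(A_{i}+B)^{2}},
\end{equation*}
whose sign coincides with the sign of $Q_{i}-C/B=Q_{i}-\dfrac{\sum_{j\ne i}\lambda_{j}S_{j}^{2/\beta}Q_{j}}{\sum_{j\ne i}\lambda_{j}S_{j}^{2/\beta}}$, which is precisely the threshold stated in the lemma.

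Finally, since $A_{i}=\lambda_{i}S_{i}^{2/\beta}$ is strictly increasing in both $\lambda_{i}$ and $S_{i}$ (noting $2/\beta>0$), the chain rule transfers the sign of $\partial Q_{e}/\partial A_{i}$ to both $\partial Q_{e}/\partial \lambda_{i}$ and $\partial Q_{e}/\partial S_{i}$. Combining this with the monotonicity of $\mathcal{C}^{'}(\mathbf{P_{u}})$ in $Q_{e}$ proved in the first step yields the two cases of the lemma. There is really no hard step here: the argument is entirely a chain-rule reduction, and the only minor subtlety worth flagging is the common aggregate $A_{i}=\lambda_{i}S_{i}^{2/\beta}$ through which both parameters enter $Q_{e}$, which is what makes the same threshold govern the $\lambda_{i}$ and $S_{i}$ directions simultaneously.
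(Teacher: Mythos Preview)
Your proposal is correct and follows essentially the same route as the paper: reduce to monotonicity of $\mathcal{C}^{'}(\mathbf{P_{u}})$ in $Q_{e}$ via Proposition~\ref{proposition_4}, then determine the sign of the partial derivatives of $Q_{e}$ with respect to $\lambda_{i}$ and $S_{i}$. The only cosmetic differences are that you collapse the sum over $j$ using $\sum_{j}t_{j}=1$ and differentiate explicitly, and you factor through the aggregate $A_{i}=\lambda_{i}S_{i}^{2/\beta}$ via the chain rule, whereas the paper computes $\partial Q_{e}/\partial\lambda_{i}$ and $\partial Q_{e}/\partial S_{i}$ separately; the content is identical.
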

\begin{proof}
From (\ref{C^{'}_{1e}}), we can see that $\mathcal{C}^{'}_{1}(\mathbf{P_{1,u}})$ increases with $Q_{e}$. Then we have
\begin{align}
\frac{\partial{Q_e}}{\partial{\lambda_{i}}}&=\frac{S^{\frac{2}{\beta}}_{i}
(Q_{i}\sum_{j=1}^{N}\lambda_{j}{S_{j}}^{\frac{2}{\beta}}-\sum_{j=1}^{N}\lambda_{j}
{S_{j}}^{\frac{2}{\beta}}Q_{j})}{(\sum_{j=1}^{N}\lambda_{j}{S_{j}}^{\frac{2}{\beta}})^2}\nonumber\\
&=\frac{S^{\frac{2}{\beta}}_{i}
(Q_{i}\sum_{j=1,\ne{i}}^{N}\lambda_{j}{S_{j}}^{\frac{2}{\beta}}-\sum_{j=1,\ne{i}}^{N}\lambda_{j}
{S_{j}}^{\frac{2}{\beta}}Q_{j})}{(\sum_{j=1}^{N}\lambda_{j}{S_{j}}^{\frac{2}{\beta}})^2}\nonumber,\\
\frac{\partial{Q_e}}{\partial{S_{i}}}&=\frac{\frac{2}{\beta}\lambda_{i}S^{\frac{2}{\beta}-1}_{i}(Q_{i}\sum_{j=1,\ne{i}}^{N}\lambda_{j}{S_{j}}^
{\frac{2}{\beta}}\!-\!\sum_{j=1,\ne{i}}^{N}\lambda_{j}{S_{j}}^{\frac{2}{\beta}}Q_{j})}{(\sum_{j=1}^{N}\lambda_{j}{S_{j}}^{\frac{2}{\beta}})^2} \nonumber.
\end{align}
Obviously, when $Q_{i}\ge{\frac{\sum_{j=1,\ne{i}}^{N}\lambda_{j}{S_{j}}^{\frac{2}{\beta}}Q_{j}}{\sum_{j=1,\ne{i}}^{N}\lambda_{j}{S_{j}}^{\frac{2}{\beta}}}}$, we have that $\frac{\partial{Q}}{\partial{\lambda_{i}}}\ge{0}$ and $\frac{\partial{Q}}{\partial{S_{i}}}\ge{0}$, which also means that $\mathcal{C}^{'}_{1}(\mathbf{P_{1,u}})$ increases with $\lambda_{i}$ and ${S_{i}}$ \footnote{This is not contradict with Remark \ref{remark2} in Section \ref{problem}, because the BS densities $\{\lambda_{i}\}_{i\in\mathcal{N}}$, transmit powers $\{S_{i}\}_{i\in\mathcal{N}}$ and cache sizes $\{Q_{i}\}_{i\in\mathcal{N}}$ affect the cache size $Q$ in here based on $Q_{e}$.}. Due to $\mathcal{C}^{'}(\mathbf{P_{u}})$=$\mathcal{C}^{'}_{1}(\mathbf{P_{1,u}})$, we thus have Lemma \ref{lemma4}.
\end{proof}
\begin{lemma} \label{lemma5}
The SDP $\mathcal{C}^{'}(\mathbf{P_{u}})$ increases with $Q_{i}$ ($Q_{i}<M$) and the increasing speed is monotonic to $\lambda_{i}{S_{i}}^{\frac{2}{\beta}}$, $\forall{i}\in{\mathcal{N}}$.
\end{lemma}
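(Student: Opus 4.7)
The plan is to reduce everything to a scalar derivative in $Q_e$. By Proposition~\ref{proposition_4}, under uniform caching the SDP simplifies via $\sum_{j=1}^{M} t_j = 1$ to
\begin{equation}
\mathcal{C}^{'}(\mathbf{P_{u}}) \;=\; \frac{Q_e}{T(\tau,\beta)\, Q_e + D(\tau,\beta)\, M},
\end{equation}
so the only thing tying $\mathcal{C}^{'}(\mathbf{P_{u}})$ to the per-tier parameter $Q_i$ is the single composite quantity $Q_e=\frac{\sum_{l=1}^{N}\lambda_l S_l^{2/\beta} Q_l}{\sum_{l=1}^{N}\lambda_l S_l^{2/\beta}}$. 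Thus the lemma should follow from one chain-rule computation.

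First I would differentiate the scalar function above with respect to $Q_e$ and observe that
\begin{equation}
\frac{\partial \mathcal{C}^{'}(\mathbf{P_{u}})}{\partial Q_e} \;=\; \frac{D(\tau,\beta)\, M}{\bigl[T(\tau,\beta)\, Q_e + D(\tau,\beta)\, M\bigr]^{2}} \;>\; 0,
\end{equation}
since $T(\tau,\beta)$, $D(\tau,\beta)$, and $M$ are all positive. Next I would differentiate $Q_e$ directly with respect to $Q_i$, which gives the clean expression $\frac{\partial Q_e}{\partial Q_i} = \frac{\lambda_i S_i^{2/\beta}}{\sum_{l=1}^{N} \lambda_l S_l^{2/\beta}} > 0$. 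Combining these two pieces via the chain rule yields
\begin{equation}
\frac{\partial \mathcal{C}^{'}(\mathbf{P_{u}})}{\partial Q_i}
\;=\;\frac{D(\tau,\beta)\, M\;\lambda_i S_i^{2/\beta}}
{\bigl[T(\tau,\beta)\, Q_e + D(\tau,\beta)\, M\bigr]^{2}\,\sum_{l=1}^{N} \lambda_l S_l^{2/\beta}}.
\end{equation}
Positivity of this derivative proves the first claim (monotonic increase in $Q_i$ for $Q_i<M$, which is the admissible range). For the second claim, I would point out that, with all other network parameters held fixed, the denominator factor $\bigl[T\,Q_e + D\,M\bigr]^{2}\sum_l \lambda_l S_l^{2/\beta}$ does not involve $i$, so $\partial \mathcal{C}^{'}/\partial Q_i$ is a strictly increasing (in fact linear) function of $\lambda_i S_i^{2/\beta}$, giving the stated monotone relationship between the growth rate and the tier-weight $\lambda_i S_i^{2/\beta}$.

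There is no real obstacle here; the proof is a direct chain-rule calculation once Proposition~\ref{proposition_4} is invoked. The only interpretive subtlety is to read ``monotonic to $\lambda_i S_i^{2/\beta}$'' as a statement comparing different tiers within a fixed HetNet (equivalently, a statement about how the marginal gain from enlarging a tier's cache scales with that tier's effective weight), which the formula above makes transparent.
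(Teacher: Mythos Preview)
Your proof is correct and follows the same chain-rule route as the paper: show $\mathcal{C}'(\mathbf{P_u})$ is increasing in $Q_e$, show $Q_e$ is increasing in $Q_i$, and read off that $\partial \mathcal{C}'/\partial Q_i$ scales with $\lambda_i S_i^{2/\beta}$. Your derivative $\partial Q_e/\partial Q_i=\lambda_i S_i^{2/\beta}/\sum_l \lambda_l S_l^{2/\beta}$ is in fact the right one; the squared denominator printed in the paper's proof is a typo.
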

\begin{proof}
Due to $
\frac{\partial{Q_{e}}}{\partial{Q_{i}}}=\frac{\lambda_{i}{S_{i}}^{\frac{2}{\beta}}}{(\sum_{i=1}^{N}\lambda_{i}{S_{i}}^{\frac{2}{\beta}})^2} >0\nonumber
$, $Q_{e}$ increases with $Q_{i}$ for $\forall{i}\in{N}$. Since $\mathcal{C}^{'}_{1}(\mathbf{P_{1,u}})$ increases with $Q_{e}$ and $\mathcal{C}^{'}_{1}(\mathbf{P_{1,u}})=\mathcal{C}^{'}(\mathbf{P_{u}})$, $\mathcal{C}^{'}(\mathbf{P_{u}})$ increases with $Q_{i}$. We thus have this lemma.
\end{proof}
\begin{remark}
From Lemma \ref{lemma4} and Lemma \ref{lemma5}, we can observe two important features in the $N$-tier cache-enabled HetNet:
\begin{itemize}
  \item Increasing the density or transmit power of the BSs from the tier with small cache size decreases the system performance. This somewhat surprising result is actually intuitive since such BSs (e.g., pico or femto) with small cache only provide little service but bring strong interferences to other BSs (e.g., macro or relay).
  \item  If the BS transmit power and density of the $i$-th tier are both the largest among all the tiers, it is most effective to increase the performance of the network by increasing the BS cache size of the $i$-th tier.
\end{itemize}
\end{remark}
\subsection{Tradeoffs of $Q_{i}$, $\lambda_{i}$, and $S_{i}$, $\forall{i}\in{\mathcal{N}}$}
In the preceding subsection, we describe the impacts of $Q_{i}$, $\lambda_{i}$, and $S_{i}$ on $\mathcal{C}^{'}(\mathbf{P_{u}})$. Now we will present the tradeoffs of these network parameters at a target SDP.
\subsubsection{Tradeoffs of one tier parameters}
Lemma \ref{lemma4} and \ref{lemma5} show that increasing the BS density, transmit power or cache size influence the SDP $\mathcal{C}^{'}(\mathbf{P_{u}})$ by changing $Q_{e}$. This suggests that, as long as $Q_{e}$ does not change, one can interchange different types of system parameters to maintain the same system performance, and hence can obtain the tradeoffs of these network parameters at a target SDP. In the following, we elaborate the tradeoff between the BS density $\lambda_i$ (transmit power $S_{i}$) and the cache size $Q_i$ within each tier $i$ for $\forall i\in\mathcal{N}$.

Given a target SDP $\mathcal{C}^{'}(\mathbf{P_{u}})$,
the communication resource ($\lambda_{j}$ and $S_{j}$) and the caching resource ($Q_{j}$) of all the tiers without the $i$-th tier, i.e., $\forall{j}\in\mathcal{N}$ and $j\neq i$, we can obtain $Q_{e}$ based on (\ref{C^{'}_{1e}}), thereby gaining the tradeoffs of the $i$-th tier's cache size $Q_{i}$, BS density $\lambda_{i}$ and transmit power $S_{i}$, as illustrated in the following theorem.
\begin{theorem} \label{proposition5}
 With the uniform caching strategy, given a target $\mathcal{C}^{'}(\mathbf{P_{u}})$ determined by $\mathcal{C}^{'}_{1}(\mathbf{P_{1,u}})$, and the fixed values $\lambda_{j}$, $S_{j}$, $Q_{j}$ for $\forall{j}\in\mathcal{N}$ and $j\neq i$, the network parameters $Q_{i}$, $\lambda_{i}$, $S_{i}$ satisfy the following tradeoffs
\begin{align}
\lambda_{i}&=\frac{K_{1}}{Q_{i}-Q_{e}},~\text{for given} ~S_{i},~\forall{i}\in{\mathcal{N}},\label{tradeoff1}\\
S_{i}&=(\frac{K_{2}}{Q_{i}-Q_{e}})^{\frac{\beta}{2}},~\text{for given} ~\lambda_{i}, \forall{i}\in{\mathcal{N}},\label{tradeoff2}
\end{align}
where
\begin{align}
 K_{1}&=\sum_{j=1,\ne{i}}^{N}{\lambda_{j}{(\frac{S_{j}}{S_{i}})}^{\frac{2}{\beta}}(Q_{e}-Q_{j})}
, \label{K_{2}}\\ K_{2}&=\sum_{j=1,\ne{i}}^{N}(\frac{\lambda_{j}}{\lambda_{i}}){S_{j}}^{\frac{2}{\beta}}(Q_{e}-Q_{j})
.\label{K_{33}}
\end{align}
\end{theorem}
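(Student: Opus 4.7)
The plan is to leverage Proposition~\ref{proposition_4}, which tells us that under the uniform caching strategy the SDP $\mathcal{C}^{'}(\mathbf{P_{u}})$ depends on all network parameters only through the scalar equivalent cache size $Q_{e}=\frac{\sum_{i=1}^{N}\lambda_{i}S_{i}^{2/\beta}Q_{i}}{\sum_{i=1}^{N}\lambda_{i}S_{i}^{2/\beta}}$, and this dependence is strictly monotonic (Lemma~\ref{lemma5}). Consequently, fixing a target $\mathcal{C}^{'}(\mathbf{P_{u}})$ is equivalent to fixing a specific value of $Q_{e}$. The tradeoff problem therefore reduces to a purely algebraic question: for fixed $Q_{e}$ and fixed $\{\lambda_{j},S_{j},Q_{j}\}_{j\neq i}$, solve the defining equation of $Q_{e}$ for the unknown parameter of tier $i$.

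First, I would start from the relation $Q_{e}\sum_{j=1}^{N}\lambda_{j}S_{j}^{2/\beta}=\sum_{j=1}^{N}\lambda_{j}S_{j}^{2/\beta}Q_{j}$ and regroup it into the symmetric identity
\begin{equation}
\sum_{j=1}^{N}\lambda_{j}S_{j}^{2/\beta}(Q_{j}-Q_{e})=0.\nonumber
\end{equation}
Isolating the $i$-th term yields
\begin{equation}
\lambda_{i}S_{i}^{2/\beta}(Q_{i}-Q_{e})=\sum_{j=1,\neq i}^{N}\lambda_{j}S_{j}^{2/\beta}(Q_{e}-Q_{j}).\nonumber
\end{equation}
This single equation is the backbone of the entire theorem: every tradeoff claimed is just a different way of solving it.

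Next I would solve this identity once for $\lambda_{i}$ (holding $S_{i}$ fixed) and once for $S_{i}^{2/\beta}$ (holding $\lambda_{i}$ fixed). In the first case, dividing through by $S_{i}^{2/\beta}(Q_{i}-Q_{e})$ directly produces $\lambda_{i}=K_{1}/(Q_{i}-Q_{e})$ with $K_{1}$ exactly as in \eqref{K_{2}}, since the factor $S_{i}^{-2/\beta}$ combines with $S_{j}^{2/\beta}$ to form $(S_{j}/S_{i})^{2/\beta}$. In the second case, dividing through by $\lambda_{i}(Q_{i}-Q_{e})$ yields $S_{i}^{2/\beta}=K_{2}/(Q_{i}-Q_{e})$ with $K_{2}$ as in \eqref{K_{33}}, and raising to the power $\beta/2$ gives \eqref{tradeoff2}.

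There is no hard step here; the only subtlety worth noting explicitly is that $K_{1}$ and $K_{2}$ are genuinely independent of both $\lambda_{i}$ (resp.\ $S_{i}$) and $Q_{i}$, because the fixed-SDP assumption fixes $Q_{e}$ and the other tiers' parameters are fixed by hypothesis. I would also briefly verify sign consistency: $Q_{i}-Q_{e}$ and $\sum_{j\neq i}\lambda_{j}S_{j}^{2/\beta}(Q_{e}-Q_{j})$ must share the same sign (both vanish, or both are nonzero of the same sign) for the positivity of $\lambda_{i}$ and $S_{i}$; this is automatic from the original balance equation, and it matches the regime distinction $Q_{i}\gtrless Q_{e}$ that drives the qualitative behaviour discussed in Corollary~\ref{corollary7}. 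With these observations in place the theorem follows immediately.
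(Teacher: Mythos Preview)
Your proposal is correct and follows essentially the same approach as the paper: both arguments observe that fixing the target SDP under uniform caching fixes $Q_{e}$, then algebraically rearrange the defining relation $Q_{e}\sum_{j}\lambda_{j}S_{j}^{2/\beta}=\sum_{j}\lambda_{j}S_{j}^{2/\beta}Q_{j}$ to isolate $\lambda_{i}S_{i}^{2/\beta}(Q_{i}-Q_{e})$ and solve for $\lambda_{i}$ or $S_{i}$. Your additional remarks on sign consistency and the independence of $K_{1},K_{2}$ from the tier-$i$ unknowns are helpful clarifications but do not change the argument's substance.
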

\begin{proof}
Please refer to Appendix \ref{proofpro5}.
\end{proof}
\textbf{Interestingly, we can observe from Theorem \ref{proposition5} that $\lambda_{i}$ is inversely proportional to $Q_{i}$, while $S_{i}$ is a power function of $Q_{i}$ with a negative exponent ($-\beta/2$)}. Accordingly, it is natural to ask that if the BS density can be reduced by increasing the BS caching capability. If yes, what is the condition? Thus, we have the following corollary to answer this question.
\begin{corollary} \label{corollary7}
To maintain the same SDP, we have the following results for the $i$-th tier
\begin{itemize}
  \item The BS density $\lambda_{i}$ and transmit power $S_{i}$ decrease with the cache size $Q_{i}$, i.e., $K_{1}>0$ and $K_{2}>0$,, when $Q_{i}\in(Q_{e},\infty)$.
  \item The BS density $\lambda_{i}$ and transmit power $S_{i}$ increase with the cache size $Q_{i}$, i.e., $K_{1}<0$ and $K_{2}<0$, when $Q_{i}\in[0, Q_{e})$.
\end{itemize}

\end{corollary}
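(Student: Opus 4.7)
The plan is to exploit the definition of $Q_e$ to turn $K_1$ and $K_2$ into simple multiples of $(Q_i - Q_e)$, so that both their sign and the resulting monotonicity of $\lambda_i$ and $S_i$ in $Q_i$ fall out immediately. The first step is to note that by Proposition \ref{proposition_4}, maintaining a target $\mathcal{C}^{'}(\mathbf{P_u})$ is equivalent to keeping $Q_e$ fixed, since $\mathcal{C}^{'}_{1}(\mathbf{P_{1,u}})$ is a strictly increasing function of $Q_e$. Throughout, $\lambda_j, S_j, Q_j$ for $j\ne i$ are also held fixed, so $Q_e$ can be treated as a constant of the problem.

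Next, I would rewrite the definition $Q_e = \frac{\sum_{l=1}^N \lambda_l S_l^{2/\beta} Q_l}{\sum_{l=1}^N \lambda_l S_l^{2/\beta}}$ as the balance identity
\begin{equation}
\sum_{l=1}^{N}\lambda_{l}S_{l}^{2/\beta}(Q_{e}-Q_{l})=0,
\end{equation}
and split off the $i$-th summand to obtain
\begin{equation}
\sum_{j=1,\ne i}^{N}\lambda_{j}S_{j}^{2/\beta}(Q_{e}-Q_{j})=\lambda_{i}S_{i}^{2/\beta}(Q_{i}-Q_{e}).
\end{equation}
Substituting this into the definitions (\ref{K_{2}}) and (\ref{K_{33}}) then gives the clean identities $K_{1}=\lambda_{i}(Q_{i}-Q_{e})$ and $K_{2}=S_{i}^{2/\beta}(Q_{i}-Q_{e})$. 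Since $\lambda_{i}>0$ and $S_{i}>0$, the signs of $K_{1}$ and $K_{2}$ coincide with the sign of $Q_{i}-Q_{e}$, which already proves the two sign statements of the corollary.

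Finally, for the monotonicity claims I would differentiate the tradeoff relations from Theorem \ref{proposition5} directly, treating $K_1$ and $K_2$ as the constants determined above. From $\lambda_i = K_1/(Q_i - Q_e)$,
\begin{equation}
\frac{d\lambda_{i}}{dQ_{i}}=-\frac{K_{1}}{(Q_{i}-Q_{e})^{2}},
\end{equation}
which is negative when $K_{1}>0$ (i.e.\ $Q_i>Q_e$) and positive when $K_{1}<0$ (i.e.\ $Q_i<Q_e$). An analogous computation for $S_i=(K_2/(Q_i-Q_e))^{\beta/2}$, after checking that $K_2/(Q_i-Q_e)>0$ in both regimes so the fractional power is well defined, yields $\mathrm{sgn}(dS_i/dQ_i)=-\mathrm{sgn}(K_2)$, giving the remaining two claims.

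I do not expect a genuine obstacle here: the whole argument is essentially the algebraic identity $\sum_l \lambda_l S_l^{2/\beta}(Q_e-Q_l)=0$ combined with the sign of a reciprocal. The only point that needs a bit of care is the second regime $Q_i\in[0,Q_e)$, where $K_2<0$ and $Q_i-Q_e<0$, so one must verify that their ratio stays positive before raising it to the power $\beta/2$; this is immediate but worth remarking on to keep the exponentiation meaningful.
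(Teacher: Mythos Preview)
Your argument is correct and follows the same underlying idea as the paper, namely exploiting the definition of $Q_e$ to tie the signs of $K_1,K_2$ to that of $Q_i-Q_e$. The execution, however, is noticeably cleaner: the paper's proof in Appendix~\ref{proofcorollary7} compares $Q_e$ with the weighted average $\frac{\sum_{j\ne i}\lambda_j S_j^{2/\beta}Q_j}{\sum_{j\ne i}\lambda_j S_j^{2/\beta}}$ and then argues, via the full expression for $Q_e$, that this comparison is equivalent to $Q_i\gtrless Q_e$. Your balance identity $\sum_l \lambda_l S_l^{2/\beta}(Q_e-Q_l)=0$ collapses these two steps into the closed forms $K_1=\lambda_i(Q_i-Q_e)$ and $K_2=S_i^{2/\beta}(Q_i-Q_e)$, from which both the sign and the monotonicity drop out immediately. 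The paper's route has the minor advantage of making the threshold $\frac{\sum_{j\ne i}\lambda_j S_j^{2/\beta}Q_j}{\sum_{j\ne i}\lambda_j S_j^{2/\beta}}$ explicit (which links the result to Lemma~\ref{lemma4}); your route is shorter and makes the structure $K_1/(Q_i-Q_e)=\lambda_i$ transparent. Your remark that $K_2/(Q_i-Q_e)>0$ in both regimes, needed to keep the $\beta/2$ power well defined, is a point the paper glosses over.
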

\begin{proof}
Please refer to Appendix \ref{proofcorollary7}.
\end{proof}
It is worth mentioning that \emph{increasing the BS caching capability of one tier does not always reduce the BS density or transmit power of this tier to achieve the same performance in a multi-tier NetHet}, in contrast to single-tier caching networks \cite{bacstug2015cache}. Taking $N=2$ for example, the tradeoff between $\lambda_{1}$ and $Q_{1}$ is shown in Fig. \ref{trade1}. The target SDP determined by $\mathcal{C}^{'}_{1}(\mathbf{P_{1,u}})$ is $0.18$. Note that if a tier is deployed with larger cache size ($Q_{1}>Q_{e}=20$), the more the BS cache size ($Q_{1}$) increases, the less the BS density ($\lambda_{1}$) becomes, in order to maintain the same $\mathcal{C}^{'}(\mathbf{P_{u}})$. However, if the tier is deployed with small cache size ($Q_{1}<Q_{e}=20$), increasing both the BS cache size and density can keep the same performance, as show in Fig. \ref{trade1}. This is because the increase in BS density of the tier with small cache size only improves its own tier association probability, and also causes the stronger interference to other tiers with larger cache sizes. Therefore, more users are associated with the tier with small cache size, and the SDP decreases because this tier can only serve fewer content requests. 
The tradeoff between $S_{1}$ and $Q_{1}$ shown in Fig. \ref{trade2} is similar to that of $\lambda_{1}$ and $Q_{1}$.
\subsubsection{Tradeoffs of different tier parameters}
Similarly, we can find the tradeoffs of the parameters in different tiers, as specified in Theorem \ref{proposition6}.
\begin{theorem}\label{proposition6}
 When $Q_{i}$, $\lambda_{j}$, and $S_{j}$ ($i,~j\in{\mathcal{N}},~i\ne{j}$) satisfy the following tradeoffs for any two different tiers, $\mathcal{C}^{'}(\mathbf{P_{u}})$ remains constant.
\begin{align}
\lambda_{j}&=\frac{K_{3}-K_{4}Q_{i}}{K_{5}}, \label{tradeoff3} \\
S_{j}&=(\frac{K_{3}-K_{4}Q_{i}}{K_{6}})^{\frac{\beta}{2}}, \label{tradeoff4}
\end{align}
for $\forall{i,j}\in{\mathcal{N}},~i\ne{j}$, where
\begin{align}
K_{3}&=Q_{e}\sum_{k=1,\ne{j}}^{N}\lambda_{k}{S_{k}}^{\frac{2}{\beta}}-{\sum_{k=1,\ne{i,j}}^{N}\lambda_{k}{S_{k}}^{\frac{2}{\beta}}Q_{k}},\label{K_{3}}\\
K_{4}&=\lambda_{i}S_{i}^{\frac{2}{\beta}}, \label{K_{4}}\\
K_{5}&=S_{j}^{\frac{2}{\beta}}(Q_{j}-Q_{e}),\label{K_{5}}\\
K_{6}&=\lambda_{j}(Q_{j}-Q_{e}).~~~~~~~~~~~~~~~~~~~~~\label{K_{6}}
\end{align}
\end{theorem}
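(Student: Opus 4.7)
The plan is to reduce everything to a simple algebraic inversion of the formula for the equivalent cache size $Q_e$. By Proposition \ref{proposition_4}, under the uniform caching strategy the SDP
$\mathcal{C}^{'}(\mathbf{P_{u}})$ depends on the heterogeneous parameters $\{\lambda_k, S_k, Q_k\}_{k\in\mathcal{N}}$ only through the single scalar
$Q_e = \frac{\sum_{k=1}^{N}\lambda_k S_k^{2/\beta} Q_k}{\sum_{k=1}^{N}\lambda_k S_k^{2/\beta}}$,
and the map $Q_e \mapsto \mathcal{C}^{'}_1(\mathbf{P_{1,u}})$ in (\ref{C^{'}_{1e}}) is strictly monotone. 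Hence holding $\mathcal{C}^{'}(\mathbf{P_{u}})$ constant is equivalent to holding $Q_e$ constant, and the theorem reduces to solving a single linear equation in $Q_e$ for the unknown in question.

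First I would rewrite the $Q_e$ identity as
\begin{equation*}
Q_e\sum_{k=1}^{N}\lambda_k S_k^{2/\beta} \;=\; \sum_{k=1}^{N}\lambda_k S_k^{2/\beta} Q_k,
\end{equation*}
and split the right-hand side into the three disjoint groups of indices $\{j\}$, $\{i\}$, and $\{k:k\ne i,j\}$; do the same on the left. Grouping the $\lambda_j S_j^{2/\beta}$ factors together yields
\begin{equation*}
\lambda_j S_j^{2/\beta}(Q_j-Q_e)\;=\;\lambda_i S_i^{2/\beta}(Q_e-Q_i)+\sum_{k\ne i,j}\lambda_k S_k^{2/\beta}(Q_e-Q_k).
\end{equation*}
The right-hand side is precisely $K_3-K_4 Q_i$ once one expands the definition $K_3=Q_e\sum_{k\ne j}\lambda_k S_k^{2/\beta}-\sum_{k\ne i,j}\lambda_k S_k^{2/\beta}Q_k$ and collects like terms.

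From here the two tradeoffs drop out immediately. Solving for $\lambda_j$ with $S_j$ held fixed divides both sides by $S_j^{2/\beta}(Q_j-Q_e)=K_5$, yielding (\ref{tradeoff3}); solving for $S_j^{2/\beta}$ with $\lambda_j$ held fixed divides both sides by $\lambda_j(Q_j-Q_e)=K_6$ and then raises both sides to the power $\beta/2$, yielding (\ref{tradeoff4}). I would note in passing that $K_3,K_4,K_5,K_6$ are all independent of $\lambda_j,S_j,Q_i$, so the formulas really describe how the unknown varies with $Q_i$ alone. As a sanity check one can verify the boundary case $Q_i=Q_e$: the right-hand sides reduce to the residual sum over $k\ne i,j$, matching what one would obtain by deleting the $i$-th tier from the $Q_e$ identity.

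There is no serious obstacle here — the entire proof is a rearrangement, and the fact that $\mathcal{C}^{'}(\mathbf{P_{u}})$ is a function of $Q_e$ alone (Proposition \ref{proposition_4}) does all the conceptual work. The only point to be careful about is the sign convention in the definition of $K_5$ and $K_6$, which both carry the factor $(Q_j-Q_e)$ rather than $(Q_e-Q_j)$, so one must track the sign flip when moving the $\lambda_j S_j^{2/\beta}$ term from the right-hand side of the original identity to the left. This sign choice is what gives the inverse-type dependence of $\lambda_j$ on $Q_i$ reported in the statement, and it also foreshadows the case distinction (analogous to Corollary \ref{corollary7}) governing whether $\lambda_j$ or $S_j$ should be increased or decreased as $Q_i$ grows.
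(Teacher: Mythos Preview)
Your proposal is correct and follows essentially the same route as the paper: both invoke Proposition~\ref{proposition_4} to reduce ``$\mathcal{C}^{'}(\mathbf{P_{u}})$ constant'' to ``$Q_e$ constant,'' then rearrange the defining identity for $Q_e$ into the form $\lambda_j S_j^{2/\beta}(Q_j-Q_e)=K_3-K_4 Q_i$ and divide through by $K_5$ or $K_6$. One small slip in your closing commentary: the dependence of $\lambda_j$ on $Q_i$ here is \emph{linear}, not ``inverse-type'' (that was Theorem~\ref{proposition5}); also $K_5$ and $K_6$ do involve $S_j$ and $\lambda_j$ respectively, but since each is held fixed in the corresponding formula this does not affect the argument.
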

\begin{proof}
When $\mathcal{C}_{1}^{'}(\mathbf{P_{1,u}})$ keeps constant, $Q_e$ will be a fixed value. We consider the tradeoffs between the $j$-th tier parameters $\lambda_{j}$, $S_{j}$ and the $i$-th tier parameter $Q_{i}$ assuming that other parameters are constants. Then according to $Q_{e}=\frac{\sum_{i=1}^{N}\lambda_{i}{S_{i}}^{\frac{2}{\beta}}Q_{i}}{\sum_{i=1}^{N}\lambda_{i}{S_{i}}^{\frac{2}{\beta}}}$, we have
\begin{align}
Q_{e}\!\!\!\sum_{k=1,\ne{j}}^{N}\!\!\!\lambda_{k}{S_{k}}^{\frac{\!2\!}{\!\beta\!}\!}-\!\!{\!\!\!\sum_{k=1,\ne{i,j}}^{N}\!\!\!\lambda_{k}{S_{k}}^
{\frac{\!2\!}{\!\beta}\!}Q_{k}}\!=\!\lambda_{i}S_{i}^{\frac{2}{\beta}}\!Q_{i}\!+\!\lambda_{j}{S_{j}}^{\frac{2}{\beta}}\!(Q_{j}\!-\!Q_{e}),\label{T1}
\end{align}
Substituting (\ref{K_{3}}), (\ref{K_{4}}), (\ref{K_{5}}), (\ref{K_{6}}) into (\ref{T1}), we thus have (\ref{tradeoff3}), (\ref{tradeoff4}).
\end{proof}
\begin{figure}
 \vspace{-2.8mm}
\begin{minipage}[t]{0.5\textwidth}
\centering
\includegraphics[width=3.0in, height=2.2in]{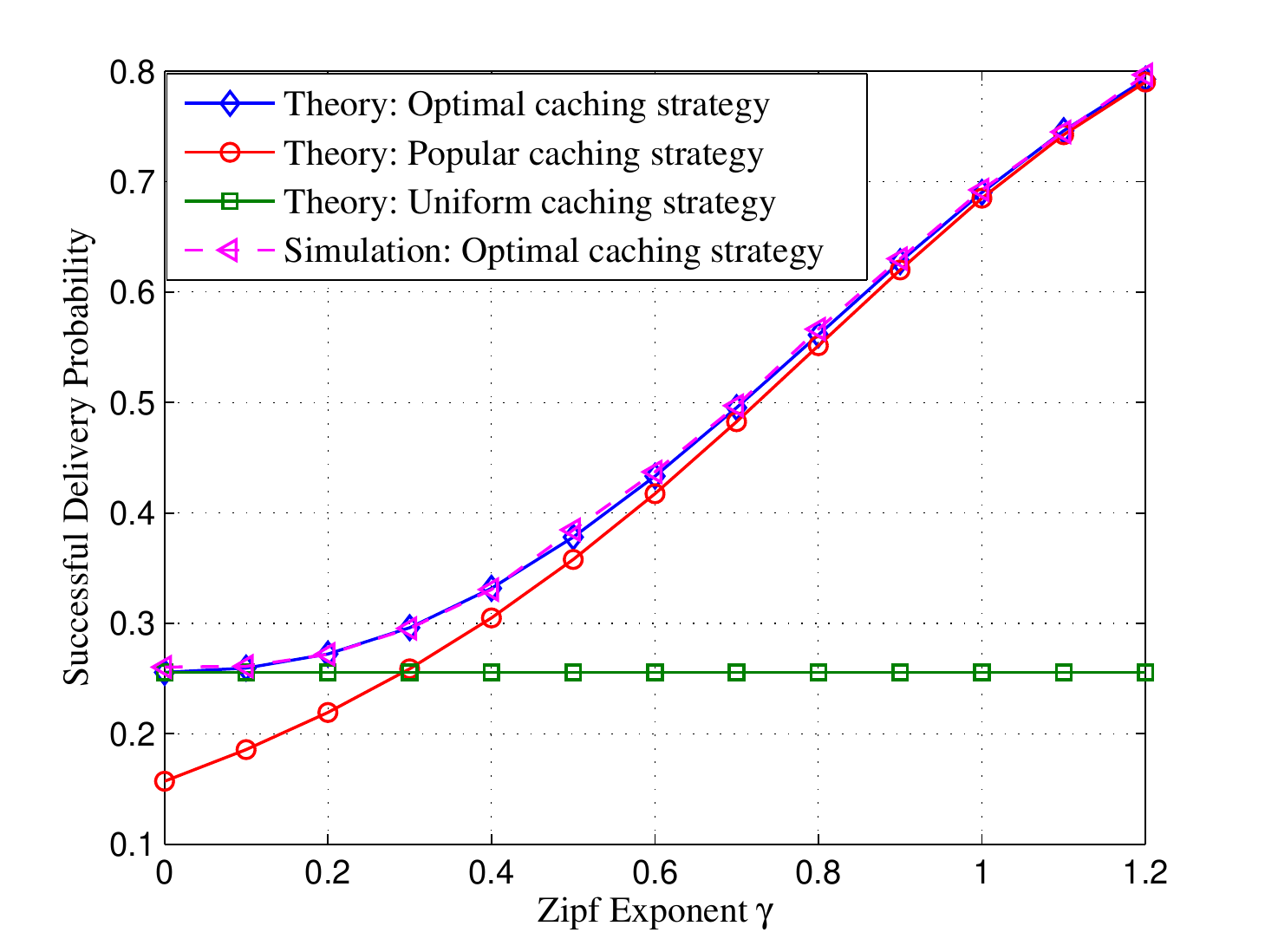}
\vspace{-4.5mm}
\caption{SDP with different caching strategies. Here, we set $M=1000$ and $\{Q_{1},Q_{2}\}$=$\{200, 50\}$.}
\label{3compare}
 \end{minipage}
 \hspace{0.05in}
 \begin{minipage}[t]{0.5\textwidth}
 \centering
\includegraphics[width=3.0in, height=2.2in]{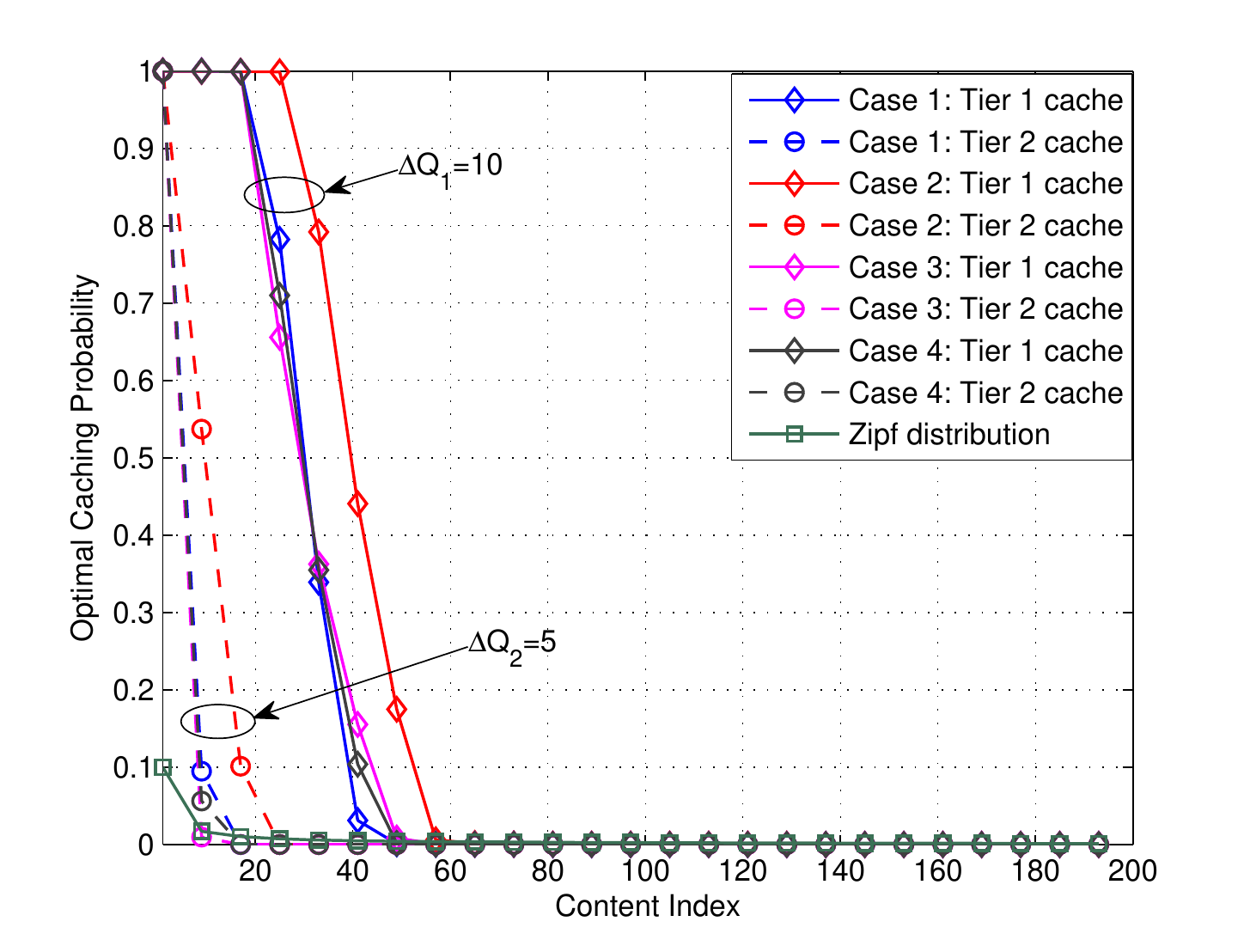}
\vspace{-4.5mm}
\caption{The optimal caching probability $p_{ij}^{*}$ for different storage sizes and transmit powers.}
\label{probability}
 \end{minipage}
  \vspace{-6mm}
\end{figure}
\begin{figure*}
\centering
\subfigure[Cache size $Q_{1}$. Here, we use $\{\lambda_{1}, \lambda_{2}\}=\{\frac{1}{\pi{500^2}}$, $\frac{5}{\pi{500^2}}$\} or \{$\frac{1}{\pi{500^2}}$, $\frac{10}{\pi{500^2}}\}$, $\{S_{1}$, $S_{2}\}=\{$43 dBm, 33 dBm\} or \{53 dBm, 33 dBm$\}$.]{
\label{mathbf{Q}}
  \includegraphics[width=3.2in, height=2.4in]{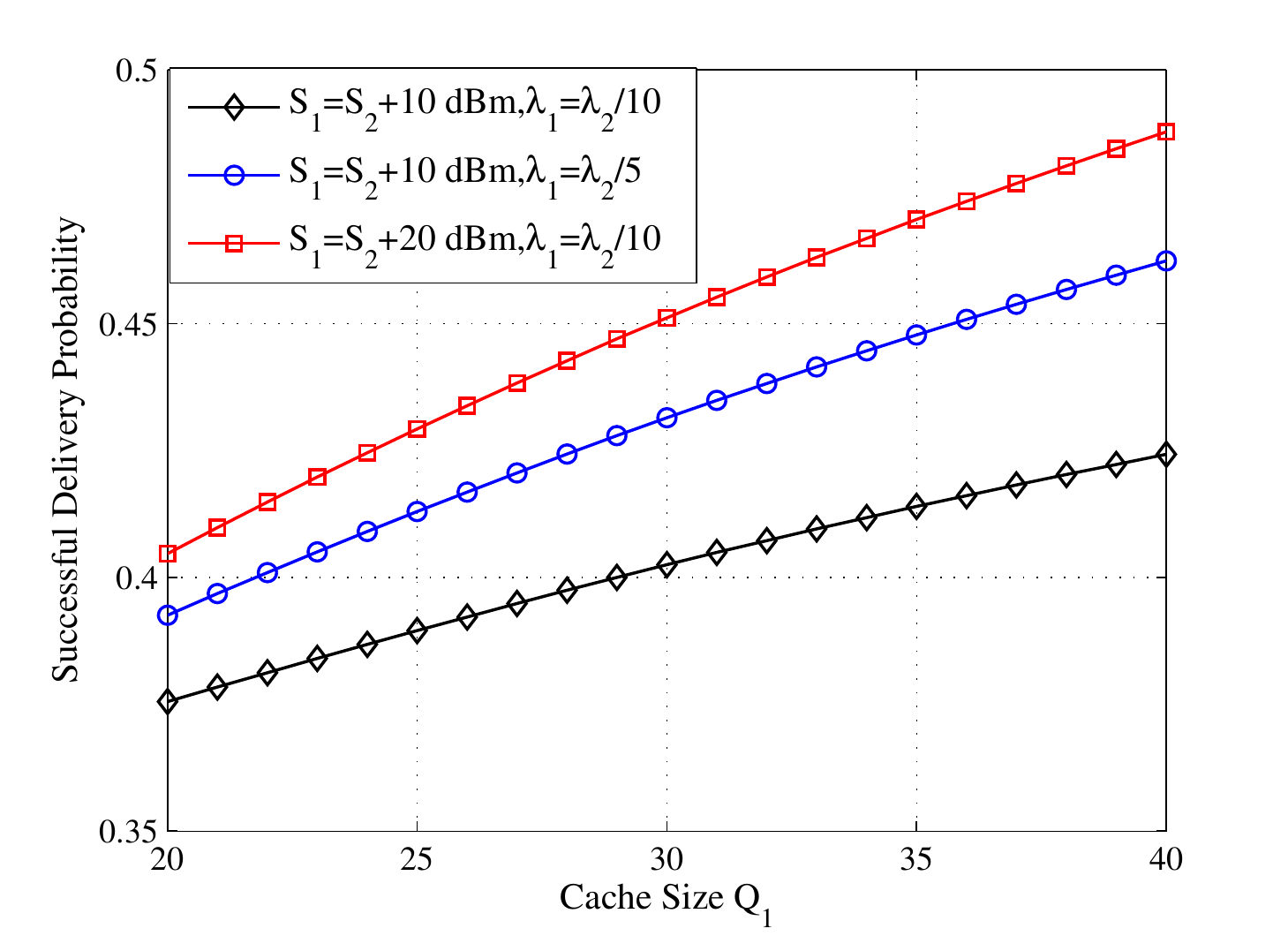}}
  \vspace{-2mm}
 \hspace{-0.2in}
 \subfigure[Densities $\{\lambda_{1}$, $\lambda_{2}\}$.]{
\label{lambda_{1}}
  \includegraphics[width=3.2in, height=2.4in]{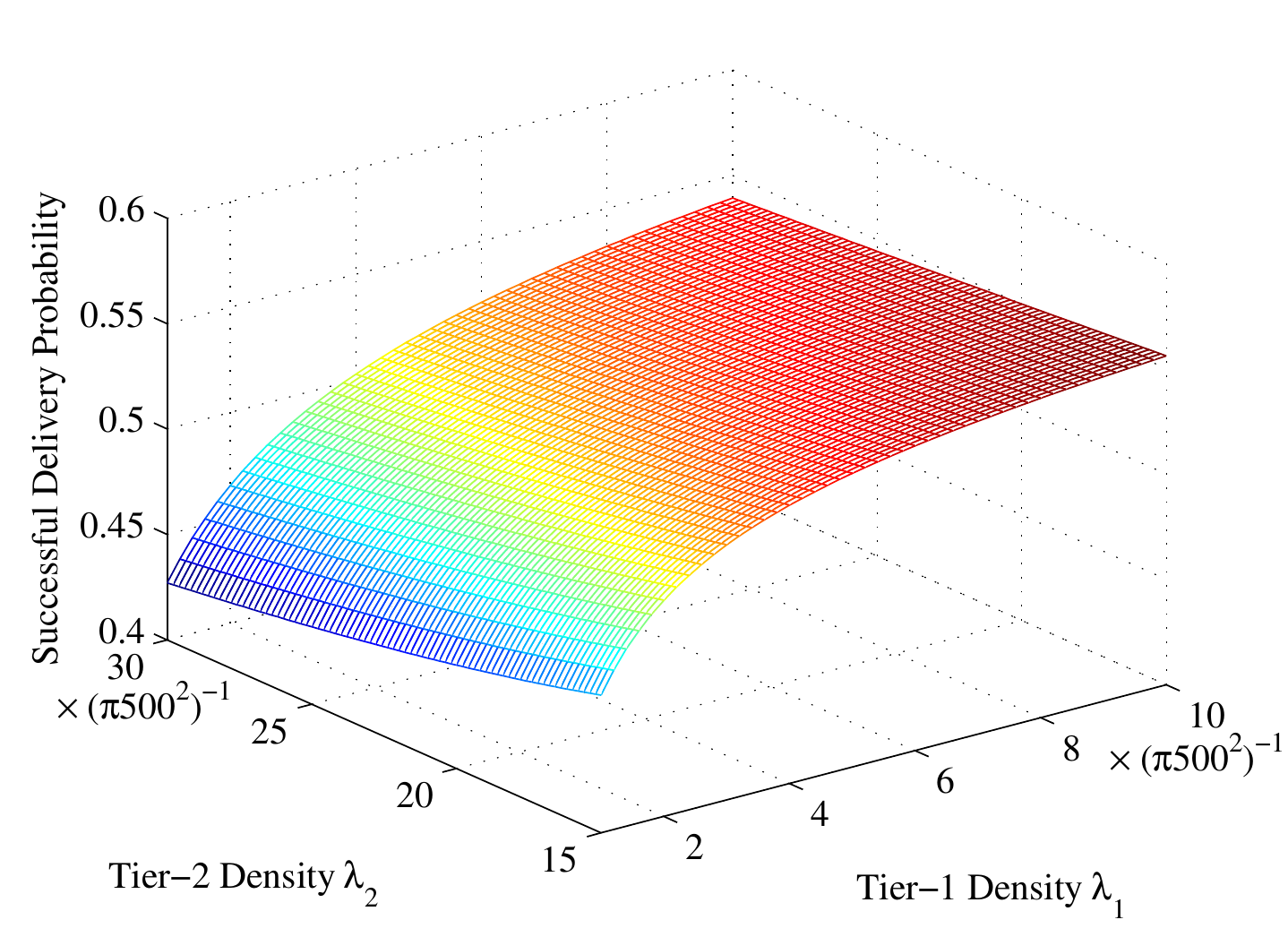}}
  \vspace{-2mm}
 \hspace{-0.1in}
 \subfigure[Tier-1 transmit power $S_{1}$. Here, we set $M=1000$.]{
\label{S_{1}} 
  \includegraphics[width=3.2in, height=2.4in]{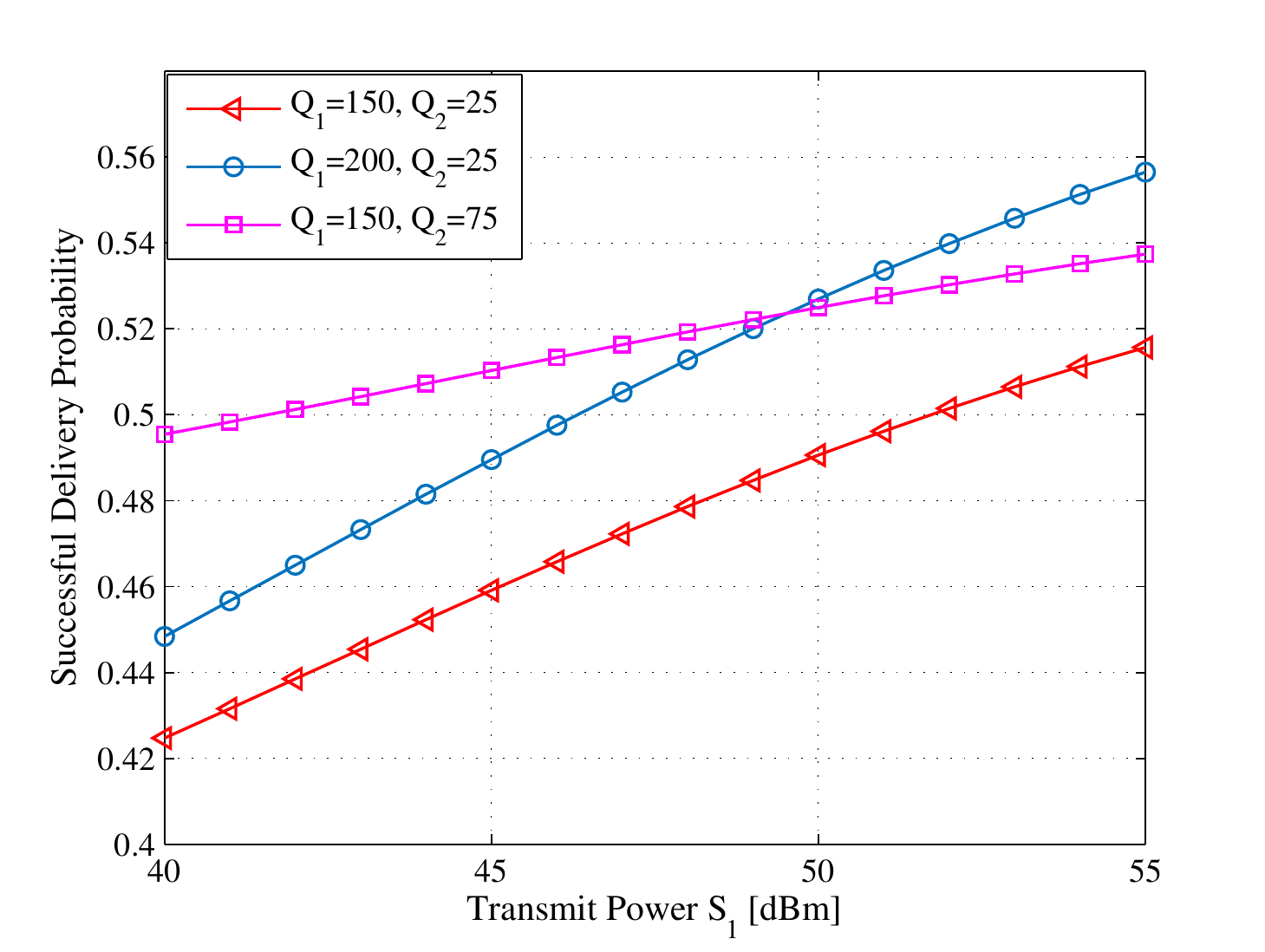}}
 \hspace{-0.2in}
 \subfigure[Tier-2 transmit power $S_{2}$.]{
   \label{S_{2}} 
    \includegraphics[width=3.2in, height=2.4in]{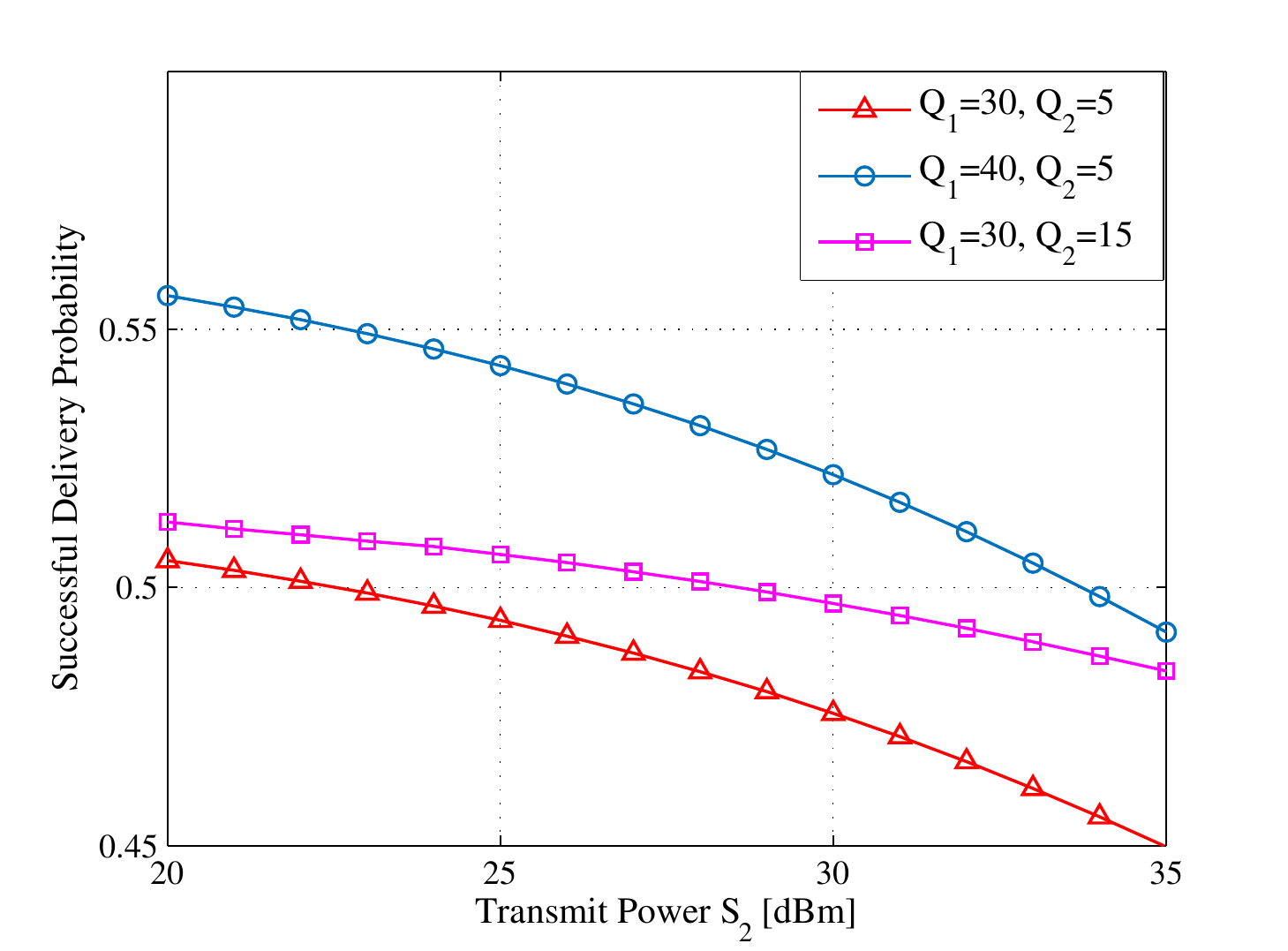}}
    \vspace{-1mm}
\caption{The impacts of the BS cache size $Q_{1}$, densities $\{\lambda_{1}$, $\lambda_{2}\}$, transmit powers $\{S_{1}$, $S_{2}\}$ on the maximum SDP $\mathcal{C}^{'*}$.}
\vspace{-4mm}
\end{figure*}
Different than Theorem \ref{proposition5}, \textbf{the BS density $\lambda_{j}$ is a linear function of the BS cache size $Q_{i}$, and the BS transmit power $S_{j}$ is a power function of $Q_{i}$ with a positive exponent $\frac{\beta}{2}$, where $i\ne j$}. Then, the function changing law follows the following corollary.
\begin{corollary} \label{coro8}
For different tiers, we have
\begin{itemize}
  \item When $Q_{j}>Q_{e}$, we have $K_{5}>0$, $K_{6}>0$ and $K_{3}>0$, i.e., $\lambda_{j}$ and $S_{j}$ decrease with $Q_{i}$ where $Q_{i}\in[0,~\frac{K_{3}}{K_{4}}]$;
  \item When $Q_{j}<Q_{e}$ and $Q_{e}>\frac{\sum_{k=1,\ne{i,j}}^{N}\lambda_{k}{S_{k}}^{\frac{2}{\beta}}Q_{k}}{\sum_{k=1,\ne{j}}^{N}\lambda_{k}{S_{k}}^{\frac{2}{\beta}}}$, 
we have $K_{5}<0$, $K_{6}<0$ and $K_{3}>0$, i.e., $\lambda_{j}$ and $S_{j}$ increase with $Q_{i}$ where $Q_{i}\in[\frac{K_{3}}{K_{4}}, +\infty)$;
\item When 
    $Q_{e}<\frac{\sum_{k=1,\ne{i,j}}^{N}\lambda_{k}{S_{k}}^{\frac{2}{\beta}}Q_{k}}{\sum_{k=1,\ne{j}}^{N}\lambda_{k}{S_{k}}^{\frac{2}{\beta}}}$
    , we have $K_{5}<0$, $K_{6}<0$ and $K_{3}<0$, i.e., $\lambda_{j}$ and $S_{j}$ increase with $Q_{i}$ where $Q_{i}\in[0,+\infty)$.
\end{itemize}
\end{corollary}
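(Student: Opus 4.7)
The plan is to derive Corollary~\ref{coro8} by a direct sign analysis of the constants $K_{3}, K_{4}, K_{5}, K_{6}$ appearing in the tradeoff equations (\ref{tradeoff3})--(\ref{tradeoff4}) of Theorem~\ref{proposition6}. In each of the three cases the monotonicity of $\lambda_{j}$ and $S_{j}$ in $Q_{i}$ will follow by differentiating those two closed-form expressions, while the admissible range of $Q_{i}$ will be the set on which the right-hand sides remain positive.

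First I would record the elementary sign facts: (i) $K_{4}=\lambda_{i}S_{i}^{2/\beta}$ is always strictly positive, and (ii) both $K_{5}=S_{j}^{2/\beta}(Q_{j}-Q_{e})$ and $K_{6}=\lambda_{j}(Q_{j}-Q_{e})$ carry the sign of $Q_{j}-Q_{e}$. Next I would connect the threshold appearing in Cases~2 and~3 to the sign of $K_{3}$; directly from (\ref{K_{3}}),
\[
K_{3}>0 \iff Q_{e} > \frac{\sum_{k\ne i,j}\lambda_{k}S_{k}^{2/\beta}Q_{k}}{\sum_{k\ne j}\lambda_{k}S_{k}^{2/\beta}}.
\]
The key step is then to rewrite $K_{3}$ using the defining identity $Q_{e}\sum_{k}\lambda_{k}S_{k}^{2/\beta}=\sum_{k}\lambda_{k}S_{k}^{2/\beta}Q_{k}$ to obtain the algebraic identity
\[
K_{3} \;=\; K_{4}Q_{i}+\lambda_{j}K_{5}.
\]
This single identity does two jobs at once: it shows $K_{3}>0$ automatically in Case~1, and it implies $K_{3}-K_{4}Q_{i}=\lambda_{j}K_{5}$, so along any tradeoff curve the numerator appearing in $\lambda_{j}$ (and in the base of $S_{j}$) has precisely the sign of $K_{5}$.

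The three cases of the corollary then match up by comparing signs. In Case~1 ($Q_{j}>Q_{e}$) I would note $K_{5}, K_{6}, K_{3}>0$, so positivity of $\lambda_{j}$ forces $Q_{i}<K_{3}/K_{4}$; the derivative $d\lambda_{j}/dQ_{i}=-K_{4}/K_{5}$ and the analogous expression for $dS_{j}/dQ_{i}$ are both negative, giving the decreasing behaviour on $[0,K_{3}/K_{4}]$. In Case~2 ($Q_{j}<Q_{e}$ with $K_{3}>0$) the signs of $K_{5}$ and $K_{6}$ flip, so positivity now demands $Q_{i}>K_{3}/K_{4}$ and the same derivatives become positive, yielding the increasing behaviour on $[K_{3}/K_{4},\infty)$. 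For Case~3 ($K_{3}<0$), the identity $K_{3}=K_{4}Q_{i}+\lambda_{j}K_{5}$ precludes $K_{5}>0$, which fills in the implicit hypothesis $Q_{j}<Q_{e}$ omitted from the statement; with $K_{5}, K_{6}<0$ and $K_{3}/K_{4}<0$, the positivity constraint $Q_{i}>K_{3}/K_{4}$ is vacuous on $Q_{i}\ge 0$, producing the full range $[0,\infty)$ with $\lambda_{j}$ and $S_{j}$ both increasing.

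The main obstacle is not analytic but bookkeeping: one has to track the signs of four interrelated quantities and verify that the admissible range of $Q_{i}$ claimed in each bullet coincides exactly with the positivity set dictated by those signs. The identity $K_{3}=K_{4}Q_{i}+\lambda_{j}K_{5}$ is what makes the bookkeeping manageable, since it locks the sign of the numerator $K_{3}-K_{4}Q_{i}$ to that of $K_{5}$ and simultaneously supplies the missing $Q_{j}<Q_{e}$ hypothesis in the third case, ruling out any overlap between the cases.
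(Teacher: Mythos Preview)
Your proposal is correct and follows essentially the same sign-analysis strategy as the paper's proof in Appendix~J. The only difference is presentational: the paper re-derives the sign of $K_{3}$ in each case by manipulating the weighted-average definition of $Q_{e}$, whereas you extract the single identity $K_{3}=K_{4}Q_{i}+\lambda_{j}K_{5}$ (which is exactly equation~(\ref{T1}) from the proof of Theorem~\ref{proposition6}) and use it uniformly across all three cases; this is a slightly cleaner packaging of the same argument, and it also makes the implicit hypothesis $Q_{j}<Q_{e}$ in Case~3 explicit, just as the paper does in item~(4) of its proof.
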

\begin{proof}
Please refer to Appendix \ref{proofcorollary8}.
\end{proof}
\begin{remark}
To keep the target SDP, the tradeoffs of the parameters in two different tiers are summarized as two cases: \begin{itemize}
\item For tier $i$ with cache size $Q_{i}>Q_{e}$, the BS density and transmit power of this tier can be reduced when another tier's cache size increases;
\item For tier $i$ with cache size $Q_{i}<Q_{e}$, the BS density and transmit power of this tier need to be increased when another tier's cache size increases.
\end{itemize}
\end{remark}
When increasing the density of the tier with small cache size, the SDP decreases since the tier can only
serve fewer content requests and bring the strong interference to other tiers with larger cache sizes. Thus, the density of such a tier cannot be reduced by increasing another tier's cache size. Inversely, increasing both the density of the tier with small cache size and the cache size of another tier can keep the fixed SDP.
\section{Simulation and Numerical Results}\label{simulationresults}
In this section, simulation and numerical results are provided to evaluate the performance of the proposed caching strategy, and present the impacts and tradeoffs of the network parameters. We consider a two-tier cache-enabled HetNet where the BS densities \{$\lambda_{1}$, $\lambda_{2}$\}=\{$\frac{1}{\pi{500^2}}$, $\frac{5}{\pi{500^2}}$\}, the transmit powers \{$S_{1}$, $S_{2}$\}=\{$53$, $33$\} dBm and the normalized cache sizes $\{Q_{1},~Q_{2}\}$=$\{40,~10\}$. We consider the interference-limited scenario and set the number of contents $M=200$, the path loss exponent $\beta=4$, and the SINR threshold $\tau=-10$ dB. In the simulation, the BSs are deployed according to independent HPPPs in an area of 5000 $m$$\times$5000 $m$, the content popularity distribution is modeled as the Zipf distribution, i.e., the popularity of the $j$-th ranked content is given by $t_{j}={\frac{1/j^{\gamma}}{\sum_{k=1}^{M}1/k^{\gamma}}}$,
where $\gamma\geq 0$ characterizes the skewness of the popularity distribution. The parameter $\gamma$ is set to 0.8 in the simulation. The optimal caching strategy is obtained by using interior point method. These parameters will not change unless specified otherwise.
\subsection{Optimal Caching Strategy}
$\textbf{Comparsion with other caching strategies}$.
To investigate the performance with larger content library, we set $M=1000$ in Fig. \ref{3compare}. It can be seen that the optimal probabilistic caching strategy always outperforms the two considered baselines. The first baseline is popular caching strategy where each BS only caches the most popular contents, and the second is uniform caching strategy where each BS caches each content randomly with equal probabilities. The popular caching strategy can only perform as well as the optimal probabilistic caching with highly skewed content popularity, e.g., $\gamma > 1$. The reason for this is that the majority of user requests only focus on those very few popular contents and the optimal probabilistic caching tends to cache those most popular contents. The performance of uniform caching strategy is not affected by $\gamma$ since it caches each content with equal probabilities. The theoretical results for the proposed optimal caching scheme are also validated by simulation in Fig. \ref{3compare}.

\textbf{Optimal caching probabilities for different network parameters}.
In Fig. \ref{probability}, we consider four cases:
\begin{enumerate}
  \item Case $1$: $\{Q_{1},Q_{2}\}$=\{$30$, $5$\}, \{$\lambda_{1}$, $\lambda_{2}$\}=\{$\frac{1}{\pi{500^2}}$, $\frac{10}{\pi{500^2}}$\}, \{$S_{1}$, $S_{2}$\}=\{$43$, $33$\} dBm;
  \item Case 2: $\{Q_{1},Q_{2}\}$=\{$40$, $10$\}, \{$\lambda_{1}$, $\lambda_{2}$\}=\{$\frac{1}{\pi{500^2}}$, $\frac{10}{\pi{500^2}}$\}, \{$S_{1}$, $S_{2}$\}=\{$43$, $33$\} dBm;
  \item Case $3$: $\{Q_{1},Q_{2}\}$=\{$30$, $5$\}, \{$\lambda_{1}$, $\lambda_{2}$\}=\{$\frac{1}{\pi{500^2}}$, $\frac{5}{\pi{500^2}}$\}, \{$S_{1}$, $S_{2}$\}=\{$43$, $33$\} dBm;
  \item Case $4$: $\{Q_{1},Q_{2}\}$=\{$30$, $5$\}, \{$\lambda_{1}$, $\lambda_{2}$\}=\{$\frac{1}{\pi{500^2}}$, $\frac{5}{\pi{500^2}}$\}, \{$S_{1}$, $S_{2}$\}=\{$53$, $40$\} dBm.
\end{enumerate}
The content popularity based on Zipf's law is also illustrated in Fig. \ref{probability}. It can be seen that the optimal caching probabilities highly depend on the network parameters, e.g., cache size, content popularity, transmit power and BS density. Caching the most popular contents is not always the best solution. Specifically, the optimal caching probability is positively related to the content popularity. The results also show that increasing the BS cache size has a more significant impact on the optimal caching probabilities than increasing the BS density and transmit power.
\subsection{Impacts of network parameters with optimal cache scheme}

\begin{figure}
 \vspace{-2.8mm}
\begin{minipage}[t]{0.5\textwidth}
\centering
\includegraphics[width=3.0in, height=2.1in]{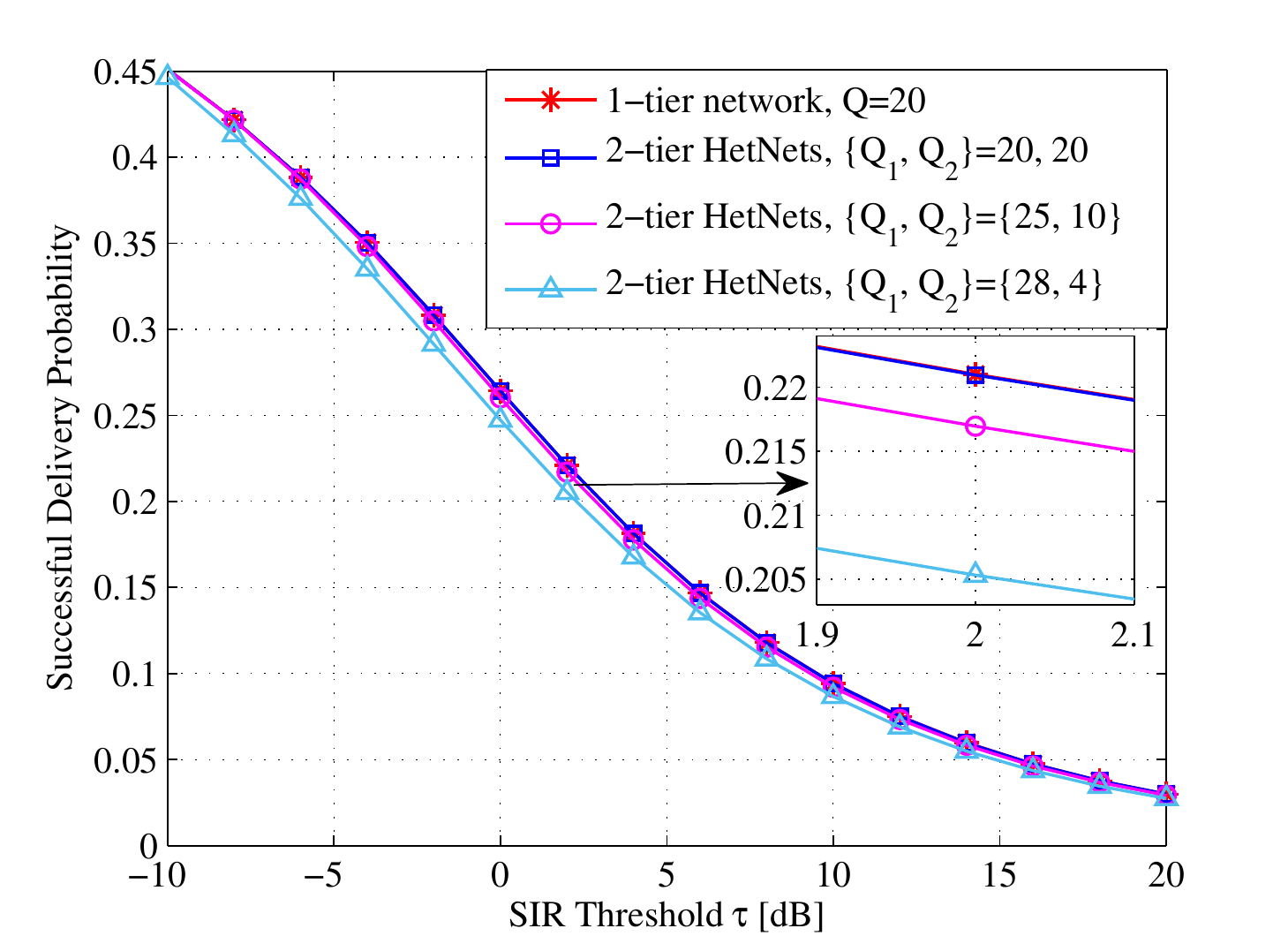}
 \vspace{-3.5mm}
 \caption{Comparison between 1-tier networks and 2-tier HetNets.}
\label{comparison}
 \end{minipage}
 \hspace{0.05in}
 \begin{minipage}[t]{0.5\textwidth}
 \centering
\includegraphics[width=3.0in, height=2.1in]{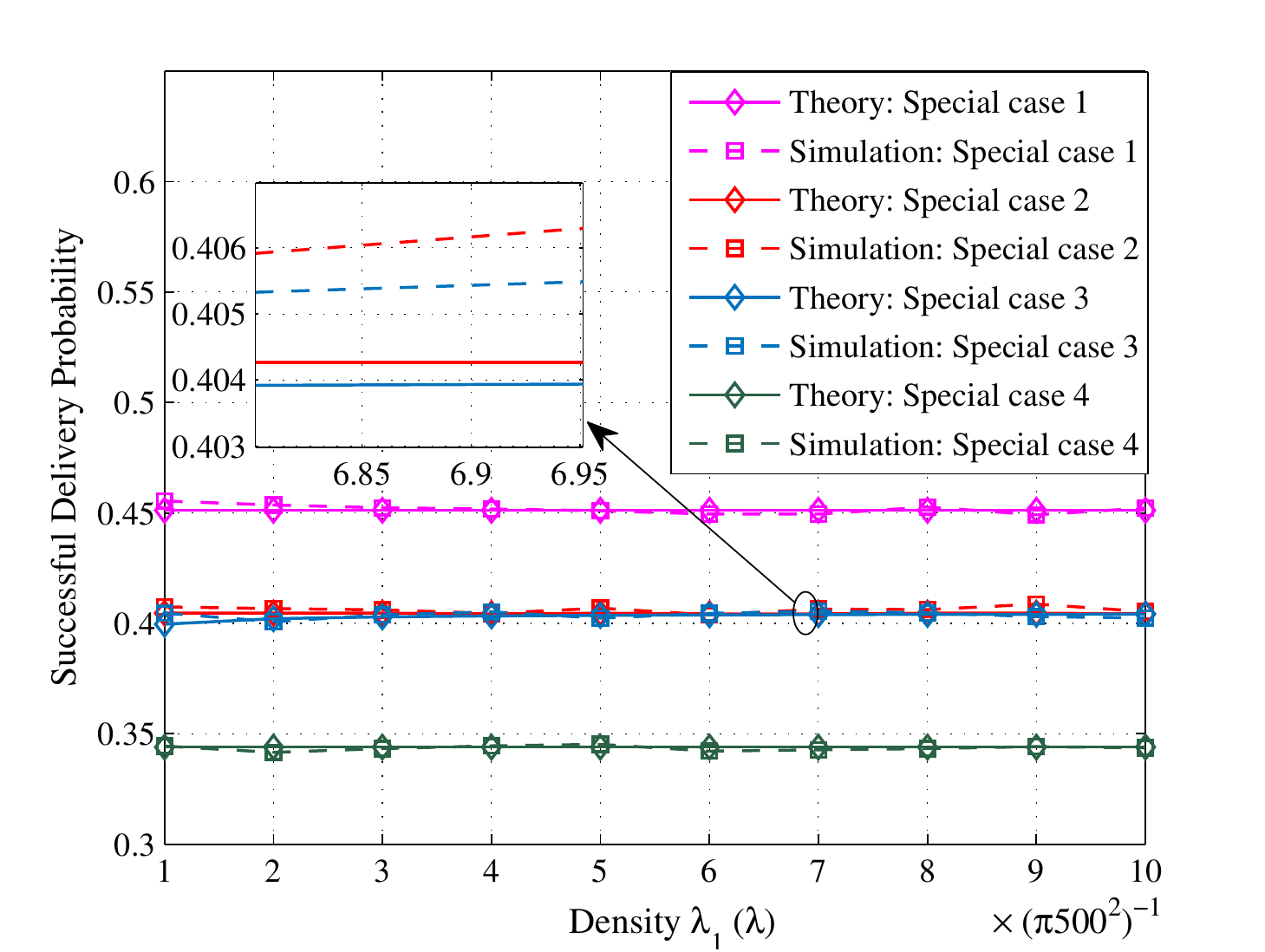}
\vspace{-3.5mm}
\caption{The impacts of tier-$1$ (single-tier network) BS density $\lambda_1$ ($\lambda$) on the maximum SDP $\mathcal{C}^{'*}$ for four special cases.}
\label{comparison2}
 \end{minipage}
  \vspace{-5.5mm}
\end{figure}
$\textbf{Impacts of cache size}$.
Fig. \ref{mathbf{Q}} illustrates that increasing the cache size can greatly improve the maximum SDP, which verifies the validity of Remark \ref{remark_1}. It also shows that increasing $\lambda_{1}$ or $S_{1}$ can further improve the optimal performance. That means more user requests can be served since BSs in the first tier have higher association probability and more cache resource.

$\textbf{Impacts of BS densities and transmit powers}$. It can be observed from Fig. \ref{lambda_{1}} that the optimal SDP decreases with the BS density $\lambda_{2}$ (with $Q_{2}=10$) and increases with $\lambda_{1}$ (with $Q_{1}=40$), which means deploying more BSs with smaller cache size decreases the optimal performance. This is expected because BSs with smaller cache size (e.g., pico or femto) can only provide less service to satisfy user requests while bringing strong interference to other BSs (e.g. macro or relay stations). In addition, the impacts of BS transmit powers are similar to that of BS densities, which are illustrated in Fig. \ref{S_{1}} and Fig. \ref{S_{2}}.
\subsection{The special cases}
$\textbf{Connection between 1-tier networks and $N$-tier HetNets}$. In Fig. \ref{comparison}, we consider four scenarios:

$1)$ 1-tier network with $Q_{e}=20$;

$2)$ 2-tier HetNets with $\{Q_{1},~Q_{2}\}=\{20,~20\}$;

$3)$ 2-tier HetNets with $\{Q_{1},~Q_{2}\}=\{25,~10\}$;

$4)$ 2-tier HetNets with $\{Q_{1},~Q_{2}\}=\{28,~4\}$.\\
In scenarios $2-4$, the two-tier BS cache sizes satisfy $\frac{\sum_{i=1}^{2}\lambda_{i}{S_{i}}^{\frac{2}{\beta}}Q_{i}}{\sum_{i=1}^{2}\lambda_{i}{S_{i}}^{\frac{2}{\beta}}}=Q_{e}=20$.
It can be seen that the optimal performances of the two-tier HetNets considered in scenarios $2-4$ are always upper bounded by that of the single-tier network considered in scenario $1$. In addition, the two-tier HetNet of scenario $2$ achieves the same performance as the single-tier network of scenario $1$ when their cache sizes satisfy $Q_{1}=Q_{2}=Q_{e}$. These results verify our conclusions in Proposition \ref{proposition_3} and Remark \ref{remark_3}.
\begin{figure*}
\vspace{-5mm}
\centering
\subfigure[The tradeoff between $\lambda_{2}$ and $Q_{1}$. Two cases: (1). \{$S_{1}$, $S_{2}$\}=\{43, 33\} dBm (2). \{$S_{1}$, $S_{2}$\}=\{53, 33\} dBm. We fix $\lambda_{1}$=$\frac{1}{\pi{500}^{2}}$ for both cases.]
{\label{trade3} 
\includegraphics[width=3.1in, height=2.3in]{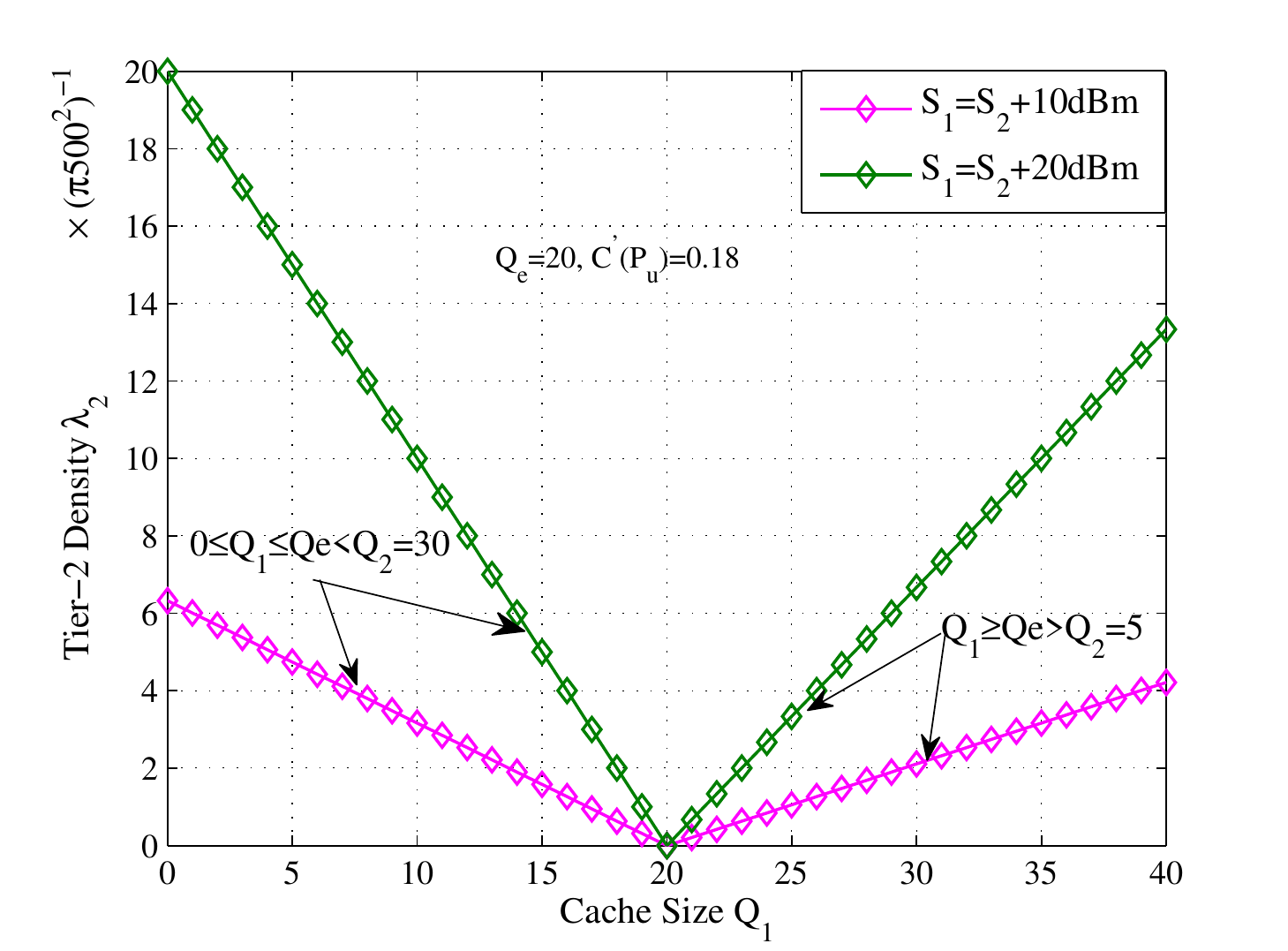}}
\hspace{-0.0in}
\subfigure[The tradeoff between $S_{2}$ and $Q_{1}$. Two cases: (1). \{$\lambda_{1}$, $\lambda_{2}$\}=\{$\frac{2}{\pi{500}^{2}}$, $\frac{10}{\pi{500}^{2}}$\} (2). \{$\lambda_{1}$, $\lambda_{2}$\}=\{$\frac{1}{\pi{500}^{2}}$, $\frac{10}{\pi{500}^{2}}$\}. We fix $S_{1}$=53 dBm for both cases.]
{\label{trade4} 
\includegraphics[width=3.1in, height=2.3in]{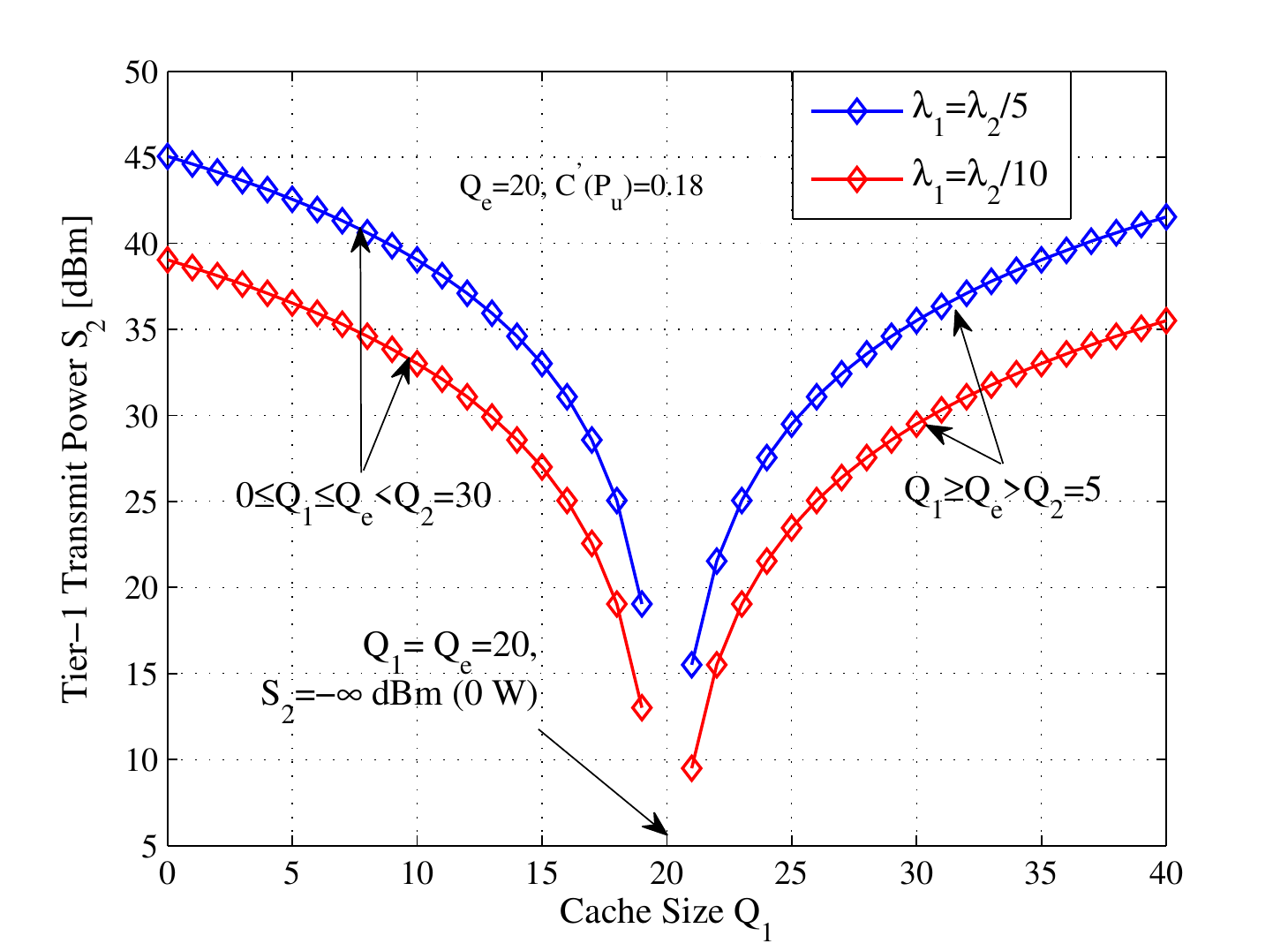}}
\vspace{-2mm}
\caption{Under fixed $\mathcal{C}^{'}(\mathbf{P_{u}})$, the tradeoffs of the BS density $\lambda_{2}$, transmit power $S_{2}$ and cache size $Q_{1}$ in a two-tier HetNet where $\mathcal{C}^{'}(\mathbf{P_{u}})=0.18$ ($Q_{e}$=20). By Corollary \ref{coro8}, we have $K_{3}>0$, $K_{5}>0$, $Q_{1}\in[0, Q_{e}]$ for $Q_{2}=30>Q_{e}$, and $K_{3}>0$, $K_{5}<0$, $Q_{1}\in(Q_{e},+\infty)$ for $Q_{2}=5<Q_{e}$.}
\label{tradeoff_3_4} 
\vspace{-4mm}
\end{figure*}

$\textbf{Impacts of BS density for special cases.}$ We consider four special scenarios in Fig. \ref{comparison2}:
\begin{enumerate}
\item 1-tier network with $Q=20$, $S=33$ dBm;
\item 2-tier HetNets with $Q_{1}=Q_{2}=15$, \{$S_{1}$, $S_{2}$\}=\{53, 33\} dBm, $\lambda_{2}=\frac{5}{\pi{500}^{2}}$;
\item 2-tier HetNets with $\{Q_{1},~Q_{2}\}=\{15,~5\}$, \{$S_{1}$, $S_{2}$\}=\{73, 33\} dBm ($S_{1}\gg S_{2}$),  $\lambda_{2}= \frac{5}{\pi{500}^{2}}$;
\item 2-tier HetNets with $\{Q_{1},~Q_{2}\}=\{15,~5\}$, $S_{1}$ = $S_{2}$ = 33 dBm, $\lambda_{2}$ = $\lambda_{1}$.
\end{enumerate}
We can see that the optimal performances are independent of the BS density in the considered four special cases, which also verifies our conclusions in Remark \ref{remark2} and Remark \ref{remark_3}. The optimal performance of scenario $2$ is almost the same as that of scenario $3$ because both them are almost entirely determined by the tier with cache size $Q=15$. Besides, from scenario $4$, it can be seen that when two tiers of BSs are deployed with the same density and transmit power, however, still have different cache sizes, performance independency on the BS density still exists.

\subsection{Tradeoffs of network parameters}
In Fig. \ref{tradeoff_3_4}, we present the tradeoffs of the network parameters in different tiers. To keep the fixed $\mathcal{C}^{'}(\mathbf{P_{u}})$, increasing the BS cache size $Q_{1}$ of the first tier cannot always bring the reduction of the BS density $\lambda_{2}$ or transmit power $S_{2}$ of the second tier. The crucial condition is that the cache size of the second tier should satisfy $Q_{2}>Q_{e}$. It can be further seen in Fig. \ref{trade3} that the slope that $\lambda_{2}$ varies with $Q_{1}$ is positively related to $S_{2}$. Similarly, the slope that $S_{2}$ varies with $Q_{1}$ is also positively related to $\lambda_{1}$, as illustrated in Fig. \ref{trade4}.

\section{Conclusion}\label{conclusion}
In this paper, we contributes to developing a general $N$-tier wireless cache-enabled HetNet framework and proposing an optimal probabilistic caching scheme for maximizing the SDP. Our analysis results show that the optimal performance of such a network not only depends on the cache size, but also the BS densities and transmit powers. Numerical results show that the proposed optimal caching strategy achieves a significant performance gain.

Toward analyzing the optimal performance of $N$-tier HetNets, we establish an interesting connection between $N$-tier HetNets and single-tier networks. Furthermore, we quantify the tradeoff between the BS cache size and the BS density with the uniform caching strategy. We show that: $i)$ increasing the BS caching capability does not always reduce the BS density; $ii)$ the BS density decreases with the BS caching capability when the BS cache size is larger than a threshold; $iii)$ the BS density is inversely proportional to the BS cache size within the same tier, and becomes a linear function of the BS cache sizes of different tiers.
\appendix
\subsection{Proof of Lemma \ref{lemma1}} \label{1}
By definition, we have
\begin{align}
\mathcal{W}_{i|j}&=\mathbb{P}\left[S_{r,i|j}>\max_{l,l\ne{i}}S_{r,l|j}\right]\nonumber\\
&=\mathbb{E}_{r_{i|j}}\left[\prod_{l=1,l\ne{i}}^{N}\mathbb{P}\left[S_{i}r^{-\beta}>S_{l}r_{l|j}^{-\beta}|r_{i|j}=r\right]\right]\nonumber\\
&=\int_{0}^{\infty}\prod_{l=1,l\ne{i}}^{N}\mathbb{P}\left[r_{l|j}>r\left(\frac{S_{l}}{S_{i}}\right)^{\frac{1}{\beta}}\right]f_{r_{i|j}}(r)\mathrm{d}{r},     \label{Wij}
\end{align}
where $f_{r_{i|j}}(r)$ is the probability density function (PDF) of $r_{i|j}$, and $r_{l|j}>r\left(\frac{S_{l}}{S_{i}}\right)^{\frac{1}{\beta}}$ means that it does not exists any BS of the $l$-th tier caching the content j in the area $A=\pi\left(r\left(\frac{S_{l}}{S_{i}}\right)^{\frac{1}{\beta}}\right)^2$. Using the fact that the null probability of a 2-D Poisson point process with density $\lambda$ in an area $A$ is $e^{-\lambda{A}}$, we can deduce
\begin{equation}
\mathbb{P}\left[r_{l|j}>r\left(\frac{S_{l}}{S_{i}}\right)^{\frac{1}{\beta}}\right] =e^{-\pi{p_{lj}}\lambda_{l}\left(\frac{S_{l}}{S_{i}}\right)^{\frac{2}{\beta}}r^{2}},
\label{rlj}
\end{equation}
Thus, $f_{r_{i|j}}(r)$ can be given by
\begin{equation}
f_{r_{i|j}}(r)=\frac{d\left[1-\mathbb{P}[{r_{i|j}>r}]\right]}{dr}=2\pi{p_{ij}}\lambda_{i}re^{-\pi\lambda_{i}p_{ij}r^2}  \label{fr}.
\end{equation}
Substituting (\ref{rlj}) and (\ref{fr}) into (\ref{Wij}), 
we have
\begin{align}
\mathcal{W}_{i|j}&=\int_{0}^{\infty}2\pi{p_{ij}}\lambda_{i}re^{-\pi\sum_{l=1,l\ne{i}}^{N}{p_{lj}}\lambda_{l}\left(\frac{S_{l}}{S_{i}}\right)^{\frac{2}{\beta}}r^{2}}e^{-\pi\lambda_{i}p_{ij}r^2}
\mathrm{d}{r} \nonumber\\ &\stackrel{(a)}{=}2\pi{p_{ij}}\lambda_{i}\int_{0}^{\infty}e^{-\pi\sum_{l=1}^{N}{p_{lj}}\lambda_{l}\left(\frac{S_{l}}{S_{i}}\right)^{\frac{2}{\beta}}r^{2}}r\mathrm{d}{r}\nonumber\\
&=\frac{\lambda_{i}p_{ij}{S_i}^{\frac{2}{\beta}}}{\sum_{l=1}^{N}{\lambda_{l}p_{lj}{S_l}^{\frac{2}{\beta}}}}, \label{Wij3}
\end{align}
where (a) follows that ${p_{lj}}\lambda_{l}\left(\frac{S_{l}}{S_{i}}\right)^{\frac{2}{\beta}}=\lambda_{i}p_{ij}$ for $l=i$. We complete the proof of Lemma \ref{lemma1}.
\subsection{Proof of Lemma \ref{lemma2}} \label{Lemma2}
According to the definition, we have
\begin{align}
f_{R_{i|j}}(r)&=\frac{d\left[1-\mathbb{P}[{R_{i|j}>r}]\right]}{dr}.\label{fR1}
\end{align}
Let $k_j$ denote the tier associated with the typical user requesting content $j$. Therefore, we have
\begin{align}
\mathbb{P}[R_{i|j}>r]&=\mathbb{P}[r_{i|j}>r|{k_j=i}] \nonumber\\ &=\frac{\mathbb{P}\left[r_{i|j}>r,S_{r,i|j}>\max_{l,l\ne{i}}S_{r,l|j}\right]}{\mathcal{W}_{i|j}} \nonumber \\
&{\setlength\arraycolsep{0.5pt}=}\frac{1}{\mathcal{W}_{i|j}}\int_{r}^{\infty}\prod_{l=1,l\ne{i}}^{N}\mathbb{P}\left[r_{l|j}{\setlength\arraycolsep{0.5pt}>}r\left(\frac{S_{l}}{S_{i}}\right)^{\frac{1}{\beta}}\right]f_{r_{i|j}}(r)\mathrm{d}{r}\nonumber \\
&\stackrel{(b)}{=}\frac{2\pi{p_{ij}}\lambda_{i}}{\mathcal{W}_{i|j}}\int_{r}^{\infty}e^{-\pi\sum_{l=1}^
{N}{p_{lj}}\lambda_{l}\left(\frac{S_{l}}{S_{i}}\right)^{\frac{2}{\beta}}r^{2}}r\mathrm{d}{r},\label{R>r}
\end{align}
where (b) is obtained by substituting (\ref{rlj}) and (\ref{fr}). Substituting (\ref{R>r}) into (\ref{fR1}), we thus have Lemma \ref{lemma2}.
\subsection{Proof of Proposition \ref{proposition1}} \label{2}
According to (\ref{coverageij}) and let 
$I_{l}=\sum_{k\in{\Phi_{l}}{\backslash}\{n_{io}\}}S_{l}|h_{l,k}|^2d_{l,k}^{-\beta}$, we then have
\begin{align}\label{Ci|j}
\mathcal{C}_{i|j}
&=\int_{0}^{\infty}\mathbb{P}\left[\text{SINR}_{i|j}(r)>\tau|r\right]f_{R_{i|j}}(r)\mathrm{d}{r}\nonumber\\
&=\int_{0}^{\infty}\mathbb{P}\left[|h_{i,o}|^2>\tau{S_{i}}^{-1}r^{\beta}(\sigma^2+\sum_{l=1}^{N}I_{l})|r\right]f_{R_{i|j}}(r)\mathrm{d}{r}\nonumber\\
&\stackrel{(c)}{=}\int_{0}^{\infty}\mathbb{E}_{I_{l}}\left[e^{-\tau{S_{i}}^{-1}r^{\beta}(\sigma^2+\sum_{l=1}^{N}I_{l})}|r,I_{l}\right]f_{R_{i|j}}(r)\mathrm{d}{r}\nonumber\\
&=\int_{0}^{\infty}\exp\left({-\tau{S_{i}}^{-1}r^{\beta}\sigma^2}\right)\prod_{l=1}^{N}\mathcal{L}_{I_{l}}\left(\tau{S_{i}}^{-1}r^{\beta}\right)f_{R_{i|j}}(r)\mathrm{d}{r},
\end{align}
where (c) follows $|h_{i,o}|^{2}\sim{\exp(1)}$. As discussed in Section \ref{performancemetric}, there exist two groups of interferences to $u_0$ from the $l$-th tier: the interference $I_{lj}$ from the BSs storing the requested content $j$ and the interference $I_{lj^{-}}$ from the BSs without storing content $j$. According to \cite{Martain}, the locations of these two groups of interfering BSs can be modeled as two thinned and independent HPPPs $\Phi_{lj}$ with density ${p_{lj}}\lambda_{l}$ and $\Phi_{lj^-}$ with density $(1-{p_{lj}})\lambda_{l}$, respectively.
By the condition $d_{l,k}\ge{}r_{l|j}\ge{}r\left(\frac{S_{l}}{S_{i}}\right)^{\frac{1}{\beta}}$ given in (\ref{Wij}), the Laplace transform $\mathcal{L}_{I_{lj}}$ can be calculated as
\begin{align}
&\mathcal{L}_{I_{lj}}\left(\tau{S_{i}}^{-1}r^{\beta}\right)=\mathbb{E}_{I_{lj}}\left[e^{-\tau{S_{i}}^{-1}r^{\beta}I_{lj}}\right]\nonumber\\
&=\mathbb{E}_{\Phi_{lj}}\left[\prod_{k\in{\Phi_{lj}}\backslash\{n_{io}\}}\frac{1}{1+\tau{S_{i}}^{-1}r^{\beta}S_{l}d_{l,k}^{-\beta}}\right]\nonumber\\
&\stackrel{(d)}{=}\exp\left({{-2\pi{p_{lj}}\lambda_{l}}\int_{r\left(\frac{S_{l}}{S_{i}}\right)^{\frac{1}{\beta}}}^{\infty}\!\left(\!1-\frac{1}{1+\tau{S_{i}}^{-1}r^{\beta}S_{l}x^{-\beta}}\right)\!x\mathrm{d}{x}}\!\right)\nonumber\\
&\stackrel{(e)}{=}\exp\left({-\pi{p_{lj}}\lambda_{l}\tau^{\frac{2}{\beta}}\left(\frac{S_{l}}{S_{i}}\right)^{\frac{2}{\beta}}r^{2}\frac{2}{\beta}\int_{\tau^{-1}}^{\infty}\left(
\frac{\mu^{\frac{2}{\beta}-1}}{1+\mu}\right)\mathrm{d}{\mu}}\right) \nonumber\\
&=\exp\left({-\pi{p_{lj}}\lambda_{l}(\frac{S_{l}}{S_{i}})^{\frac{2}{\beta}}r^{2}H(\tau,\beta)}\right),  \label{Elj}
\end{align}
where (d) follows from the probability generating functional (PGFL) of the PPP and (e) is obtained using $\mu=\tau^{-1}{S_{i}}r^{-\beta}{S_{l}^{-1}}x^{\beta}$ , $H(\tau,\beta)$ denotes $\frac{2\tau}{\beta-2}{}_2F_{1}(1,1-\frac{2}{\beta},2-\frac{2}{\beta},-\tau)$. Similarly, let $D(\tau,\beta)$ be $\frac{2}{\beta}\tau^{\frac{2}{\beta}}B(\frac{2}{\beta},1-\frac{2}{\beta})$, the Laplace transform $\mathcal{L}_{I_{lj^{-}}}$ is calculated as
\begin{align}
&\mathcal{L}_{I_{lj^{-}}}\left(\tau{S_{i}}^{-1}r^{\beta}\right)\nonumber\\
&=\exp\left({-(1-p_{lj})\lambda_{l}\int_{R^{2}}
\left(1-\frac{1}{1 + \tau{S_{i}}^{-1}r^{\beta}S_{l}x^{-\beta}} \right) x\mathrm{d}{x}}\right) \nonumber \\
&=\exp\left({-\pi(1-p_{lj})\lambda_{l}(\frac{S_{l}}{S_{i}})^{\frac{2}{\beta}}r^{2}D(\tau,\beta)}\right).  \label{Elj-}
\end{align}
Substituting (\ref{fR}), (\ref{Elj}), (\ref{Elj-}) into (\ref{Ci|j}), we have Proposition \ref{proposition1}.

\subsection{Proof of Proposition \ref{proposition2}}\label{3}
It is easy to see that the feasible set \{$\mathbf{P}|0\le{p_{ij}}\le{1}$, $\sum_{j=1}^{M}{p_{ij}}\le{Q_i}$, $~\forall{i}\in{\mathcal{N}},~\forall{j}\in{\mathcal{M}}$\} is a convex set. The objective function $\mathcal{C^{'}}(\mathbf{P})$ is a summation of $M$ polynomials. For $\forall{j}\in\mathcal{M}$, the $j$-th polynomial is given by
$F_{j}=\frac{\sum_{i=1}^{N}V_{ij}p_{ij}}{\sum_{i=1}^{N}G_{i}p_{ij}+E}$,
where $V_{ij}=\lambda_{i}{S_{i}}^{\frac{2}{\beta}}t_{j}$, $G_{i}=\lambda_{i}{S_{i}}^{\frac{2}{\beta}}T(\tau,\beta)$, and $E=D(\tau,\beta)\sum_{i=1}^{N}\lambda_{i}{S_{i}}^
{\frac{2}{\beta}}.$ The first derivative of $F_{j}$ is
\begin{equation}
\frac{\partial{F_{j}}}{\partial{p_{ij}}}=\frac{\partial{\mathcal{C}^{'}}}{\partial{p_{ij}}}=\frac{V_{ij}E}{\left(\sum_{i=1}^{N}G_{i}p_{ij}+E\right)^{2}}>0. \label{derivative}
\end{equation} We then have
$\frac{\partial^{2}{F_{j}}}{\partial{p_{ij}}\partial{p_{kj}}}=-ma_{i}a_{k} <0,$
where $m=\frac{2T(\tau,\beta)t_{j}E}{(\sum_{i=1}^{N}G_{i}p_{ij}+E)^3}$, $a_{i}=\lambda_{i}{S_i}^{\frac{2}{\beta}}$, and $a_{k}=\lambda_{k}{S_k}^{\frac{2}{\beta}}$.
Then, the Hessian matrix $\mathbf{H}_{N\times{N}}$ of $F_{j}$ is given by \begin{equation}\mathbf{H}_{N\times{N}}\triangleq{-m[a_{i}a_{k}]_{N\times{N}}=-m(\mathbf{a}\mathbf{a^{T}})},\end{equation}
where we define the column vector $\mathbf{a}\triangleq{(a_{1},a_{2},...,a_{N})^{T}}$. Obviously, the Hessian matrix $\mathbf{H}$ is a real symmetric negative semi-definite matrix, i.e., $F_{j}$ is concave of $p_{ij}$. According to the property that the summation of concave functions is also a concave function, $\mathcal{C^{'}}$ is also a concave function of $\mathbf{P}$. Thus, Problem $\mathrm{P1}$ is a concave optimization problem.

\subsection{Proof of Lemma \ref{lemma3}}\label{6}
By constructing the Lagrangian function of $\mathrm{P1}$, we have
\begin{align}
L(\mathbf{P},\boldsymbol{\mu},\boldsymbol{\lambda}
,\boldsymbol{\eta})=&\mathcal{C^{'}}(\mathbf{P})\!+\!\sum_{i=1}^{N}\sum_{j=1}^{M}\mu_{ij}p_{ij}\!+\!\sum_{i=1}^{N}\sum_{j=1}^{M}\lambda_{ij}(1-p_{ij})\nonumber\\
&+\sum_{i=1}^{N}\eta_{i}(Q_{i}-\sum_{j=i}^{M}p_{ij}), \label{L}
\end{align}
where
${\mu_{ij}}$, $\lambda_{ij}$ are the Lagrange multipliers associated with (\ref{condition1}), ${\eta_{i}}$ is the Lagrange multiplier associated with (\ref{condition2}), and we use $\boldsymbol{\mu}\triangleq{(\mu_{ij})_{N\times{M}}}$, $\boldsymbol{\lambda}\triangleq{(\lambda_{ij})_{N\times{M}}}$, $\boldsymbol{\eta}\triangleq{(\eta_{i})_{i\in{\mathcal{N}}}}$. Thus, we have
\begin{equation}
\frac{\partial{L(\mathbf{P}
,\boldsymbol{\mu},\boldsymbol{\lambda},\boldsymbol{\eta})}}{\partial{p_{ij}}}=\frac{V_{ij}E}{(\sum_{k=1}^{N}G_{k}p_{kj}+E)^2}
+\mu_{ij}-\lambda_{ij}-\eta_{i}\label{lange}.
\end{equation}
The KKT conditions are then written as follows
\begin{align}
\frac{\partial{L(\mathbf{P}^{*},\boldsymbol{\mu},\boldsymbol{\lambda}, \boldsymbol{\eta})}}{\partial{p^{*}_{ij}}}=0&,~~\forall{i}\in{\mathcal{N}}, \forall{j}\in{\mathcal{M}}, \label{K1}\\
\mu_{ij}p^{*}_{ij}=0,~~\lambda_{ij}(1-p^{*}_{ij})=0&,~~\forall{i}\in{\mathcal{N}}, \forall{j}\in{\mathcal{M}}, \label{K2}\\
\eta_{i}(Q_{i}-\sum_{j=1}^{M}p^{*}_{ij})=0&,~~\forall{i}\in{\mathcal{N}},\label{K3} \\
\sum_{j=1}^{M}p^{*}_{ij}=Q_{i},~0\le{p^{*}_{ij}}\le1&,~~\forall{i}\in{\mathcal{N}}, \forall{j}\in{\mathcal{M}}\label{K4}\\
\mu_{ij}\ge{0},~\lambda_{ij}\ge{0}&,~~\forall{i}\in{\mathcal{N}}, \forall{j}\in{\mathcal{M}}\label{K5}.
\end{align}
According to (\ref{lange}) and (\ref{K1}), we have that
\begin{equation}
\eta_{i}=\frac{V_{ij}E}{(\sum_{k=1}^{N}G_{k}p^{*}_{kj}+E)^2}+\mu_{ij}-\lambda_{ij}£¬~~~~\forall{k}\in{\mathcal{N}}, \forall{j}\in{\mathcal{M}}. \label{eta}
\end{equation}
As a result, we consider the following three cases to analyze the optimal solution:
\begin{itemize}
  \item $\eta_{i}\ge{\frac{V_{ij}E}{(\sum_{k=1,\ne{i}}^{N}G_{k}p^{*}_{kj}+E)^2}}:$ If $0<{p^{*}_{ij}}\le1$, then we have $0\le\lambda_{ij}<\mu_{ij}$, which is in conflict with $\mu_{ij}=0$ due to (\ref{K2}). If $p^{*}_{ij}=0$, we have $\lambda_{ij}=0$ and $\mu_{ij}\ge{0}$. Thus, we have $p^{*}_{ij}=0$.
  \item ${\frac{V_{ij}E}{(\sum_{k=1,\ne{i}}^{N}G_{k}p^{*}_{kj}+G_{i}+E)^2}}<\eta_{i}<{\frac{V_{ij}E}{(\sum_{k=1,\ne{i}}^{N}G_{k}p^{*}_{kj}+E)^2}}:$ If $0<{p^{*}_{ij}}<1$, we have $\mu_{ij}=\lambda_{ij}=0$ due to (\ref{K2}); If $p^{*}_{ij}=0$, we have $0\le\mu_{ij}<\lambda_{ij}$, which is in conflict with $\lambda_{ij}=0$ due to (\ref{K2}); If $p^{*}_{ij}=1$, we have $0\le\lambda_{ij}<\mu_{ij}$ and $\mu_{ij}=0$, which are contradictory. Thus, we have $0<{p^{*}_{ij}}<1$ and $p^{*}_{ij}=\frac{1}{G_{i}}\left(\sqrt{\frac{V_{ij}E}{\eta_{i}}}-\sum_{k=1,\ne i}^{N}G_{k}p^{*}_{kj}-E\right)$.
  \item $\eta_{i}\le{\frac{V_{ij}E}{(\sum_{k=1,\ne{i}}^{N}G_{k}p^{*}_{kj}+G_{i}+E)^2}}:$ Similar to the analysis for the first case, we have $p^{*}_{ij}=1$ and $\mu_{ij}=0$, $\lambda_{ij}\ge0$.
\end{itemize}
Summarizing above results, we thus have ($\ref{p1*}$). Based on (\ref{K4}), we have that $\eta_{i}$ satisfies $\sum_{j=1}^{M}p^{*}_{ij}=Q_{i}$ for $\forall{i}\in{\mathcal{N}}$. Thus, we complete the proof.
\subsection{Proof of Corollary \ref{corollary22}} \label{7}
First, we prove that $p_{j}^{*}$ decreases with the index $j$. For any two contents $i$, $j$ $\in\mathcal{M}$ and $i<j$, their popularities satisfy $t_{i}\geq t_{j}$. By (\ref{pj*}), there are only three possible cases: $p_{i}^{*}=p_{j}^{*}=1$, $p_{i}^{*}>p_{j}^{*}$, $p_{i}^{*}=p_{j}^{*}=0$, i.e., $p_{i}^{*}\ge{p_{j}^{*}}$. Thus, $p_{j}^{*}$ decreases with $j$, i.e., $p_{j}^{*}$ increases with $t_{j}$.

Then we prove that $p_{j}^{*}$ increases with the cache size $Q$. Based on (\ref{pj*}), without loss of generality, we assume $p_{i}^{*}=1$ for $1\le{i}\le{n}$ and $p_{j}^{*}=0$ for $m\le{j}\le{M}$ with $1\le{n}<m<M$. Then, we have
$\sum_{j=1}^{M}p^*_{j}=n+\sum_{j=n+1}^{m-1}p^*_{j}=n+\sum_{j=n+1}^{m-1}\left(\frac{1}{T(\tau,\beta)}{\sqrt{\frac{t_{j}D(\tau,\beta)}{\eta^{*}}}}-\frac{D(\tau,\beta)}{T(\tau,\beta)}\right)=Q.$ It can be seen that when $Q$ increases to $Q^{'}$, $\eta^*$ will decrease to $\eta^{*'}$ to satisfy $\sum_{j=1}^{M}p_{j}^{*'}=Q^{'}$, and $\frac{1}{T(\tau,\beta)}{\sqrt{\frac{t_{j}D(\tau,\beta)}{\eta^{*}}}}-\frac{D(\tau,\beta)}{T(\tau,\beta)}$ increases to $\frac{1}{T(\tau,\beta)}{\sqrt{\frac{t_{j}D(\tau,\beta)}{\eta^{*'}}}}-\frac{D(\tau,\beta)}{T(\tau,\beta)}$. Thus, we have $p_{j}^{*'}=p_{j}^{*}=1$ for $1\le{j}\le{n}$, $p_{j}^{*'}>p_{j}^{*}$ for $n+1\le{i}\le{m-1}$, and $p_{j}^{*'}=p_{j}^{*}=0$ or $p_{j}^{*'}>p_{j}^{*}=0$ for $m\le{j}\le{M}$. As a result, we have ${p_{j}^{*'}}\ge p_{j}^{*}$, i.e., $p_{j}^{*}$ increases with $Q$.

\subsection{Proof of Proposition \ref{proposition_3}}\label{4}
By letting $x_{j}=\frac{\sum_{i=1}^{N}\lambda_{i}S_{i}^{\frac{2}{\beta}}p_{ij}}{\sum_{i=1}^{N}\lambda_{i}S_{i}^{\frac{2}{\beta}}}$, it is clear that the objective function of $\mathrm{P1}$ is the same as that of $\mathrm{P2}$. Hence, to prove the first part of the proposition, it suffices to show that the feasible set of $\mathrm{P1}$ is a subset of that of $\mathrm{P2}$. Denote $\mathcal{P}=\{\mathbf{P}=[p_{ij}]_{N\times{M}}|~0\le{p_{ij}}\le{1},~\sum_{j=1}^{M}p_{ij}=Q_{i}$, $\forall{i}\in{\mathcal{N}},~\forall{j}\in{\mathcal{M}}\}$ as the feasible set of $\mathrm{P1}$ and $\mathcal{X}=\{\mathbf{x}=(x_{j})_{1\times{M}}|~0\le{x_{j}}\le{1}$, $\sum_{j=1}^{M}{x_{j}}=\frac{\sum_{i=1}^{N}\lambda_{i}S_{i}^{\frac{2}{\beta}}Q_{i}}{\sum_{i=1}^{N}\lambda_{i}S_{i}^{\frac{2}{\beta}}},~\forall{j}\in{\mathcal{M}}\}$
as the feasible set of $\mathrm{P2}$. Let $\mathbf{a}=(\frac{\lambda_{i}S_{i}^{\frac{2}{\beta}}}{\sum_{i=1}^{N}\lambda_{i}S_{i}^{\frac{2}{\beta}}})_{1\times{N}}$, then for any given element $\mathbf{P}\in\mathcal{P}$, we can define $\mathbf{x}=\mathbf{a}\mathbf{P}$. Since $\frac{\partial{x_{j}}}{\partial{p_{ij}}}=\frac{\lambda_{i}S_{i}^{\frac{2}{\beta}}}{\sum_{l=1}^{N}\lambda_{i}S_{i}^{\frac{2}{\beta}}}>0$ and $0\le{p_{ij}}\le{1}$ for $\forall{i}\in{\mathcal{N}},~\forall{j}\in{\mathcal{M}}$, we have $0\le{x_{j}}\le{1}$ for $\forall{j}\in{\mathcal{M}}$. Due to $\sum_{j=1}^{M}p_{ij}=Q_{i}$ for $i\in\mathcal{N}$, we have $\sum_{j=1}^{M}{x_{j}}=\frac{\sum_{i=1}^{N}\lambda_{i}S_{i}^{\frac{2}{\beta}}\sum_{j=1}^{M}p_{ij}}{\sum_{l=1}^{N}\lambda_{i}S_{i}^{\frac{2}{\beta}}}=\frac{\sum_{i=1}^{N}\lambda_{i}S_{i}^{\frac{2}{\beta}}Q_{i}}{\sum_{i=1}^{N}\lambda_{i}S_{i}^{\frac{2}{\beta}}}$.
Thus, we have $\mathbf{x}\in\mathcal{X}$. This means that for any $\mathbf{P}\in\mathcal{P}$, it always holds that $\mathbf{x}=\mathbf{a}\mathbf{P}\in\mathcal{X}$. Hence, the feasible set $\mathcal{X}$ of $\mathrm{P2}$ includes the transformation of the feasible set $\mathcal{P}$ of $\mathrm{P1}$ as a subset. As a result, the optimal value of $\mathrm{P2}$ provides an upper bound of that of $\mathrm{P1}$.

We next prove the second part of the proposition. If $Q_{i}=Q_e$ for $\forall{i}\in\mathcal{N}$, then for any given element $\mathbf{x}\in\mathcal{X}$, we can always construct an $\mathbf{P'}$ where $p_{ij}=x_{j}$ for $i\in\mathcal{N}$. Due to $0\le{p_{ij}}=x_{j}\le{1}$ for $\forall{i}\in{\mathcal{N}},~\forall{j}\in{\mathcal{M}}$ and $\sum_{j=1}^{M}p_{ij}=\sum_{j=1}^{M}x_{j}=Q_e$ for $\forall{i}\in{\mathcal{N}}$, we thus have $\mathbf{P'}\in\mathcal{P}$. This suggests that if $\mathbf{x}^{*}\in\mathcal{X}$ is an optimal solution of $\mathrm{P2}$, then $\mathbf{P'}$ with $p_{ij}=x^{*}_{j}$ is also an optimal solution of $\mathrm{P1}$. Hence, the optimal values of $\mathrm{P1}$ and $\mathrm{P2}$ are equal when $Q_i=Q_e$, $\forall{i}\in \mathcal{N}$. The proposition is thus proved.
\subsection{Proof of Theorem \ref{proposition5}} \label{proofpro5}
According to (\ref{C^{'}_{1e}}), when $\mathcal{C}^{'}_{1}(\mathbf{P_{1,u}})$ is maintained constant and $\lambda_{j}$, $S_{j}$ and $Q_{j}$ are fixed values for $\forall{j}\in\mathcal{N},j\ne{i}$, the equivalent cache size $Q_{e}$ also remains unchanged. We only need to consider the interactions of the $i$-th tier's parameters, thus we can derive that
\begin{equation} \lambda_{i}S_{i}^{\frac{2}{\beta}}=\frac{Q_{e}\sum_{j=1,\ne{i}}^{N}\lambda_{j}{S_{j}}^{\frac{2}{\beta}}
-\sum_{j=1,\ne{i}}^{N}\lambda_{j}{S_{j}}^{\frac{2}{\beta}}Q_{j}}{Q_{i}-Q_{e}}. \label{SDensity}
\end{equation}
Substituting (\ref{K_{2}}) into (\ref{SDensity}), (\ref{K_{33}}) into (\ref{SDensity}) respectively, we have (\ref{tradeoff1}) , (\ref{tradeoff2}).

\subsection{Proof of Corollary \ref{corollary7}} \label{proofcorollary7}
Equation (\ref{SDensity}) can be rewritten as following
\begin{equation} \lambda_{i}S_{i}^{\frac{2}{\beta}}=\frac{\left(Q_{e}-\frac{\sum_{j=1,\ne{i}}^{N}\lambda_{j}{S_{j}}^{\frac{2}{\beta}}Q_{j}}
{\sum_{j=1,\ne{i}}^{N}\lambda_{j}{S_{j}}^{\frac{2}{\beta}}}\right)
\sum_{j=1,\ne{i}}^{N}\lambda_{j}{S_{j}}^{\frac{2}{\beta}}}{Q_{i}-Q_{e}}.  \end{equation}

Obviously, when $Q_{e}>\frac{\sum_{j=1,\ne{i}}^{N}\lambda_{j}{S_{j}}^{\frac{2}{\beta}}Q_{j}}
{\sum_{j=1,\ne{i}}^{N}\lambda_{j}{S_{j}}^{\frac{2}{\beta}}}$, we have $K_{1}>0$. Based on $Q_{e}=\frac{\lambda_{i}{(S_{i})}^{\frac{2}{\beta}}Q_{i}+\sum_{j=1,\ne{i}}^{N}\lambda_{j}{S_{j}}^{\frac{2}{\beta}}Q_{j}}
{\lambda_{i}{(S_{i})}^{\frac{2}{\beta}}+\sum_{j=1,\ne{i}}^{N}\lambda_{j}{S_{j}}^{\frac{2}{\beta}}}$, $Q_{e}>\frac{\sum_{j=1,\ne{i}}^{N}\lambda_{j}{S_{j}}^{\frac{2}{\beta}}Q_{j}}
{\sum_{j=1,\ne{i}}^{N}\lambda_{j}{S_{j}}^{\frac{2}{\beta}}}$ is equivalent to
$Q_{i}>{\frac{\sum_{j=1,\ne{i}}^{N}\lambda_{j}{S_{j}}^{\frac{2}{\beta}}Q_{j}}{
\sum_{j=1,\ne{i}}^{N}\lambda_{j}{S_{j}}^{\frac{2}{\beta}}}}$. Meanwhile, because of $\lambda_{i}(S_{i})^{\frac{1}{\beta}}>0$, we have $Q_{i}>Q_{e}$. As a result, we have $K_{1}>0$ and $Q_{i}-Q_{e}>0$ when $Q_{i}\in(Q_{e},+\infty)$, then $K_{1}<0$ and $Q_{i}-Q_{e}<0$ when $Q_{i}\in[0,Q_{e})$. Because the relationship of $\lambda_{i}$ and $Q_{i}$ is inversely proportional, thus $\lambda_{i}$ decreases as $Q_{i}$ increases when $K_{1}>0$ and $Q_{i}-Q_{e}>0$, while $\lambda_{i}$ increases as $Q_{i}$ increases when $K_{1}<0$ and $Q_{i}-Q_{e}<0$. Similarly, we have $S_{i}$ decreases as $Q_{i}$ increases when $K_{2}>0$ and $Q_{i}-Q_{e}>0$, while $S_{i}$ increases as $Q_{i}$ increases when $K_{2}<0$ and $Q_{i}-Q_{e}<0$. Finally, we have the corollary.
\subsection{Proof of Corollary \ref{coro8}}\label{proofcorollary8}
According to (\ref{K_{3}})-(\ref{K_{6}}), we consider the following cases:
\begin{enumerate}
    \item  When $K_{3}>0$, from (\ref{K_{3}}), we have
         $Q_{e}>\frac{\sum_{k=1,\ne{i,j}}^{N}\lambda_{k}{S_{k}}^{\frac{2}{\beta}}Q_{k}}{\sum_{k=1,\ne{j}}^{N}\lambda_{k}{S_{k}}^{\frac{2}{\beta}}}$.
    \item When $K_{5}>0$ ($K_{6}>0$), i.e., $Q_{j}>Q_{e}$, then based on $Q_{e}=\frac{\sum_{k=1,\ne{j}}^{N}\lambda_{k}{S_{k}}^{\frac{2}{\beta}}Q_{k}+\lambda_{j}{S_{j}}^{\frac{2}{\beta}}Q_{j}}{\sum_{k=1,\ne{j}}^{N}\lambda_{k}{S_{k}}^{\frac{2}{\beta}}+\lambda_{j}{S_{j}}^{\frac{2}{\beta}}}$, we have $Q_{e}>\frac{\sum_{k=1,\ne{j}}^{N}\lambda_{k}{S_{k}}^{\frac{2}{\beta}}Q_{k}+\lambda_{j}{S_{j}}^{\frac{2}{\beta}}Q_{e}}{\sum_{k=1,\ne{j}}^{N}\lambda_{k}{S_{k}}^{\frac{2}{\beta}}+\lambda_{j}{S_{j}}^{\frac{2}{\beta}}}$, i.e., $Q_{e}>\frac{\sum_{k=1,\ne{j}}^{N}\lambda_{k}{S_{k}}^{\frac{2}{\beta}}Q_{k}}{\sum_{k=1,\ne{j}}^{N}\lambda_{k}{S_{k}}^{\frac{2}{\beta}}}>\frac{\sum_{k=1,\ne{i,j}}^{N}\lambda_{k}{S_{k}}^{\frac{2}{\beta}}Q_{k}}{\sum_{k=1,\ne{j}}^{N}\lambda_{k}{S_{k}}^{\frac{2}{\beta}}}$.
Further, we have $K_{3}>0$. Since $\lambda_{j}\ge0$ and $S_{j}\ge0$, we thus have $Q_{i}\in[0,~\frac{K_{3}}{K_{4}}]$.
    \item When $K_{5}<0$ ($K_{6}<0$) and $K_{3}>0$, i.e., $Q_{j}<Q_{e}$ and $Q_{e}>\frac{\sum_{k=1,\ne{i,j}}^{N}\lambda_{k}{S_{k}}^{\frac{2}{\beta}}Q_{k}}{\sum_{k=1,\ne{j}}^{N}\lambda_{k}{S_{k}}^{\frac{2}{\beta}}}$,
we thus have $Q_{i}\in[\frac{K_{3}}{K_{4}}, +\infty)$ due to $\lambda_{j}\ge0$ and $S_{j}\ge0$.
    \item When $K_{3}<0$, i.e., $Q_{e}<\frac{\sum_{k=1,\ne{i,j}}^{N}\lambda_{k}{S_{k}}^{\frac{2}{\beta}}Q_{k}}{\sum_{k=1,\ne{j}}^{N}\lambda_{k}{S_{k}}^{\frac{2}{\beta}}}$, then we have $Q_{e}<\frac{\sum_{k=1,\ne{j}}^{N}\lambda_{k}{S_{k}}^{\frac{2}{\beta}}Q_{k}}{\sum_{k=1,\ne{j}}^{N}\lambda_{k}{S_{k}}^{\frac{2}{\beta}}}$. Further, according to $Q_{e}=\frac{\sum_{k=1,\ne{j}}^{N}\lambda_{k}{S_{k}}^{\frac{2}{\beta}}Q_{k}+\lambda_{j}{S_{j}}^{\frac{2}{\beta}}Q_{j}}{\sum_{k=1,\ne{j}}^{N}\lambda_{k}{S_{k}}^{\frac{2}{\beta}}+\lambda_{j}{S_{j}}^{\frac{2}{\beta}}}$, we have $Q_{j}<Q_{e}$, i.e., $K_{5}<0$ ($K_{6}<0$).
        Thus, $Q_{i}\in[0,~+\infty)$.
  \end{enumerate}
We thus have this corollary.
\bibliographystyle{IEEEtran}
\bibliography{refer}
\begin{IEEEbiography}[{\includegraphics[width=1in,height=1.25in,clip,keepaspectratio]{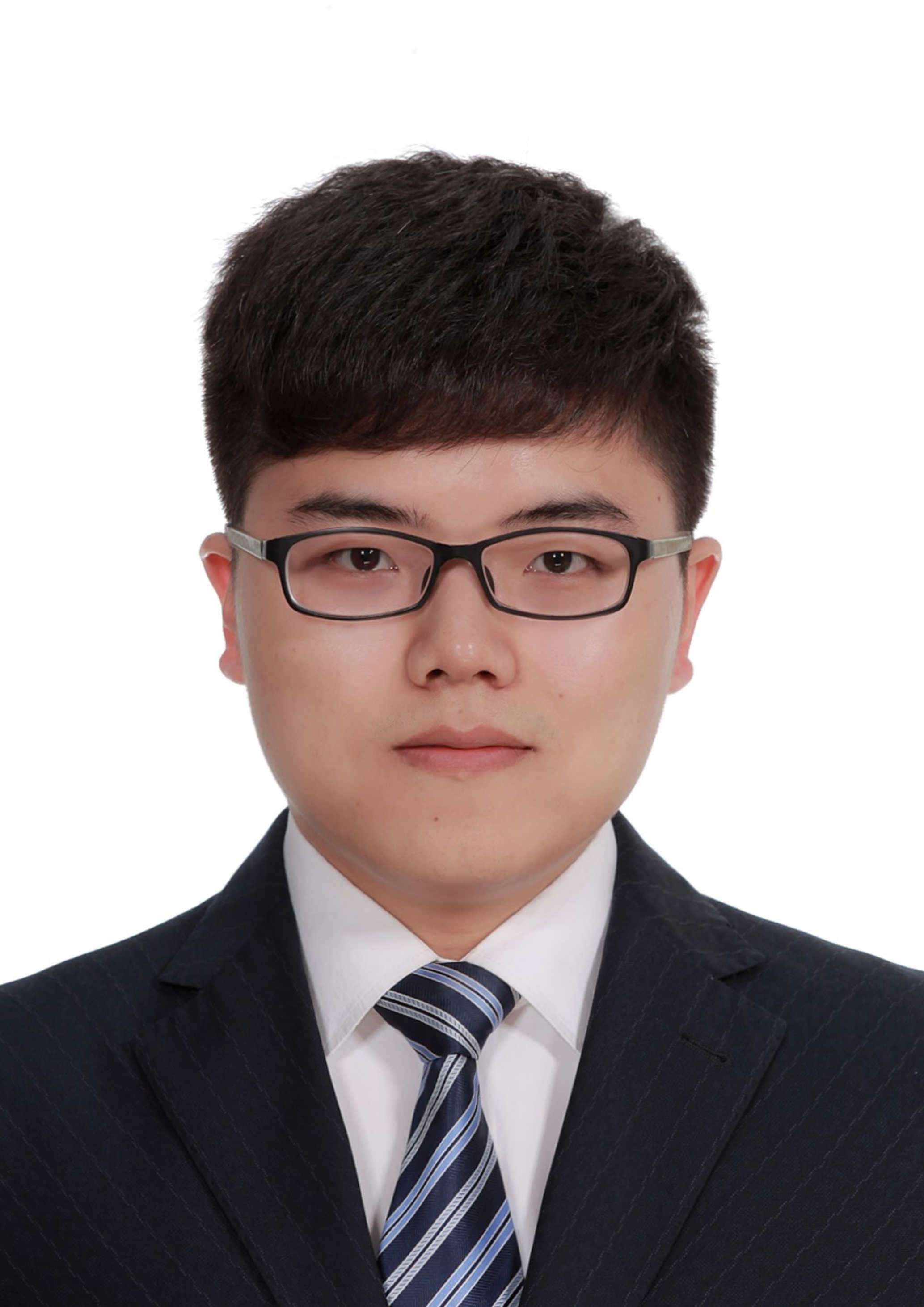}}]{Kuikui Li} received the B.E. degree in communications engineering from the School of Telecommunications Engineering, Xidian University, Xi'an, China, in 2016. He is currently pursuing the Ph.D. degree with the Department of Electric Engineering, Shanghai Jiao Tong University, China. His research interests include cache-enabled heterogeneous networks, mobile edge computing, and cooperative communications.
\end{IEEEbiography}
\begin{IEEEbiography}
[{\includegraphics[width=1in,height=1.25in,clip,keepaspectratio]{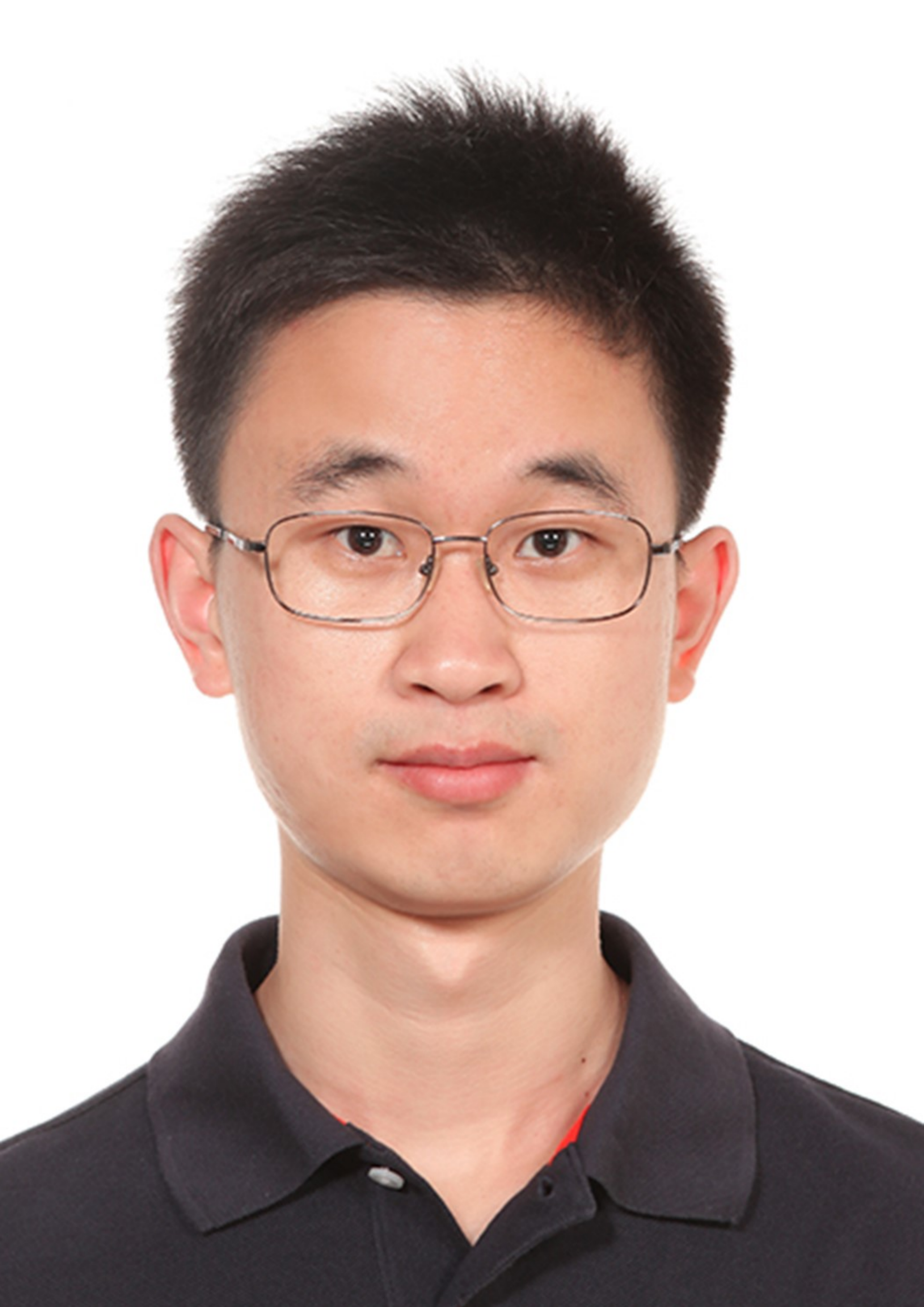}}]{Chenchen Yang}
received the B. Eng. degree in the School of Electronics and Information in 2013 from Northwestern Polytechnical University (NPU), Xi'an, China.  He is currently a Ph.D student in the Institute of Wireless Communications Technology (IWCT), Department of Electronic
Engineering, Shanghai Jiao Tong University (SJTU), Shanghai, China. From 2016 to 2017, he visited the Department of
Electrical Engineering, Columbia University, USA,
as a Research Scholar. His research expertise and interests include caching networks, cooperative communications and mobile edge computing.
\end{IEEEbiography}
\begin{IEEEbiography}
[{\includegraphics[width=1in,height=1.25in,clip,keepaspectratio]{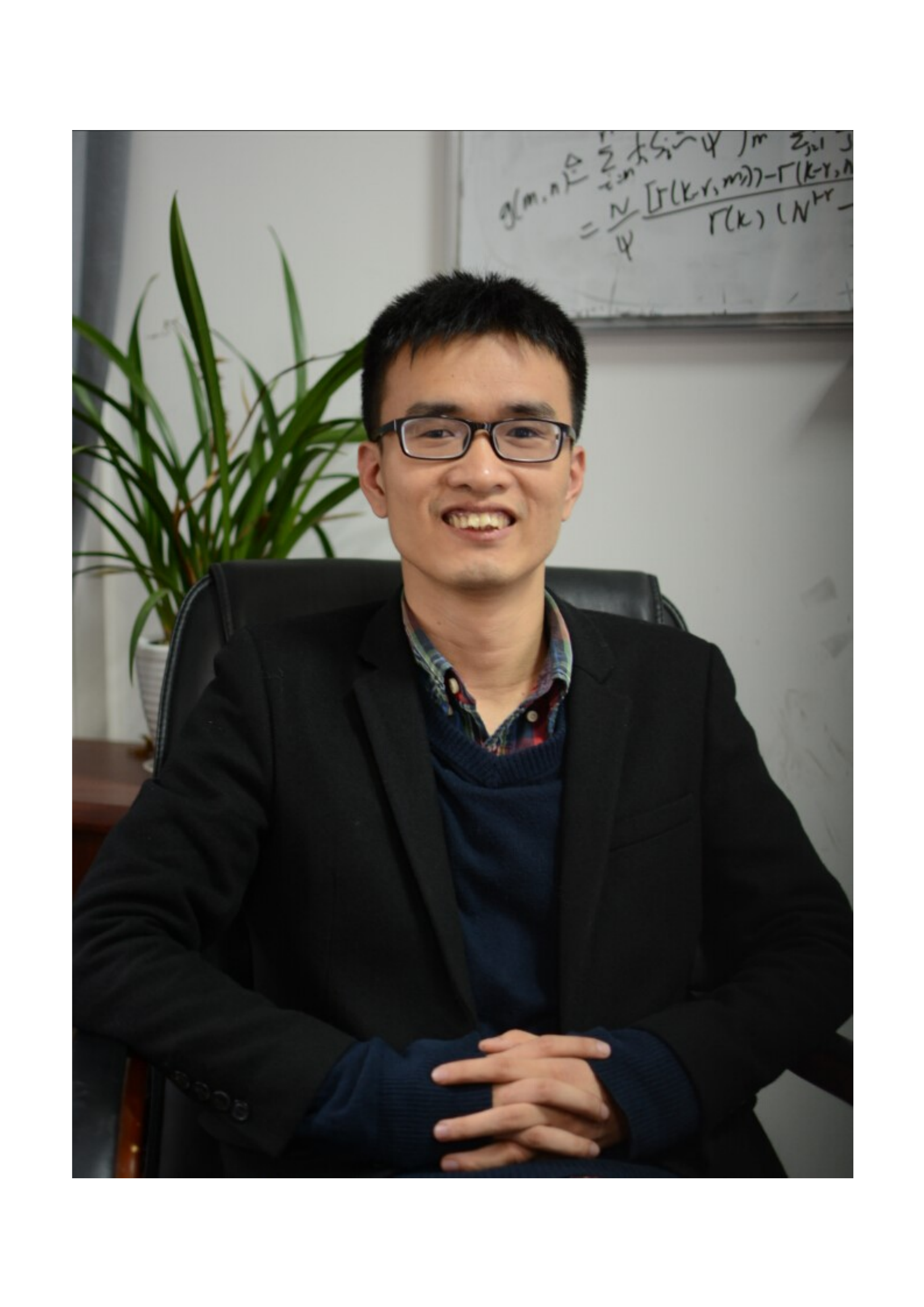}}]{Zhiyong Chen}
received the B.S. degree in electrical engineering from Fuzhou University, Fuzhou, China, and the Ph.D. degree from the School of Information and Communication Engineering, Beijing University of Posts and Telecommunications (BUPT), Beijing, China, in 2011. From 2009 to 2011, he was a visiting Ph.D. Student at the Department of Electronic Engineering, University of Washington, Seattle, USA. He is currently an Associate Professor with the Cooperative Medianet Innovation Center, Shanghai Jiao Tong University (SJTU), Shanghai, China. His research interests include mobile communications-computing-caching (3C) networks, mobile VR/AR delivery and mobile AI systems. He currently serves an Associate Editor of IEEE Access, and served as the Publicity Chair for the IEEE/CIC ICCC 2014 and a TPC member for major international conferences.
\end{IEEEbiography}
\begin{IEEEbiography}
[{\includegraphics[width=1in,height=1.25in,clip,keepaspectratio]{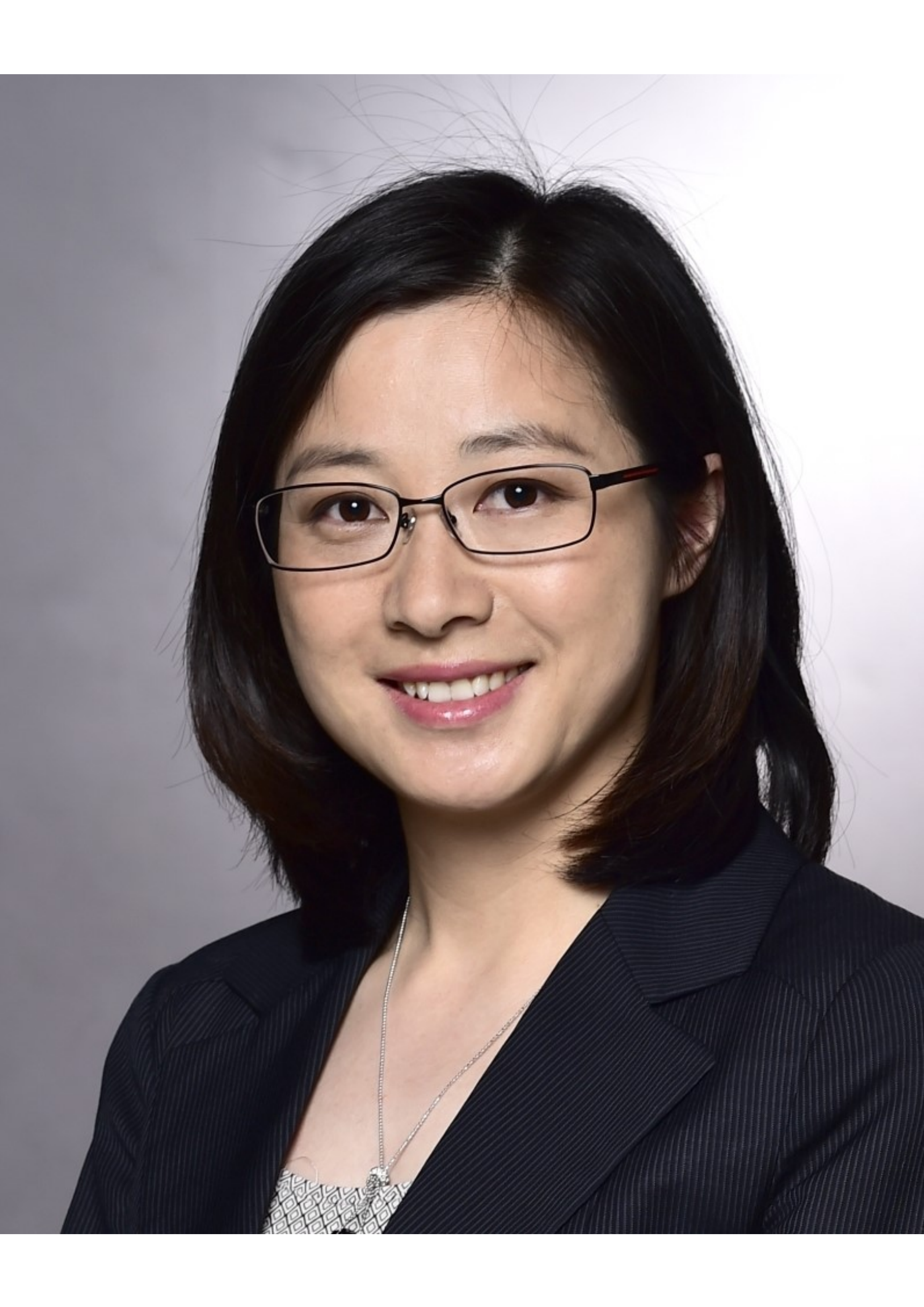}}]{Meixia Tao} (S'00-M'04-SM'10) received the B.S. degree in electronic engineering from Fudan University, Shanghai, China, in 1999, and the Ph.D. degree in electrical and electronic engineering from Hong Kong University of Science and Technology in 2003. She is currently a Professor with the Department of Electronic Engineering, Shanghai Jiao Tong University, China. Prior to that, she was a Member of Professional Staff at Hong Kong Applied Science and Technology Research Institute during 2003-2004, and a Teaching Fellow then an Assistant Professor at the Department of Electrical and Computer Engineering, National University of Singapore from 2004 to 2007. Her current research interests include wireless caching, physical layer multicasting, resource allocation, and interference management.

Dr. Tao currently serves as a member of the Executive Editorial Committee of the \textsc{IEEE Transactions on Wireless Communications} and an Editor for the \textsc{IEEE Transactions on Communications}. She was on the Editorial Board of the \textsc{IEEE Transactions on Wireless Communications} (2007-20110), the \textsc{IEEE Communications Letters} (2009-2012), and the \textsc{IEEE Wireless Communications Letters} (2011-2015). She also served as the TPC chair of IEEE/CIC ICCC 2014 and as Symposium Co-Chair of IEEE ICC 2015.

Dr. Tao is the recipient of the IEEE Heinrich Hertz Award for Best Communications Letters in 2013, the IEEE/CIC International Conference on Communications in China (ICCC) Best Paper Award in 2015, and the International Conference on Wireless Communications and Signal Processing (WCSP) Best Paper Award in 2012. She also receives the IEEE ComSoc Asia-Pacific Outstanding Young Researcher award in 2009.
\end{IEEEbiography}
\end{document}